\documentclass[lettersize,journal]{IEEEtran}
\usepackage{amsthm}
\usepackage{hyperref}
\usepackage[utf8]{inputenc}
\usepackage[english]{babel}
\usepackage{amsfonts}
\usepackage{verbatim}
\usepackage{mathtools, cuted}
\usepackage{adjustbox}
\usepackage{amsmath}
\usepackage{amssymb}
\usepackage{graphicx}
\usepackage{graphics}
\usepackage{tabularx}
\usepackage{color}

\usepackage[caption=false,font=footnotesize]{subfig}

\usepackage{cite}
\usepackage{lipsum}
\usepackage[ruled,linesnumbered,vlined]{algorithm2e}
\usepackage[compact]{titlesec}

\usepackage{physics}

\newcommand*\tageq{\refstepcounter{equation}\tag{\theequation}}

\renewcommand{\footnoterule}{%
  \kern -3pt
  \hrule width 0.5\linewidth
  \kern 2.6pt
}

\usepackage{tikz}

\def\BibTeX{{\rm B\kern-.05em{\sc i\kern-.025em b}\kern-.08em
    T\kern-.1667em\lower.7ex\hbox{E}\kern-.125emX}}

\usepackage{accents}

\newtheorem{remark}{Remark}
\newtheorem{theorem}{Theorem}
\newtheorem{corollary}{Corollary}
\newtheorem{lemma}{Lemma}

\newtheorem{utility}{Utility}

\newcommand{\one}   {{\mathds{1}}}

\newcommand{\R}     {\mathbb{R}} 

\newcommand{\N}     {\mathbb{N}}
 
\renewcommand{\P}   {\mathbb{P}} 
 
\newcommand{\E}     {\mathbb{E}}

\newcommand{\Pcal}   {{\mathcal P }} 
 
\newcommand{\Rcal}   {{\mathcal R }}

\newcommand{\1}   {{\mathbbm 1 }} 
\newcommand{\nn}   {{\nonumber}} 
\newcommand{\half}   {{\frac{1}{2}}}

\renewcommand\thesubfigure{\arabic{subfigure}}

\def\one{\mathbbmss{1}}

\def\a{\alpha}

\def\la{\lambda}

\def\t{{\tau}}

\usepackage{bbm} 

\newenvironment{multiparacolor_red}{\par\color{red}}{\par} 

\newenvironment{multiparacolor_magenta}{\par\color{magenta}}{\par}

\title{Doppler-Shannon Association in Vehicular Networks: A Stochastic Geometry Analysis}

\author{Ashutosh Balakrishnan, François Baccelli, Sanjoy Kumar Jhawar, and Philippe Martins \thanks{A. Balakrishnan and P. Martins are with the Department of Computer Science and Networks, Télécom Paris, Palaiseau, France (email: \{balakrishnan, martins\}@telecom-paris.fr). F. Baccelli is with ENS Paris and Centre Inria de Paris (email: francois.baccelli@inria.fr). S. K. Jhawar is with Centre Inria de Paris and Télécom Paris, France (email: sanjoy-kumar.jhawar@inria.fr, sanjoy.jhawar@telecom-paris.fr).  
} 
}

\begin{document}
\maketitle

\makeatletter
\renewcommand{\thesubfigure}{\alph{subfigure}}
\renewcommand{\p@subfigure}{}
\makeatother

\begin{abstract}
Historically, in cellular networks, the user equipment (UE) association with a base station (BS) has been largely confined to nearest BS association strategies. 
Unlike cellular networks wherein the BSs are static, vehicular networks are prone to high mobility, resulting in the UE experiencing a Doppler shift. The Doppler shift   further results in the fading channel exhibiting high temporal variability. 
In this work, we present a novel stochastic geometry framework, modeling the effects of the Doppler shift on the physical layer network performance through the coherence time. First, the statistics of the Doppler shift are studied in a two- and three-dimensional Poisson point process (PPP) setting. The distribution and the geometric factors influencing the Doppler shift are analyzed. Next, we propose Doppler shift aware utilities, modeling the Doppler shift induced Shannon rate achievable in time-selective fading links. Through these utilities, we analytically study the outage and coverage probabilities for a UE to achieve a given quality of service. These probabilities are computed based on the optimal BS association in vehicular networks, which goes beyond the closest Euclidean distance. In the UE-BS line of sight, the novel BS association policies factor the BS direction of motion and the fading gain, in addition to the Euclidean distance. We show that the nearest BS is no longer the optimal BS association, resulting in non-convex coverage tessellations associated with the BSs. Our analytical and  simulation results illustrate the need for a Doppler shift correction, complementing the ideal SNR based Shannon performance. The network performance achieved through the proposed Doppler-Shannon based association is proved, analytically as well as through simulations, to be superior to the classical Shannon based association achieving spatially averaged gains up to 90\% and 22\% in a 2D and 3D setting, respectively. 

\end{abstract}

\begin{IEEEkeywords}
    Doppler shift, vehicular networks, coherence time, stochastic geometry, user association, coverage probability
\end{IEEEkeywords}


\section{Introduction}\label{secI}
Ubiquitous connectivity is a key theme of the upcoming sixth generation (6G) networks. Vehicular networks, e.g., high altitude platform stations (HAPS) and low Earth orbit (LEO) satellites are envisioned to be a key enabler of ubiquitous connectivity, providing coverage in areas devoid of terrestrial infrastructure \cite{v2v-csi}. 6G is also expected to support high mobility applications like high speed railways. Broadly, vehicular networks can be broadly defined as any network wherein the network nodes are mobile, rather than static. 
This mobility of nodes results in the user equipment (UE)  experiencing Doppler shift \cite{Doppler-1, Doppler-2}. Vehicular nodes like UAVs have speeds of around 50 m/s, while high speed trains and LEO satellites have much higher speeds up to 150 m/s and 2000 m/s, respectively. 

%
%
The Doppler shift in the frequency domain results in slow/ fast fading of the wireless channel in the time domain \cite{Doppler-poles}. Hence, high Doppler scenarios involve rapid variations of the wireless channel, which impact the channel statistics in time. This temporal variability of the fading channel is termed as \textit{time selectivity}, and should be factored in vehicular system design \cite{jsac}. 
In this work, we aim to design Doppler aware association schemes in large communication networks and we use  stochastic geometry to further evaluate the system performance incorporating different mobility scenarios.

In this work, the network performance  is evaluated using stochastic geometry tools. Through such tools, we have been able to determine the spatial statistics of the Doppler shift, in terms of {\em ensemble average} of the system, without the need for observing the system for a long period of time. The spatial \textit{ergodicity} of the underlying stochastic system is   formalized using random locations of the mobile nodes as a Poisson point process (PPP), in addition to i.i.d random directions of motion (DoM) of these points. The metrics of interest are designed based on functions of this geometric randomness, for instance, the angle formed by the line of sight (LoS) and the DoM and subsequently the coherence time which influences the Doppler shift induced Shannon rate. 

Classical studies on UE association and handovers have been based on 
 functions of the Euclidean distance, e.g., signal-to-noise-ratio (SNR), the Shannon rate. We term this classical association as the \textit{Shannon association policy}. In this work, we propose spatial averages factoring in the UE-BS geometry formed by the  directionality of the base station (BS), beyond the Shannon association geometry. We term it as the \textit{Doppler-Shannon association policy.} We believe that future communication system design (including but not limited to  UE-BS association, coverage tessellations, and handovers) in such vehicular wireless networks need to be Doppler shift aware. 
To the best of our knowledge, this is the first work to unify the directionality of the mobile BS, its speed, the Euclidean LoS from UE, and the fading gain of the channel in system-level design and performance evaluation of vehicular networks.

\subsection{Related works and Motivation}

From a stochastic geometry perspective, the authors in \cite{dhillon-2018} derive the coverage probability in vehicular networks, modeling them as a Poisson Line Process (PLP). Vehicular network performance under platooning is studied in \cite{gupta-dhillon-2023}. The distribution of loads served by a BS is derived by modeling roads as a PLP and the UEs as a Mat\'{e}rn cluster process. The authors in \cite{haenggi-2021} propose a transdimensional approach, superpositioning one- and two-dimensional PPPs. This approach is observed to provide better approximations as compared to a PLP based network modeling. 
A generalized framework for the constellation of satellites and its performance evaluation is studied in \cite{choi2024novel} using Poisson-Cox point process and in \cite{chang-sik-fb24} for the multi-altitude setting. The work in \cite{khan2020stochastic} presents a stochastic geometry framework to study the Doppler shift experienced by a user in a cluster. A recent work in \cite{al2024doppler} presents a stochastic geometry framework to analyze Doppler shift in  mega LEO satellite constellations. 

From a signal processing perspective, the authors in \cite{initial-access-doppler} study initial access protocol design in the presence of high Doppler shift. In a multi-path scenario, a Doppler shift estimation framework was proposed in  \cite{Doppler-est}. The frameworks in \cite{ali1999predicting, li2002analytical, ali2002doppler, vehicular-phy, papapetrou2005analytic} study LEO satellite networks and propose protocols for computing Doppler shift and use it for handoffs, and other network applications.

All the recent stochastic geometry based vehicular network models in \cite{dhillon-2018, gupta-dhillon-2023, haenggi-2021, dhillon-2024} have largely ignored the fact that, in vehicular networks, the base stations (BSs) (modeled through a point process) are in motion. The  frameworks in \cite{choi2024novel}, \cite{talgat2020stochastic}    compensate the Doppler shift by exploiting the available information on vehicular trajectory, or compensate the effect of the Doppler shift  through the channel fading gain. We feel that considering the effect of the Doppler shift through fading gain might be too simplistic, as it ignores the inherent stochastic geometry influencing the Doppler shift.   Further, the UE-BS association strategies have been largely confined to optimum SNR  reception strategies, which are based on Euclidean distances. The work in  \cite{dhillon-2024} goes beyond Euclidean distance by considering the angular distance as well, but the work is confined to a beam managament strategy from a BS perspective.



Vehicular communication channels, unlike terrestrial channels, 
are prone to highly mobile and dynamic environments. Hence, they exhibit temporal variability and non-stationary channel statistics \cite{vehicular-phy}.  These vehicular radio channels are both time as well as frequency selective \cite{v2v-fading}; hence there is a need for a unified framework capturing the effect of node mobility, node density, and the physical layer network performance. 

In this work, we devise a novel stochastic geometry framework for studying  the effect of vehicular motion through Doppler shift, and further capturing it in the physical layer system network performance through the  coherence time. \textit{A key concept which we aim to convey through this work is the fact that the classical  Shannon analysis is not enough to gauge the performance of vehicular networks.} For this reason, we propose a novel Doppler-Shannon association framework capturing the Doppler-induced Shannon achievable network performance, thereby shaping up the stochastic geometry version of  the SNR based Shannon theory to vehicular networks.   



The proposed Doppler-Shannon association is based on   utilities, which are designed as functions of the geometric angle formed between the LoS and the DoM of the BS from the UE, the wireless fading gain, and the Euclidean UE-BS distance.  
%
Based on these utilities, optimal BS association for the UE is performed, and we show the need to go beyond the traditional nearest BS association policy.
These novel Doppler-Shannon association frameworks are shown to result in non-convex coverage tessellations, replacing the nearest BS association policy which is  no longer optimal. 

The network performance is evaluated by modeling the BSs as a PPP. We utilize stochastic geometry tools to evaluate the performance of the utilities in a stochastic network, averaging the performance metric (outage or coverage   probability) over the geometric angle between UE LoS and BS DoM, in addition to the fading gain and Euclidean distance. Our simulation results confirm the need for a Doppler correction in vehicular networks to complement the classical Shannon association. 
We summarize the key contributions below.

\subsection{Contributions}

\begin{enumerate}
    \item First, in the two-dimensional (2D) and three-dimensional (3D) setting, we characterize the statistics of the Doppler shift experienced at a static UE receiver, from a mobile BS. The Doppler shift is characterized by statistically analyzing the radial component of the geometric angle between the LoS and DoM, along the LoS. 

    \item In the 2D setting, the Doppler shift  distribution is shown to be following an arc-sine  distribution, while in the 3D setting the Doppler shift is uniformly distributed. It is proved that the Doppler shift experienced by a static UE is independent of the DoM of the mobile BS. The rotation invariance of the statistical analysis is showcased for both 2D and 3D settings. 
    


    \item Next, we study the effect of Doppler shift on the network performance, by capturing the  Doppler shift induced Shannon rate achievable on a time selective wireless fading channel. 
    To capture the impact of Doppler spread on the Shannon capacity, we propose two novel utility functions  to analytically model: (a) the data transmitted per coherence time interval and (b) the  data rate.
    These proposed utilities  are  functions of the geometrical angles (capturing the Doppler shift and therein the coherence time), in addition to the fading gain, and the Euclidean distance between the UE and a spatial random BS. 
    \item Based on these Doppler aware utilities, we propose a \textit{Doppler-Shannon association} framework for the UE in large vehicular networks. It is shown that, due to the mobility of BSs, the nearest BS may no longer be the optimal.
    Rather, the association is performed by jointly capturing the geometric quantities: (a) the relative angle of motion of BS with respect to the UE termed as the ``Doppler angle'', (b) the Euclidean distance, and  (c) the channel fading gain.
    \item With mobile BS locations modeled as a PPP, stochastic geometry tools are used to compute the outage and success  probability given the user QoS. The UE QoS is further averaged, to visualize the effect of the density of points and velocity, on the average achievable network QoS. The coverage tessellations are observed to be more and more non-convex when increasing the Doppler shift. 

    \item Our results demonstrate that, unlike static cellular networks where classical Shannon performance evaluation is  sufficient, a Doppler correction is needed in vehicular networks to complement the Shannon performance.
    The Doppler-Shannon association  framework is shown to provide up to 90\% and 22\% gains, respectively, in the 2D and 3D PPP settings at a path-loss exponent (PLE) of $2.5$. In a more cluttered environment (PLE 4), we obtain  up to 14\% gains in a 2D setting, and up to 16\% gains in a 3D setting over the Shannon association framework.
    
    
    
\end{enumerate}





\section{System Model} \label{secII}

The cellular network model consists of mobile BSs modeled as a homogeneous PPP $\Pcal:=\sum_{i}\delta_{X_i}$ of intensity $\lambda$ on $\R^d$. Here, $\delta_{X_i}$ denotes a point mass at location $X_i$. In this work, we  analyze the system in the two- and three-dimensional scenarios, i.e., $d = 2, 3$. The system also comprises of a static UE, assumed to be located at the origin.

A BS randomly located at point $X_i$ of $\Pcal$, is assumed to have an i.i.d. DoM vector \underbar{$v$}  with  a constant norm $v$. 
%
%
%
%
%
The LoS vector from the UE to the BS is denoted by \underbar{$r$}, with norm $r$. The parameters characterizing the LoS and DoM depend on the level of abstraction, and will be discussed in detail for the  2D and 3D settings in the upcoming sub-sections. The BSs are also assumed to be transmitting radio signals of wavelength $\rho$. 
In general, the Doppler shift experienced at the UE is 
    \begin{equation} \label{doppler1}
        D = \frac{\underbar{$v$}_r}{\rho},
    \end{equation}
where $\underbar{$v$}_r$ denotes the radial component of velocity along the LoS. Let $\Gamma$ denote the relative angle of motion between the LoS and DoM vectors, influencing the Doppler shift (see Fig. \ref{fig:1} for a 2D setting). Then the radial component of velocity $\underbar{$v$}_r$ comes from the cosine component of the DoM, i.e.,  $\underbar{$v$}_r = v \cos \Gamma$.  Therefore, the Doppler shift experienced is further expressed as 
\begin{equation}\label{doppler2}
    D (\Gamma) = \frac{v}{\rho} \cos \Gamma.
\end{equation}
The angle $\Gamma$, between the LoS and DoM, influencing the Doppler shift will henceforth be termed as \textit{the Doppler angle}. The cosine component of the Doppler angle $\Gamma$,  made along the UE-BS LoS, is one of the geometric stochastic quantities of interest. 

In vehicular networks, the mobility of the BSs results in the wireless channels being specifically prone to time selectivity. In time, the wireless channel fading is classified as slow or fast fading, depending on the rate of change of channel statistics. In particular, a wireless time-selective channel is said to be statistically  ``flat''  within a \textit{coherence time} interval $T_c$. The coherence time is directly related to the Doppler spread (and hence the Doppler shift in the frequency domain) as
\begin{equation}
T_c (\Gamma) = \frac{\mu}{|D|} = \frac{\mu \rho}{v |\cos \Gamma|}.\nn
\end{equation}
Here, $\mu$ is a constant and $|D|$ is the magnitude of the Doppler shift. Higher mobility of the nodes may result in higher Doppler shift, depending on the relative Doppler angle, thereby reducing the coherence time.  
The Doppler  angle $\Gamma$  between the 
LoS and DoM
%
%
is no longer constant, but dynamically varying in time. Hence, the Doppler shift and thereby the coherence time interval at the UE from a random spatial BS is now random and varies over time, depending on spatial as well as the directional information about the BS. 

Below we derive the statistics of the Doppler angles for a 2D and 3D setting. 


\begin{figure}
    \centering
    \includegraphics[width=0.4\textwidth]{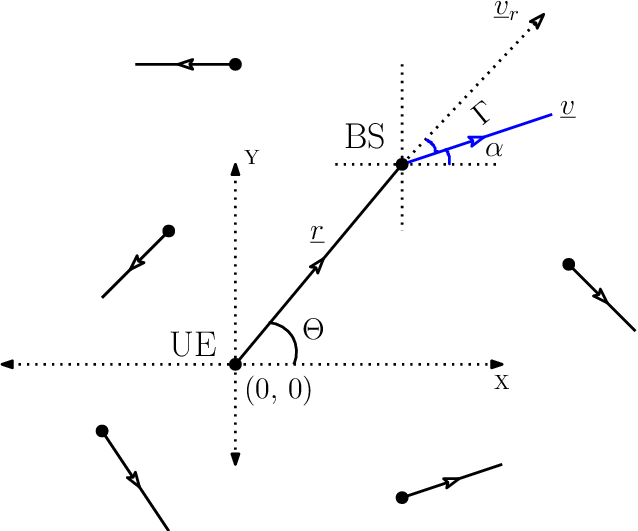}
    \caption{2D geometry of the vehicular network consisting of a  static UE receiver and mobile BSs.}
    \label{fig:1}
\end{figure}

\subsection{2D statistical study}

In a 2D scenario, for a BS located at a point $X$ of the  homogeneous PPP $\Pcal$, the LoS vector \underbar{$r$} between a typical UE and a BS is parameterized through the angle $\Theta \stackrel{d}{\sim} \mbox{U}[0, 2\pi)$ which it makes with the $x$-axis (Fig.~\ref{fig:1}). The $x$ and $y$ Cartesian coordinates of \underbar{$r$} are then $\underbar{$r$} {=}[r\cos\Theta\; r\sin\Theta]^T$.
Similarly, let the DoM of the BS with respect to the $x$-axis be $\alpha \stackrel{d}{\sim} \mbox{U}[0, 2\pi)$, having  Cartesian coordinates, $\underbar{$v$} {=} [v\cos \alpha\; v \sin \alpha]^T$. It may be noted that our analysis is based on determining the Doppler shift measured at a typical UE from a generic BS. 


The corresponding Doppler angle $\Gamma$ is computed using an inner product as 
\begin{equation}\label{2d-gamma}
\begin{aligned}
\Gamma {=} \cos^{-1}\left( \frac{ \langle\underbar{$r$}, \ \underbar{$v$}\rangle }{rv} \right) 
{=} \cos^{-1}( \cos(\Theta {-} \alpha)) {=} (\Theta {-} \alpha)\bmod 2\pi.
\end{aligned}
\end{equation}
In a 2D scenario, the resulting Doppler shift  experienced at the UE is 
\begin{equation}\label{doppler-2d}
    D = \frac{v}{\rho}\cos( \Theta - \alpha).
\end{equation}

\begin{theorem}\label{theorem1}
Assume that $\Theta$ and $\alpha$ are independent, such that $\Theta$ is uniform on $[0, 2\pi)$ and $\alpha$ has a general density $f_{\alpha}$. Then, in a 2D scenario, the Doppler angle $\Gamma$ and the Doppler shift are distributed as 
    
%
%
%
\begin{equation}\label{2d-Doppler}
\begin{aligned}
    \Gamma &{=} (\Theta-\alpha) \bmod 2\pi \stackrel{d}{\sim} \textnormal{U}[0,2\pi),
    \\
    f_{D}(x) &{=} \frac{1}{\pi} \left(\left(v/{\rho}\right)^2  {-} x^2 \right)^{-1/2}, \;\; \forall \ x \in \left[ -v/{\rho}, v/{\rho} \right]. 
\end{aligned}
\end{equation} 

\end{theorem}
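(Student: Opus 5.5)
The plan is to prove the two claims in sequence: first the uniformity of $\Gamma$, by conditioning on $\alpha$ and using the translation invariance of the uniform law on the circle; then the density of $D$, by a change of variables applied to $\cos\Gamma$.

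\textbf{Step 1: uniformity of $\Gamma$.} Identify $[0,2\pi)$ with the circle group $\R/2\pi\mathbb{Z}$, where the uniform law is the normalized Haar measure. For a Borel set $B\subseteq[0,2\pi)$, condition on $\alpha$ using independence:
\begin{equation*}
\P(\Gamma\in B)=\int_0^{2\pi}\P\big((\Theta-a)\bmod 2\pi\in B\big)\,f_{\alpha}(a)\,da .
\end{equation*}
For each fixed $a$, the map $\theta\mapsto(\theta-a)\bmod 2\pi$ is a rotation of the circle and preserves Lebesgue measure. Hence the inner probability equals $|B|/(2\pi)$ for every $a$, so $\P(\Gamma\in B)=|B|/(2\pi)$ regardless of $f_\alpha$. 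Equivalently, one can compute the density of $\Gamma$ as the circular convolution $f_\Gamma(g)=\int \frac{1}{2\pi}f_\alpha(a)\,da=\frac{1}{2\pi}$. This step also shows that $\Gamma$ is independent of $\alpha$, which is the "independent of the DoM" remark. The only care needed is the mod-$2\pi$ wrap-around, which the group viewpoint handles.

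\textbf{Step 2: law of $D$.} By (\ref{doppler-2d}), $D=(v/\rho)\cos\Gamma$ with $\Gamma\sim\mathrm{U}[0,2\pi)$. I will compute the CDF of $C=\cos\Gamma$. For $c\in[-1,1]$, the set $\{\gamma:\cos\gamma\le c\}$ is the arc $[\arccos c,\,2\pi-\arccos c]$, of length $2\pi-2\arccos c$. Therefore
\begin{equation*}
F_C(c)=1-\frac{\arccos c}{\pi},
\end{equation*}
and differentiating gives $f_C(c)=\frac{1}{\pi\sqrt{1-c^2}}$, the arcsine law. Scaling by $v/\rho$ then yields
\begin{equation*}
f_D(x)=\frac{\rho}{v}\,f_C\!\left(\frac{x\rho}{v}\right)=\frac{1}{\pi}\big((v/\rho)^2-x^2\big)^{-1/2}
\end{equation*}
on $[-v/\rho,v/\rho]$, as claimed in (\ref{2d-Doppler}).

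\textbf{Main obstacle.} The main obstacle is merely formal: making the "mod $2\pi$" step rigorous without assuming $\alpha$ is uniform. Working directly with the CDF in Step 2 avoids needing a two-branch Jacobian sum, although that sum gives the same result.
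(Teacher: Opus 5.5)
Your proof is correct and follows essentially the same route as the paper. Appendix A obtains uniformity of $\Gamma$ from the circular convolution $\int_0^{2\pi} f_\Theta((\gamma+a) \bmod 2\pi)\, f_\alpha(a)\,da = 1/(2\pi)$, which is your translation-invariance argument in density form, and then gets the arcsine law by a change of variables on the cosine of a uniform angle. The only difference is cosmetic: the paper uses the two-branch Jacobian $2 f_\Theta(\arccos(x\rho/v))\,|d\theta/dx|$ where you compute the CDF first, and its CDF $\frac{1}{\pi}\arcsin(x\rho/v)+\frac12$ agrees with your $1-\arccos(c)/\pi$; your version also avoids the paper's stray and incorrect remark that $\cos\Theta$ is uniform on $[-1,1]$.
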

%

%
%

\begin{proof}
    See Appendix A  
    for the proof.
\end{proof}
The PDF in \eqref{2d-Doppler}  corresponds to the Beta or Arcsine distribution, with parameter $\frac{v}{\rho}$, which is the magnitude of maximum Doppler shift.

Next, we discuss the statistics of the coherence time. Studying coherence time is crucial as it maps the Doppler shift (in the frequency domain) to the physical layer and to the time domain.  
%
%
%
The PDF of the coherence time  $T_c$, in a 2D setting, is provided below and the proof can be found in Appendix B. 
%
\begin{lemma}
The probability density of $T_c$ is given by
\begin{align}
    f_{T_c}(y)  {=} \frac{2}{\pi y} \!\!\left(\left(v y/\rho \mu \right)^2 \!\!{-} 1\right)^{-\half}, \forall \ y \in \left[\mu \rho/v, \infty \right).
    \label{eq:Tc-pdf}
\end{align}
\label{lemma:tc-pdf}
\end{lemma}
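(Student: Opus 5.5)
We obtain the law of $T_c$ from Theorem~\ref{theorem1} by a change of variables. Write $T_c = \mu/|D|$, where $D$ has the arcsine density \eqref{2d-Doppler} on $[-v/\rho, v/\rho]$. We first find the law of $|D|$, then push it through the decreasing map $u \mapsto \mu/u$ on $(0, v/\rho]$.

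\textbf{Step 1: the law of $|D|$.} The arcsine density is even in $x$, so $|D|$ has density $2 f_D(u)$ for $u \in (0, v/\rho]$, i.e.
\begin{equation*}
f_{|D|}(u) = \frac{2}{\pi}\left(\left(v/\rho\right)^2 - u^2\right)^{-1/2}.
\end{equation*}
Equivalently, since $\Gamma$ is uniform on $[0,2\pi)$ by Theorem~\ref{theorem1}, $|\cos\Gamma|$ has the same law as $\cos U$ with $U$ uniform on $[0,\pi/2]$. This gives a cross-check.

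\textbf{Step 2: the map $y = \mu/u$.} The range $u \in (0, v/\rho]$ is mapped onto $y \in [\mu\rho/v, \infty)$. The event $u=0$ has probability zero, so $T_c=\infty$ can be ignored. The Jacobian is $|du/dy| = \mu/y^2$, hence
\begin{equation*}
f_{T_c}(y) = f_{|D|}(\mu/y)\,\frac{\mu}{y^2} = \frac{2}{\pi}\,\frac{\mu}{y^2}\left(\left(v/\rho\right)^2 - \mu^2/y^2\right)^{-1/2}.
\end{equation*}

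\textbf{Step 3: simplification.} Factor $\mu/y$ out of the square root:
\begin{equation*}
\left(\left(v/\rho\right)^2 - \mu^2/y^2\right)^{-1/2} = \frac{y}{\mu}\left(\left(vy/\rho\mu\right)^2 - 1\right)^{-1/2}.
\end{equation*}
Substituting this into Step 2 gives
\begin{equation*}
f_{T_c}(y) = \frac{2}{\pi y}\left(\left(vy/\rho\mu\right)^2 - 1\right)^{-1/2},
\end{equation*}
which is \eqref{eq:Tc-pdf}.

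\textbf{Sanity check.} We confirm the density integrates to one by substituting $s = vy/(\rho\mu)$. This gives
\begin{equation*}
\int_1^\infty \frac{2}{\pi s\sqrt{s^2-1}}\,ds = \frac{2}{\pi}\cdot\frac{\pi}{2} = 1,
\end{equation*}
using $\int_1^\infty \frac{ds}{s\sqrt{s^2-1}} = \operatorname{arcsec}(\infty) - \operatorname{arcsec}(1) = \pi/2$.

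There is no real obstacle. The only points needing care are the factor $2$ from folding the symmetric density onto $|D|$ and the algebra in Step 3.
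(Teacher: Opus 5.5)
Your proposal is correct and follows essentially the same route as the paper: obtain the law of $|D|$, then apply the change of variables $w=\mu/y$ with Jacobian $\mu/y^2$. The only difference is that the paper first computes the CDF $F_{|D|}(w)=F_D(w)-F_D(-w)=\frac{2}{\pi}\sin^{-1}(w\rho/v)$ and differentiates it, whereas you fold the even arcsine density directly. Your normalization check is a harmless addition.
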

It is important to note, from the relation $T_c = \frac{\mu}{|D|}$ that, when the Doppler shift approaches zero, the coherence time tends to infinity. We will discuss the consequence of infinite coherence time on system design in the next section.


\subsection{3D statistical study} 

\begin{figure}
    \centering
    \includegraphics[width=0.45\textwidth]{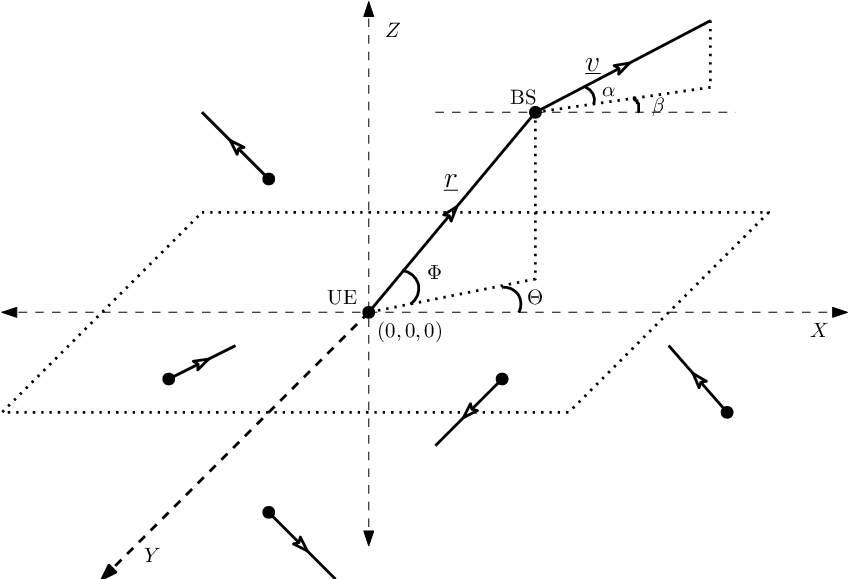}
    \caption{3D geometry of the vehicular network consisting of a  static UE receiver and mobile BS. }
    \label{fig:2}
\end{figure}

Consider a similar system model in 3D, with the BS locations modeled as homogeneous PPP $\Pcal:=\sum_{i\in \N} \delta_{X_i}$ on $\R^3$. 
%
%
As shown in Fig. \ref{fig:2}, for a BS located at a point $X$ of $\Pcal$, the LoS vector \underbar{$r$} is parameterized by an elevation angle $\Phi$ and an azimuth angle $\Theta$. 
Similarly, the DoM can be parameterized with an elevation angle $\alpha $ and azimuth $\beta$.  This 3D parameterization of the LoS and DoM vectors is represented in spherical coordinates as, $\underbar{$r$} {=} r[\cos \Phi  \cos\Theta\; \cos \Phi  \sin \Theta\; \sin \Phi]^T$ and $\underbar{$v$} {=} [\cos \alpha  \cos \beta\; \cos \alpha  \sin \beta\; \sin \alpha]^T$. The angles are measured from the $x-y$  plane and are considered independent. 
The cosine of the Doppler angle $\Gamma$ is then computed using inner product as 
\begin{equation}\label{3d-gamma}
\begin{aligned}
\cos\Gamma &{=} \frac{\langle\underbar{$r$}, \ \underbar{$v$}\rangle}{r v}
{=}  \cos \Phi \cos \alpha \cos(\Theta {-} \beta) +  \sin \Phi \sin \alpha. \\
\end{aligned}
\end{equation}

It may be noted that $\Phi \ \text{and} \ \alpha \in [-\frac{\pi}{2}, \frac{\pi}{2}]$ are polar (elevation) angles measured from the $x{-}y
$ plane whereas $\Theta \ \text{and} \ \beta \in [0, 2\pi)$ are the azimuth angles. 
The corresponding Doppler shift experienced at the static UE is 
    \begin{equation}
         D {=} \frac{v}{\rho} \cos \Gamma 
         {=} \frac{v}{\rho} (\cos \Phi \cos \alpha \cos(\Theta {-} \beta) {+}  \sin \Phi \sin \alpha).\nn
    \end{equation}

\begin{figure}
    \centering
    \includegraphics[width=0.4\textwidth]{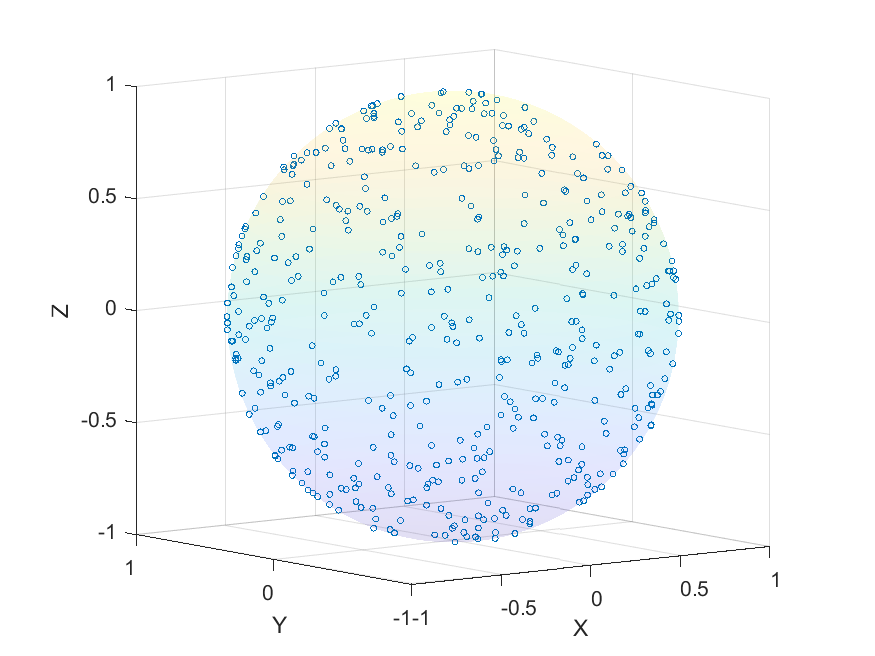}
    \caption{Illustration of 3D sampling of BS locations  for a given LoS norm $r$. Note that  for a given spherical radius $r$, the elevation angle $\Phi \stackrel{d}{\sim} \frac{1}{2} \cos \phi$ and the azimuth angle $\Theta \stackrel{d}{\sim} U[0, 2\pi)$ ensures the uniformity of BS locations in 3D space, in addition to Poissonianity.}
    \label{fig:3d-points}
\end{figure}

In principle, the Doppler angle $\Gamma\equiv \Gamma(\Theta,\Phi,\alpha,\beta)$ is a function of four angles $\Theta, \Phi, \alpha,$ and $\beta$.
It is notable that, in a 3D scenario, the distribution of the independent azimuthal angles $\Theta \ \text{and} \ \beta$ can be assumed to be uniform, i.e., $\Theta, \beta \stackrel{d}{\sim} \mbox{U}[0, 2\pi)$. In contrast, the distributions  of the polar angles $\Phi \ \text{and} \ \alpha$ are non-uniform in a spherical 3D setting, in order to ensure that for a given spherical radius the locations of BSs are uniformly distributed on the 3D space~\cite{Keeler} and \cite{pitman2012archimedes} (see Fig. \ref{fig:3d-points}). The distribution is provided below. 
\begin{lemma}\label{lemma-1}
The PDF of the elevation angles $\Phi, \alpha$ are 
\begin{align}
f_{\Phi}(\phi)=\frac{1}{2}\cos \phi \mbox{ and } f_{\alpha}(a)=\frac{1}{2}\cos a,
\label{phi-alpga}
\end{align}
for $\phi, a\in [-\pi/2,\pi/2]$, respectively.
\end{lemma}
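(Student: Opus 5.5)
The plan is to derive the elevation density from the requirement that, for a fixed LoS norm $r$, the BS location is uniform on the sphere of radius $r$ (equivalently, that $\underline{r}/r$ is uniform on the unit sphere $S^2$). This holds because the homogeneous PPP $\Pcal$ on $\R^3$ is isotropic: conditioned on its norm, a point of $\Pcal$ has a rotation-invariant direction. For the DoM we adopt the same modelling assumption, namely that $\underline{v}/v$ is isotropic on $S^2$. The identical argument then gives $f_\alpha$, so it suffices to treat $\Phi$.

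The key step is the change of variables from Cartesian to the spherical coordinates used in the paper. With
\begin{equation}
\underline{r} = r[\cos\Phi\cos\Theta\;\; \cos\Phi\sin\Theta\;\; \sin\Phi]^T,\nn
\end{equation}
where the elevation is measured from the $x$-$y$ plane, the Jacobian is $r^2\cos\phi$. Hence the surface element on $S^2$ is $\cos\phi\, d\phi\, d\theta$. The uniform measure on $S^2$ has total mass $4\pi$, so the joint density of $(\Phi,\Theta)$ is
\begin{equation}
f_{\Phi,\Theta}(\phi,\theta) = \frac{\cos\phi}{4\pi}, \quad \phi\in[-\pi/2,\pi/2],\ \theta\in[0,2\pi).\nn
\end{equation}
This factorizes, which confirms that $\Theta$ is $\mathrm{U}[0,2\pi)$ and independent of $\Phi$. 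Integrating out $\theta$ gives $f_\Phi(\phi)=\frac{1}{2}\cos\phi$. As a sanity check, $\int_{-\pi/2}^{\pi/2}\frac{1}{2}\cos\phi\,d\phi=1$.

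An alternative route, which I would mention as a cross-check, is Archimedes' hat-box theorem. For a uniform point on $S^2$, the height $\sin\Phi$ is $\mathrm{U}[-1,1]$. Then
\begin{equation}
\P(\Phi\le\phi)=\frac{1+\sin\phi}{2},\nn
\end{equation}
and differentiating yields $\frac{1}{2}\cos\phi$.

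I expect no real obstacle in the computation itself. The only delicate point is justifying the premise: the directional uniformity for $\Phi$ comes from the isotropy of the PPP, whereas for $\alpha$ it is an isotropic-DoM assumption that should be stated explicitly. A secondary detail is to keep the elevation convention straight, so that the factor is $\cos\phi$ rather than the $\sin\phi$ that would appear for a colatitude.
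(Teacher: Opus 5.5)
Your proposal is correct and is essentially the paper's proof. The paper's appendix argument is exactly your cross-check: it uses Pitman's construction $\underline{r}=r[\sqrt{1-V^2}\cos\Theta,\ \sqrt{1-V^2}\sin\Theta,\ V]^T$ with $V=\sin\Phi\sim\mathrm{U}[-1,1]$ to get $\P(\Phi\le\phi)=\frac{1}{2}(1+\sin\phi)$, and your primary Jacobian computation is an equivalent restatement that also shows $\Theta$ is uniform and independent of $\Phi$.

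Your remark that the claim for $\alpha$ needs an explicit isotropic-DoM assumption is well taken. The paper argues only for $\Phi$ and applies the same density to $\alpha$ without comment.
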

\begin{proof}
      See Appendix C. 
\end{proof}

%

%
\begin{theorem}\label{theorem-3}
    In a generalized 3D scenario, assuming independent angles, the radial component of the velocity vector with respect to the Doppler angle $\Gamma$, and the corresponding Doppler shift are distributed as
    \begin{equation}
        \cos\Gamma\stackrel{d}{\sim}\textnormal{U}[-1, 1] \mbox{ and }  D\stackrel{d}{\sim} \textnormal{U}\left[-v/{\rho},v/{\rho} \right].\nn
    \end{equation}
\end{theorem}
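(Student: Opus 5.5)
The plan is to first reduce the problem to a single random unit vector, then invoke Archimedes' hat-box theorem. By Lemma~\ref{lemma-1}, the elevation angle $\Phi$ has density $\frac12\cos\phi$ on $[-\pi/2,\pi/2]$, and the azimuth $\Theta$ is uniform on $[0,2\pi)$ and independent of $\Phi$. Hence the LoS direction $\underbar{$r$}/r=[\cos\Phi\cos\Theta\;\cos\Phi\sin\Theta\;\sin\Phi]^T$ is uniformly distributed on the unit sphere $S^2$. To see this, note that the Jacobian of the spherical parameterization is $\cos\phi\, d\phi\, d\theta$, and normalizing by the surface area $4\pi$ gives exactly the product density $\frac{1}{2\pi}\cdot\frac12\cos\phi$. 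The same argument shows the DoM direction $\underbar{$v$}/v$ is uniform on $S^2$, and the two directions are independent.

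Next I will condition on the DoM. Fix $\underbar{$v$}/v=u$, an arbitrary unit vector. Then $\cos\Gamma=\langle \underbar{$r$}/r, u\rangle$. The uniform measure on $S^2$ is invariant under rotations, so I can choose a rotation $R$ with $Ru=e_3$. Then $\langle \underbar{$r$}/r,u\rangle=\langle R\underbar{$r$}/r, e_3\rangle$, and this has the same law as $\langle \underbar{$r$}/r,e_3\rangle=\sin\Phi$. The point of this step is that the conditional law of $\cos\Gamma$ given $(\alpha,\beta)$ does not depend on the DoM. This gives both the claimed independence from the DoM and the unconditional law after averaging over $(\alpha,\beta)$.

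It remains to compute the law of $\sin\Phi$. For $t\in[-1,1]$,
\[
\P(\sin\Phi\le t)=\int_{-\pi/2}^{\arcsin t}\tfrac12\cos\phi\,d\phi=\tfrac12(t+1),
\]
so $\sin\Phi\sim\textnormal{U}[-1,1]$; this is Archimedes' hat-box theorem. Since $D=\frac{v}{\rho}\cos\Gamma$ is a linear scaling of $\cos\Gamma$, it follows that $D\sim\textnormal{U}[-v/\rho,v/\rho]$.

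The main obstacle is justifying the rotation-invariance step rigorously, since the paper works in explicit angle coordinates. If an abstract invariance argument is not acceptable, the direct alternative is to condition on $(\alpha,\beta)$ and show by explicit computation that $\cos\Phi\cos\alpha\cos(\Theta-\beta)+\sin\Phi\sin\alpha$ is uniform. Shifting $\Theta-\beta$ removes $\beta$ by the uniformity of $\Theta$ modulo $2\pi$. For the remaining dependence on $\alpha$, the cleanest route is to compute the characteristic function: $\E[e^{is\langle X,u\rangle}]=\sin s/s$ for $X$ uniform on $S^2$. This can be obtained by integrating in coordinates aligned with $u$, which again relies on rotation invariance of the surface measure $\cos\phi\,d\phi\,d\theta$. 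Because this surface measure is the standard rotation-invariant area element, I would simply cite that fact.
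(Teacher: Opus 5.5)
Your proof is correct, but it takes a different route from the paper's. In Appendix D the paper stays in angle coordinates. It absorbs $\beta$ via $\Theta-\beta\stackrel{d}{\sim}\Theta$ and writes the conditional density of $Z=\cos\Gamma$ given $(\alpha,\Phi)=(a,\phi)$ from the arcsine law of $\cos\Theta$. It then integrates out $a$ with the substitution $t=\sin a$ over the support interval $R_{z,\phi}$; this is where $f_\alpha(a)=\frac12\cos a$ enters. A final integration over $\phi$ gives $f_Z\equiv\frac12$.

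You instead recognize the LoS direction as uniform on $S^2$ and condition on the DoM. Rotation invariance of the normalized area measure then reduces $\cos\Gamma$ to $\sin\Phi$, which is uniform by the same computation as in Appendix C.

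Your version is shorter and proves more. The conditional law of $\cos\Gamma$ given $(\alpha,\beta)$ is already $\textnormal{U}[-1,1]$, so the DoM may have any law independent of the LoS, and Corollary~\ref{cor:cor2} comes for free. It also makes explicit the distributional rotation invariance of the LoS direction. The paper's rotation argument in Appendix G relies on this without stating it, since it only checks that the inner-product formula is unchanged.

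The paper's computation avoids any appeal to surface measure, at the cost of support bookkeeping. In fact its inner $t$-integral already yields conditional density $\frac12$ given $\Phi=\phi$. So it is the mirror image of your argument: it exploits uniformity of the DoM relative to the LoS (through $\Theta-\beta$ uniform and $\sin\alpha$ uniform) rather than uniformity of the LoS itself.

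Your characteristic-function fallback is unnecessary. The area element $\cos\phi\,d\phi\,d\theta$ and its rotation invariance are standard. If you wanted to avoid invariance altogether, the paper's explicit integration is exactly that fallback.
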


\begin{proof}
    See Appendix D.  
\end{proof}

\begin{lemma}\label{coh-time-3d}
The probability density corresponding to the coherence time is
    \begin{equation}
    f_{T_c}(y) = \frac{\mu \rho}{v}  y^{-2}, \;\;\; \forall \,  y \in \left[\frac{\mu \rho}{v}, \infty \right).\nn
\end{equation}
\end{lemma}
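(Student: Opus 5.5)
We derive the density of $T_c$ from Theorem~\ref{theorem-3} by a change of variables. By that theorem, $\cos\Gamma$ is uniform on $[-1,1]$. Hence $U := |\cos\Gamma|$ is uniform on $[0,1]$: for $u\in[0,1]$,
\begin{equation}
\P(|\cos\Gamma|\le u)=\P(-u\le \cos\Gamma\le u)=u. \nn
\end{equation}
The coherence time is $T_c=\mu\rho/(vU)$, a decreasing bijection from $(0,1]$ onto $[\mu\rho/v,\infty)$. The event $U=0$ has probability zero, so the fact that $T_c$ is infinite there does not affect the density.

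We then compute the CDF directly. For $y\ge \mu\rho/v$,
\begin{equation}
\P(T_c\le y)=\P\!\left(U\ge \frac{\mu\rho}{vy}\right)=1-\frac{\mu\rho}{vy}. \nn
\end{equation}
Differentiating in $y$ gives $f_{T_c}(y)=\frac{\mu\rho}{v}y^{-2}$ on $[\mu\rho/v,\infty)$, and the density is zero below this threshold.

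As a check, integrating this density over $[\mu\rho/v,\infty)$ gives $1$, so it is a proper density. This also shows that $T_c$ is almost surely finite even though it has infinite mean. The same result follows by applying the Jacobian formula to $D$ uniform on $[-v/\rho,v/\rho]$ with $T_c=\mu/|D|$, adding the contributions of the two preimages $\pm\mu/y$. Each contributes $\frac{\rho}{2v}\cdot\frac{\mu}{y^2}$, and their sum is the stated density.

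There is no real obstacle here. The only points needing care are the folding step from $\cos\Gamma$ to $|\cos\Gamma|$, which doubles the density, and the fact that the map $U\mapsto T_c$ is decreasing, so the inequality reverses when converting the CDF.
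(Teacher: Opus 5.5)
Your proof is correct and follows essentially the same route as the paper: fold the uniform Doppler variable to its absolute value, which is uniform with doubled density, and then transform through $T_c=\mu/|D|$. The paper does the last step with the Jacobian formula for $|D|$ rather than your CDF computation, which is equivalent; your support $[\mu\rho/v,\infty)$ is the correct one, whereas the paper's appendix writes $\mu v/\rho$ by typo.
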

\begin{proof}
    See Appendix E. 
\end{proof}
In the upcoming section, we propose two Doppler aware utility functions, which will be further used as key  metrics to evaluate system performance. 
%




\section{Doppler Aware UE Association policies}

The main objective of this work is to propose a Doppler-Shannon framework wherein the UE-BS association is based on the Doppler angle, Euclidean distance between UE and BS, and the fading gain of the UE-BS link, i.e., $f(\Gamma, R, K)$. This is a refinement w.r.t. the classical Shannon association where the association is a function of the closest distance and fading gain, i.e., $f(R, K)$.   

In this section, we propose two  different ``utilities'' capturing, (a)  the data reception at the UE and (b) the data rate achievable over a time selective fading channel. 
We then propose novel UE association policies based on these utilities. 

\subsection{Data received} 

To begin with, the Shannon rate $S = B \log_2 (1 + K R^{-\sigma})$, represents the achievable data rate  [bits/sec] from a BS located at distance $R$ to a UE. Here, $K$ denotes the channel fading gain. To capture only the amount of data being transmitted from a vehicular BS located at $X_i$ to a typical UE (assumed at the origin), we define the utility 
\begin{equation}
U_{i}(\Gamma_i,R_i,K) \triangleq T_{c}(\Gamma_i) B \log_2 (1 + K_i R_i^{-\sigma}) \ [\mbox{bits}],
\label{eq:U0}
\end{equation} 
where $R_i=|X_i|$, $K$ is the fading factor and $T_c(\Gamma_i)$ is coherence time corresponding to the Doppler angle $\Gamma_i$. Physically, the utility in (\ref{eq:U0}) denotes the amount of data (in bits) which the vehicular BS at $X_i$ can transmit to the typical UE  in a time--frequency resource block. 

To visualize the effect of the Doppler shift, we construct the corresponding BS  tessellation on $\R^d$ based on the utility defined in (\ref{eq:U0}). For a UE located at $z \in \R^d$, receiving utility $U_i(z):=U(\Gamma_i,|X_i-z|,K_i)$ from a BS located at $X_i\in \Pcal$. 
We consider the coverage region of a BS  
\begin{align}
\Rcal_i:=\{z\in \R^d: U_i(z)> U_j(z)\mbox{ for all }j\neq i\}.\nn
\end{align}

It is evident from \eqref{2d-Doppler}   that the minimum Doppler shift is experienced when the UE LoS is  perpendicular to the DoM, i.e., $D(\Gamma_i)$ close to $0$, when $\Gamma_i$ close to $\frac{\pi}{2}$. Then the corresponding  coherence time and value of the utility tends to infinity, i.e., $T_c(\Gamma_i) \rightarrow \infty$ (evident from \eqref{eq:Tc-pdf}) and $U_i \rightarrow \infty$. This phenomenon results in infinitely  elongated tessellations perpendicular to the DoM as $\Gamma_i$ is close to $\frac{\pi}{2}$ (depicted in Fig~\ref{fig:4}~(a) (2D) and Fig.~\ref{fig:9} (3D)). 

It is crucial to note that we perform our analysis in a time snap-shot (i.e., over a small time interval $\delta t)$. Hence, the position of the vehicular BSs (modeled as a PPP), and their respective tessellations change randomly with the time interval. The presence of unbounded infinitely long ``rays'' occurring in the spatial tessellations   is challenging as the evolution of coverage tessellations with time or UE handovers in a space-time dynamical system require spatial boundedness. 

We limit the extent of elongation of these ``rays'' by modifying the utility function through a \textit{limiting factor} $\zeta > 0$, as re-defined below. 

\begin{utility}[Data received]\label{utility1-proposition} 
The first utility function $U_1$, to model the amount of data being received in a coherence time interval (in bits),  at a typical UE from a mobile BS $X_i$ is  
\begin{equation}\label{utility1}
    U_{1,i}\triangleq \min\{T_{c}(\Gamma_i), \zeta \} B \log_2(1 + K_i R_i^{-\sigma}),
\end{equation}
where $U_{1,i} \equiv U_{1}(\Gamma_i,R_i,K_i)$, $\zeta$ denotes the limiting factor, $\Gamma$ the Doppler angle, $K$ the channel gain, $R_i=|X_i|$, and $\sigma$ the path-loss exponent (PLE). 
\end{utility}

The limiting factor $\zeta$ ensures that, mathematically, the utility does not tend to infinity, i.e., $\zeta < \infty  \implies U_i \neq \infty$. When $\zeta$ is very small, it mathematically nullifies the effect of the Doppler shift (through coherence time) on the utility, resulting in the  classical Voronoi tessellations.  Physically, the limiting factor $\zeta$ ensures that the UE do not associate with a far away mobile BS having the Doppler angle $\Gamma \ \text{close to} \ \pi/2$. Below we define the coverage tessellations arising from Utility 1. 

\textbf{Tessellation 1}:
For any $z\in \R^d$, we have  $U_{1,i}(z)$ denoting the data reception at $z$ from $X_i\in \Pcal$. The association rule corresponding to two vehicular BSs located at $X_i, X_j$ and a UE at location $z$, is according to $U_{1,i}(z) \lesseqqgtr U_{1,j}(z)$. This implies that
\begin{align}
\frac{\min\{T_{c}(\Gamma_{i}), \zeta\}}{\min\{T_{c}(\Gamma_{j}), \zeta\}} & \lesseqqgtr \frac{ \log_2 \left( 1+ K_{j} |X_j-z|^{-\sigma}\right)}{ \log_2 \left( 1+ K_{i} |X_i-z|^{-\sigma} \right)}.\nn
\end{align}
Hence, the comparison of the ratio of the truncated coherence times and therein the Doppler angle and the Shannon rate influence the coverage regions or the tessellation structure of the vehicular BS. The equality $U_{1,i}(z)  = U_{1,j}(z)$, along with $U_{1,i}(z), U_{1,j}(z) > U_{1,k}(z) $ for all $k \neq \{i, j\}$ where $X_k \in \mathcal{P}$, corresponds to a boundary point, if it exists, between the coverage cells of the BSs at $X_i, X_j$, of the tessellation.
The UE located at the boundary performs handover from one BS to the other.

\textcolor{magenta}{
}

\subsection{Data rate}

\begin{utility}[Data rate]\label{utility3-proposition}
    The second Utility $U_2$, captures the effect of channel estimation time on the amount of data rate (in bits per sec.)  obtained at a typical UE from a BS at $X_i$. For the channel estimation time, say $\varepsilon$, the utility is defined as 
\begin{equation}\label{utility3}
\hspace{-0.08in} U_{2,i} {\triangleq} \left( 1 {-} \frac{\varepsilon}{T_{c}(\Gamma_i)} \right)^{+} \!\!\! B \log_2 (1 {+} K_i R_i^{-\sigma}),
\end{equation}
where $U_{2,i} {\equiv }  U_{2}(\Gamma_i,R_i,K_i)$ and $|X_i|=R_i$. 
\end{utility}

 The second utility may be visualized as the data rate obtained in a coherence time interval after the allocation of $\varepsilon$ time units for the channel estimation, as is the norm in time division multi access (TDMA) systems. It may be noted that many upcoming V2X technologies including IoT systems are based on TDMA systems.
Further, the channel estimation time cannot exceed the coherence time interval, i.e. $\varepsilon \not> T_{c}$, as the utility becomes negative otherwise. 
It may be noted from \eqref{utility3} that a very small channel estimation time, $\varepsilon \ll T_{c}$, results in the utility resembling the classical Shannon data rate, thereby negating the effects of the Doppler (through the coherence time) on the system performance. Hence, it is noteworthy that efficient channel estimation is crucial for the success of wireless systems.  
Below we summarize the key observations: 
\begin{equation}
\begin{cases}
    \text{if} \ \varepsilon \ll T_{c} & \text{the effect of Doppler vanishes,} \\
    \text{if} \ \varepsilon < T_{c} & \text{the effect of Doppler is prominent}.
\end{cases}\nn
\end{equation}
\textbf{Tessellation 2}: Corresponding to Utility ~\ref{utility3-proposition}, 
for two vehicular BSs located at $X_i, X_j$ and a UE at location $z \in \R^d$, the BS association rule 
is  $U_{2,i}(z) \lesseqqgtr U_{2,j}(z)$. Similarly to the first utility, here also we describe the boundary between two neighboring coverage cells of the BSs located at $X_i, X_j$ with the helpf of the equality $U_{2,i}(z) = U_{2,j}(z)$, together with the inequality $U_{2,i}(z), U_{2,j}(z) > U_{2,k}(z)$ for all $k \neq \{i, j\}$, where $X_k \in \mathcal{P}$. See Figs.~\ref{fig:5}~(a)--(c) and Fig.~\ref{fig:11} for the tessellations.

It can be inferred that, for very small $\varepsilon \ll T_{c}$, the UE association is primarily happening by the classical Shannon association. As $\varepsilon < T_{c}$, the effect of Doppler on the UE association becomes more and more prominent.

\begin{remark}
Note that, $T_{c} {>} \varepsilon$ implies
$\frac{\mu \rho}{v |\cos \Gamma|} {>} \varepsilon$, which in turn implies that 
$\frac{\mu c}{v |\cos \Gamma| f_c} {>} \varepsilon$,
where $f_c$ is the frequency center of the communication band and $c$ is speed of light. Hence, we infer that communication in higher frequency bands will result in reducing the coherence time interval, resulting in limiting the network performance. 
Similarly, a higher magnitude of velocity $v$ will result in reducing the coherence time interval, thereby affecting network performance. 
\end{remark}

In the upcoming sections, we analytically evaluate the performance of the new Doppler-Shannon based association policies and compare them with the classical Shannon based framework. 


\section{Doppler-Shannon based Performance Evaluation in 2D}

In this section, we evaluate the performance of the proposed Doppler-Shannon association frameworks using stochastic geometry tools, in a 2D setting. The performance evaluation will be further extended to 3D  in the next section.

\subsection{ Doppler-Shannon association framework}

The objective is to associate the UE with the BS from which the UE receives maximum utility. The location of the associating BS is
\begin{equation}
    x^* = \arg \max_i U_i.\nn
\end{equation}
Here, $U_i=\{ U_{1,i}, U_{2,i}\}$ denotes a general utility  as a function of the BS location w.r.t. a typical UE. 
For simplicity, we denote the utility function as $U$ in general for the rest of the article and specify it  wherever necessary.   

Before outlining the performance evaluation in depth,  we prove the rotational invariance of the Doppler shift in a 2D setting, a property which we term as {\em isotropy}. Meaning that its statistical properties do not change with axis rotations.  
\begin{lemma}\label{RI2D}
The Doppler angle $\Gamma$ is rotationally invariant.
\end{lemma}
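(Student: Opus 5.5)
We read the statement as follows: if the coordinate axes are rotated by an arbitrary deterministic angle $\psi$, the joint law of everything that drives the association is unchanged, and in particular the law of $\Gamma$ is unchanged. The argument works directly with the 2D parametrization of \eqref{2d-gamma}.

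\textbf{Effect of a rotation.} Let the axes rotate by $\psi$. Then the LoS angle becomes $\Theta' = (\Theta - \psi) \bmod 2\pi$ and the DoM angle becomes $\alpha' = (\alpha - \psi) \bmod 2\pi$. Both vectors are rotated by the same orthogonal map $Q_\psi$. The inner product is invariant, $\langle Q_\psi \underbar{$r$}, Q_\psi \underbar{$v$}\rangle = \langle \underbar{$r$}, \underbar{$v$}\rangle$, and so are the norms $r$ and $v$. Hence
\[
\Gamma' = (\Theta' - \alpha') \bmod 2\pi = (\Theta - \alpha) \bmod 2\pi = \Gamma .
\]
So $\Gamma$ is invariant pathwise, not just in distribution. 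By \eqref{doppler-2d}, the same then holds for $D$, for $T_c(\Gamma)$, and for the utilities $U_{1}$ and $U_{2}$. This pathwise identity already settles the lemma.

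\textbf{Distributional statement.} To cover the isotropy that later computations rely on, we also show that the rotated angles have the same joint law as the originals. Since $\Theta \sim \mathrm{U}[0,2\pi)$, we get $\Theta' \sim \mathrm{U}[0,2\pi)$, and $\Theta'$ stays independent of $\alpha'$. If the DoM is itself uniform, $\alpha'$ is again uniform. For a general density $f_\alpha$, $\alpha'$ has density $f_\alpha(\cdot+\psi)$. In either case Theorem~\ref{theorem1} applies to $(\Theta',\alpha')$ and gives $\Gamma' \sim \mathrm{U}[0,2\pi)$.

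This second route is the one step needing some care. Uniformity of $\Gamma$ comes from the uniformity of $\Theta$ alone, through the convolution on the circle:
\[
f_\Gamma(g) = \int_0^{2\pi} \frac{1}{2\pi}\, f_\alpha\big((\,\cdot\,)\big)\, da = \frac{1}{2\pi}.
\]
This shift invariance of the circular convolution, for any $f_\alpha$, is what we need to state explicitly. The rest of the argument is bookkeeping of the $\bmod 2\pi$ reductions.
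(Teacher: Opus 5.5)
Your proposal is correct and is essentially the paper's argument. The paper observes only that $\Gamma$ is defined through the normalized inner product of the LoS and DoM vectors, and that any rotation preserves this inner product; this is exactly your pathwise step $\Gamma'=\Gamma$. Your distributional add-on via Theorem~\ref{theorem1} is not needed for the lemma but is consistent with it; note that the integrand in your display should read $\frac{1}{2\pi}f_\alpha(a)$ rather than $f_\alpha((\cdot))$.
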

\begin{proof}
The initial LoS and DoM vectors are given by $\underbar{$r$} {=}[r\cos\Theta\; r\sin\Theta]^T$ and 
 $\underbar{$v$} {=} [v\cos \alpha\; v \sin \alpha]^T$, in terms of the angles $\Theta$ and $\alpha$, respectively. The Doppler angle $\Gamma$ is given by the inner product $\frac{ \langle\underbar{$r$}, \ \underbar{$v$}\rangle }{rv}$. Since the inner product preserves angle between two vector, the Doppler angle remains invariant with respect to any rotation. 
%
%
\end{proof}

From Theorem~\ref{theorem1} and Lemma~\ref{RI2D}, we deduce the following corollary about the isotropic behavior of the Doppler statistics. 
\begin{corollary}\label{cor:cor1}
The DoM has no effect on the distribution of the radial component of velocity along the LoS, i.e.,
\begin{equation}
\Gamma(\Theta, \alpha) = \cos(\Theta {-} \alpha) \stackrel{d}{\sim} \cos \Theta \equiv \Gamma(\Theta).\nn
\end{equation}
%
 
\end{corollary}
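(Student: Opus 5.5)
The plan is to show that $\cos(\Theta-\alpha)$ and $\cos\Theta$ have the same law whenever $\Theta\sim\textnormal{U}[0,2\pi)$ is independent of $\alpha$, for an arbitrary density $f_\alpha$ of the DoM. The argument conditions on $\alpha$ and uses the fact that the uniform law on the circle is invariant under rotations. This is the content of Theorem~\ref{theorem1}, read through Lemma~\ref{RI2D}.

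\textbf{Step 1 (conditioning).} Fix $\alpha=a$. By independence, the conditional law of $\Theta$ given $\alpha=a$ is still $\textnormal{U}[0,2\pi)$. The map $\theta\mapsto(\theta-a)\bmod 2\pi$ is a bijection of $[0,2\pi)$ that preserves Lebesgue measure, since it is a rotation of the circle. Hence $(\Theta-a)\bmod 2\pi\sim\textnormal{U}[0,2\pi)$. This is exactly the first claim of Theorem~\ref{theorem1}, and it is the rotational invariance of Lemma~\ref{RI2D} expressed in distribution.

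\textbf{Step 2 (apply the cosine).} Because $\cos$ is $2\pi$-periodic, $\cos(\Theta-a)=\cos\big((\Theta-a)\bmod 2\pi\big)$. By Step 1, this has the same law as $\cos U$ with $U\sim\textnormal{U}[0,2\pi)$, which is the law of $\cos\Theta$. Concretely, for any bounded measurable $g$,
\begin{equation}
\E[g(\cos(\Theta-\alpha))\mid\alpha=a]=\frac{1}{2\pi}\int_0^{2\pi} g(\cos\theta)\,d\theta=\E[g(\cos\Theta)].\nn
\end{equation}

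\textbf{Step 3 (uncondition).} The right-hand side above does not depend on $a$. Integrating against $f_\alpha(a)\,da$ therefore gives $\E[g(\cos(\Theta-\alpha))]=\E[g(\cos\Theta)]$ for all $g$, which is the equality in distribution claimed in Corollary~\ref{cor:cor1}. As a consistency check, both sides have the arcsine density $\frac{1}{\pi}(1-x^2)^{-1/2}$ on $[-1,1]$, which matches \eqref{2d-Doppler} after scaling by $v/\rho$.

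The only delicate point is Step 3, namely justifying that the conditional law does not depend on $\alpha$. This needs independence of $\Theta$ and $\alpha$ together with uniformity of $\Theta$; the argument fails if $\Theta$ is not uniform. It requires no assumption on $f_\alpha$, which is precisely what the statement "the DoM has no effect" asserts.
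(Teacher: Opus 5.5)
Your argument is correct and is essentially the paper's: Appendix~A shows $f_\Gamma(\gamma)=\int_0^{2\pi} f_\Theta((\gamma+a)\bmod 2\pi)\,f_\alpha(a)\,\mathrm{d}a=\frac{1}{2\pi}$ for arbitrary $f_\alpha$, which is your condition-on-$\alpha$ and shift-invariance step written with densities instead of test functions, and it then passes to $\cos\Gamma$ as in your Step~2. Your arcsine consistency check is also the right one; the appendix's passing remark that $\cos\Theta$ is uniform on $[-1,1]$ is a slip, contradicted by the arcsine density it then derives.
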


\begin{remark}
In view of Corollary~\ref{cor:cor1}, since the Doppler angle and further  the Doppler shift and coherence time are oblivious of the DoM, in a 2D setting,  we define the utility purely as a function of the angle of the LoS, i.e., $\Theta$, along with the Euclidean distance  ($R$) and the channel fading gain ($K$). Hence, i.e., $U_i  \equiv  U(\Gamma_i, R_i, K_i) \equiv U(\Theta_i, R_i, K_i)$. Equivalently, $T_c(\Gamma) \equiv T_c(\Theta)$.
%
In other words, the utility function is  defined in terms of the marked PPP $\tilde \Pcal:=\sum_i\delta_{(R_i,\Theta_i),K_i}$, where the Doppler angle is implicitly defined in the polar coordinate of the spatial location of the BSs. 
\end{remark}

Below we present our main result regarding the outage probability at a UE to meet a QoS $\tau$.  
\begin{theorem}\label{theorem-6}
For mobile nodes modeled spatially as a 2D Poisson point process the outage probability at a typical UE, to meet the user QoS $\tau$ 
, is given by
\begin{equation}
   \hspace{-0.1in} \mathbb{P}\left(\max_i U(\Theta_i, R_i, K_i) {<} \tau \right) {=}  e^{- \lambda \pi   \mathbb{E}_{K, \Theta} \left[ K^{\frac{2}{\sigma}}g(\Theta, \tau)^2 \right]},
   \label{eq:OPU}
\end{equation}
where $K^\frac{1}{\sigma} g(\Theta,\tau)$ represents the cutoff distance, as a function of the LoS angle and the QoS,  signifying the distance from the typical UE to the optimal BS. \label{theorem:Cov-S}
\end{theorem}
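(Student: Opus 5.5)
The plan is to read $\{\max_i U(\Theta_i,R_i,K_i)<\tau\}$ as a void event for a marked Poisson process and apply the void probability formula. By the preceding Remark, the BSs form an independently marked PPP $\tilde\Pcal=\sum_i\delta_{(R_i,\Theta_i),K_i}$. Here $\Theta_i$ is uniform on $[0,2\pi)$ and independent of $R_i$ (polar decomposition of a homogeneous planar PPP). The fading marks $K_i$ are i.i.d. and independent of the locations, and by Corollary~\ref{cor:cor1} the Doppler angle may be replaced by $\Theta_i$. The outage event then says that no point of $\tilde\Pcal$ lies in the set
\begin{equation}
A_\tau:=\{(r,\theta,k): U(\theta,r,k)\ge\tau\}.\nn
\end{equation}
Hence $\P(\max_i U_i<\tau)=\exp(-\Lambda(A_\tau))$, where $\Lambda$ is the intensity measure $\lambda\, r\,dr\,d\theta\,f_K(k)\,dk$.

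The second step is to describe $A_\tau$ for fixed $(\theta,k)$. Both utilities have the form $h(\theta)\,B\log_2(1+k r^{-\sigma})$ with $h(\theta)\ge 0$. For Utility~\ref{utility1-proposition}, $h(\theta)=\min\{T_c(\theta),\zeta\}$; for Utility~\ref{utility3-proposition}, $h(\theta)=(1-\varepsilon/T_c(\theta))^+$. This expression is strictly decreasing in $r$, so $U\ge\tau$ holds exactly when $r\le r^*(\theta,k,\tau)$. Solving $h(\theta)B\log_2(1+kr^{-\sigma})=\tau$ gives
\begin{equation}
r^*=k^{1/\sigma}\left(2^{\tau/(Bh(\theta))}-1\right)^{-1/\sigma}.\nn
\end{equation}
We therefore set $g(\theta,\tau):=\bigl(2^{\tau/(Bh(\theta))}-1\bigr)^{-1/\sigma}$, with $g=0$ when $h(\theta)=0$. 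The fading gain then factors out as $k^{1/\sigma}$, which is exactly the cutoff distance $K^{1/\sigma}g(\Theta,\tau)$ named in the statement.

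The third step is the integration. Using $\int_0^{r^*}\lambda r\,dr=\lambda (r^*)^2/2$, we get
\begin{equation}
\Lambda(A_\tau)=\int_0^{2\pi}\!\!\int \frac{\lambda}{2}k^{2/\sigma}g(\theta,\tau)^2 f_K(k)\,dk\,d\theta
=\lambda\pi\,\E_{K,\Theta}\!\left[K^{2/\sigma}g(\Theta,\tau)^2\right],\nn
\end{equation}
where $\Theta$ is uniform, since $\frac{1}{2}\int_0^{2\pi}(\cdot)\,d\theta=\pi\,\E_\Theta[\cdot]$. Substituting into the void probability yields \eqref{eq:OPU}.

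The main obstacle is justifying the monotone-in-$r$ cutoff for every angle, including the degenerate angles. At $\Theta=\pi/2$ the coherence time $T_c$ is infinite; $\zeta$ bounds $h$ for $U_1$, and for $U_2$ we have $h\le 1$, so $g$ stays finite. Where $h=0$ (Utility 2 with $T_c\le\varepsilon$), that direction contributes nothing. A second subtlety is the finiteness of $\E[K^{2/\sigma}g^2]$, which is needed so that the void probability is nontrivial. We also need the location–angle–mark independence, which is used to factor $\Lambda$.
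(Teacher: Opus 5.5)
Your proposal is correct and follows essentially the paper's route: the paper also writes the outage as the void probability (via the PGFL of the max-shot-noise) of the independently marked PPP, uses the monotone cutoff $\{U(\theta,r,K)\ge\tau\}=\{r\le K^{1/\sigma}g(\theta,\tau)\}$, and integrates $r\,\mathrm{d}r$ to obtain $\lambda\pi\,\E\bigl[K^{2/\sigma}g(\Theta,\tau)^2\bigr]$. Your treatment of the degenerate angles ($T_c=\infty$, $h=0$) and of the boundedness of $g$ is extra care that the paper omits.
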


\begin{proof}
The outage probability w.r.t. the user QoS $\tau$ is
\begin{align}
\hspace{-0.1in}\mathbb{P}\left(\max_i U(\Theta_i, R_i, K_i) {<} \tau \right) &{=} \mathbb{E}\left[\prod_i \mathbbm{1}_{\{U(\Theta_i, R_i, K_i) < \tau\}}  \right].
\label{eq:OPU1}
\end{align}
Using the PGFl of max-shot noise in a Poisson setting in (\ref{eq:OPU1}) we have
\begin{align}   
\lefteqn{\mathbb{E}\left[\prod_i \mathbbm{1}_{\{U(\Theta_i, R_i, K_i) < \tau\}}  \right]}\nn\\
&=\mathbb{E}_{\mathcal P}\left[\prod_i \E\left[\mathbbm{1}_{\{U(\Theta_i, R_i, K_i) < \tau\}}\right]\vert \mathcal P  \right]\nn\\
&= \exp \left( - \lambda \int_0^{2\pi}\!\!\! \int_0^\infty\!\! \left(1{-} \E_K\left[\mathbb{P}(U(\theta, r, K) < \tau)\right]\right) r \dd r   \dd \theta \right) \nn  \\
%
%
&= \exp \left( - \lambda \int_0^{2\pi} \!\!\int_0^\infty \!\!\E_K\left[\mathbb{P}(U(\theta, r, K) \geq \tau) \right]  r \dd r   \dd \theta \right).
\label{eq:r-eqv}
\end{align}
To solve further, we claim the following equivalence of events hold for all utility functions (for a given $\theta, \tau$ and $K$)
\begin{equation}\label{eq:g-eqv}
\left\{U(\theta, r, K) \geq \tau\right\} \equiv \left\{r \leq  K^\frac{1}{\sigma}g(\theta, \tau) \right\},
\end{equation}
using a function $g(\theta,\tau)$. Here, the expression  $K^\frac{1}{\sigma} g(\theta, \tau)$ denotes a cut-off distance, as a function of the LoS angle and the QoS, 
and is different for different utility function. 
To demonstrate the above equivalence and existence of such a function $g(\theta,\tau)$, let us consider the Utility~\ref{utility1-proposition},
\begin{equation}
U =   \min\{T_{c}(\theta), \zeta \} B \log_2 (1 {+} K r^{-\sigma}),
\label{eq:U1} 
\end{equation}
where $T_c(\theta) = \mu \slash |D(\theta)|$. As we algebraically solve the inequality, we have a string of equivalence of the events from (\ref{eq:g-eqv}),  
\begin{align}
\{U(\theta, r, K) \geq \tau\}
&{=} \left\{\min\{T_{c}(\theta), \zeta \} B \log_2 (1 {+} K r^{-\sigma})  \geq \tau\right\} \notag \\
&{=} \left\{r^{-\sigma} \geq (2^{\tau/\min\{T_{c}(\theta), \zeta \} B} - 1)/K\right\}  \nn\\
&\equiv \left\{ r \leq K^\frac{1}{\sigma} g(\theta, \tau)\right\}. \nn
\end{align}
%
For the first utility (\ref{eq:U1}) we have
\begin{align}    g(\theta, \tau) &:= \left(2^{\tau/\min\{T_{c}(\theta), \zeta \} B} - 1\right)^{-\frac{1}{\sigma}}.
\label{eq:g-cutoff-U1}
\end{align}
The cut-off function for other utility functions are derived in Remark~\ref{remark:cutoff-g}. Using the equivalence (\ref{eq:g-eqv}), the last term in (\ref{eq:r-eqv}) becomes
\begin{align}
\lefteqn{\hspace{-0.2in}\exp \left( - \lambda \int_{0}^{2 \pi} \int_{0}^{\infty} \E_K\left[\mathbb{P}\left( r \leq K^\frac{1}{\sigma}g(\theta, \tau) \right)\right]  r \dd r   \dd \theta \right)}\nn\\
&= \exp \left( - \lambda  \int_{0}^{2 \pi}  \E_K\left[ \int_{0}^{K^\frac{1}{\sigma} g(\theta, \tau)}  r \dd r \right]\dd \theta \right)\nn \\
&= \exp \left( - \frac{\lambda}{2}  \int_{0}^{2 \pi} \E_K\left[ K^\frac{2}{\sigma}  \right] g(\theta, \tau)^2  \  \dd \theta \right) \nn \\
%
%
&= \exp \left( -\lambda  \pi \E_K\left[ K^\frac{2}{\sigma}  \right] \mathbb{E}_{\Theta} \left[ g(\Theta, \tau)^2 \right]   \right), \tageq \label{2d-DS-stochastic-geo-result}
\end{align}
since the fading $K$ is independent of the angle $\Theta$. This proves the result.
Moreover, the corresponding success or coverage  probability is given as 
\begin{equation}
\resizebox{\hsize}{!}{$\mathbb{P}\left(\max_i U(\Theta_i, R_i, K_i) {\geq} \tau \right) 
{=} 1 {-} e^{ - \lambda  \pi   \mathbb{E}_{K} \left[K^{\frac{2}{\sigma}}\right]\mathbb{E}_{\Theta} \left[g(\Theta, \tau)^2\right]}.$}\nn
\end{equation}
\end{proof}
\begin{remark}
Note that the probability that none of the BSs covers the typical user is computed as the void probability of the marked PPP $\tilde \Pcal:=\sum_i\delta_{(R_i,\Theta_i),K_i}$. In other words, it is the probability that there does not exist any BS that covers the UE, with the ensemble average over all possible LoS angle $\Theta$ and the fading $K$.
\label{remark:ea}
\end{remark}

In the remarks below, we derive the expression for $g(\Theta, \tau)$ and $\mathbb{E}_{K} \left[K^\frac{2}{\sigma}\right]$. 
\begin{remark}
With the coherence time simplified as a function of the LoS angle, $T_c(\theta) = \mu \slash |D (\theta)|$, the cut-off distance function $g(\theta, \tau)$ for the first utility $U_1$ is derived in \eqref{eq:g-cutoff-U1}.  
The cut-off distance function $g(\theta, \tau)$ for utility $U_2$ is
\begin{equation}
g(\theta, \tau)  := \left(2^{\tau/\left( 1 - \frac{\varepsilon}{T_{c}(\theta)} \right)^{+} B} - 1\right)^{-\frac{1}{\sigma}}.
\label{eq:g-cutoff-U3}
\end{equation}
%
\label{remark:cutoff-g}
\end{remark}
%
%
\begin{remark}
In general, the expectation of the cut-off distance function is given as
\begin{equation*}
    \mathbb{E}_{\Theta} \left[g(\Theta, \tau)^2\right] = \int_{0}^{2\pi} \frac{1}{2 \pi} g(\theta, \tau)^2 \dd \theta.
\end{equation*}
We substitute the cut-off distances provided in \eqref{eq:g-cutoff-U1} 
and \eqref{eq:g-cutoff-U3}, corresponding to $U_1, U_2$, 
respectively. For $T_c(\theta) = \mu \slash |D (\theta)|$,  we get
\begin{align*}
    \text{For} \ U_1: \mathbb{E}_{\Theta}\!\left[g(\Theta, \tau)^2\right] &{=} \!\!\int_{0}^{2\pi}\!\!\!\! \frac{1}{2 \pi} \left(2^{\tau/\min\{T_{c}(\theta), \zeta \} B} {-} 1\right)^{-\frac{2}{\sigma}} \!\!\dd \theta, 
    \\
    \text{For} \ U_2: \mathbb{E}_{\Theta}\!\left[g(\Theta, \tau)^2\right] &{=}\!\! \int_{0}^{2\pi}\!\!\!\! \frac{1}{2 \pi} \left(2^{\tau/\left( 1 - \frac{\varepsilon}{T_{c}(\theta)} \right)^{+} B} {-} 1\right)^{-\frac{2}{\sigma}} \!\dd \theta. 
\end{align*}
The integral expressions cannot be further decomposed into closed form expressions and require to  numerically simulated.

The expectation of the fading term $\mathbb{E}_K[K^\frac{2}{\sigma}]$, for a general Nakagami fading environment with shape parameter $m$ and  scale parameter $\Omega$ is 
\[
\mathbb{E}_K[K^{\frac{2}{\sigma}}] {=} \frac{\Omega^\frac{2}{\sigma}}{\Gamma(m)} \Gamma\left( m {+} \frac{2}{\sigma} \right).
\]
For a Rayleigh fading environment with parameters $m=1$ and $\Omega=1$, the expression simplifies to $\mathbb{E}_K[K^\frac{2}{\sigma}] {=} \Gamma\left( 1 {+} \frac{2}{\sigma} \right)$. The proof of these closed form expressions are shown in Appendix 
F.
\end{remark}
\begin{remark}\label{exp-utility}
The expected value of max utility is
\begin{align}
\mathbb{E}[\max_i U(\Theta_i, R_i, K_i)]
&{=} \int_{0}^{\infty}\!\!\!\! \mathbb{P}\left(\max_i U(\Theta_i, R_i, K_i) {\geq} \tau \right)\, \dd \tau\nn\\
& \hspace{-0.7in}=\int_{0}^{\infty}\!\! \left( 1 {-} e^{- \lambda  \pi   \mathbb{E}_{K} \left[K^\frac{2}{\sigma}\right]\mathbb{E}_{\Theta} \left[g(\Theta, \tau)^2\right]} \right)\dd \tau.\nn
\end{align}
\end{remark}
Note that for simplicity of the analysis, we consider a generalized $\tau$,  corresponding to the utility $U$. As shown in Remark \ref{exp-utility}, we have computed the expected utility by averaging over all possible values of $\tau$. Hence, the results derived in the section are general in nature, and are independent of the utility and its corresponding QoS.

In the upcoming subsection, we compare the proposed Doppler-Shannon framework with the closest distance association based Shannon framework. Note that, intuitively, the coverage probability of associating with the BS providing the ``maximum utility'' (which was the Doppler-Shannon case) will always be greater than (or equal to) the coverage probability of  associating with the nearest BS. In order to quantify the extent of gains observed, we manage to compare the coverage probabilities analytically using Jensen's inequality, as shown in the next subsection.

\subsection{Comparison with classical Shannon policy}

\begin{theorem}
The coverage probability to meet user QoS  through the proposed  Doppler-Shannon based association rule is always greater than or equal to than  that  obtained through the classical Shannon based association rule. 
\label{theorem:CovP-S-DS}
\end{theorem}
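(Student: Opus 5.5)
We will compare the two coverage probabilities with the same analytic formula from Theorem~\ref{theorem-6}. The comparison is carried out at a fixed utility $U$ (either $U_1$ or $U_2$) and a fixed QoS $\tau$. The Doppler-Shannon coverage is
\[
P_{DS}=1-\exp\!\left(-\lambda\pi\,\mathbb{E}_K[K^{2/\sigma}]\,\mathbb{E}_\Theta[g(\Theta,\tau)^2]\right).
\]
For the classical Shannon association we will use the same utility $U$, but the serving BS is the one maximizing the Shannon term $B\log_2(1+K_iR_i^{-\sigma})$. Because $\Theta$ is independent of $(R,K)$, the Doppler factor of that BS is an independent copy of $\min\{T_c(\Theta),\zeta\}$ (or of $(1-\varepsilon/T_c(\Theta))^+$).

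The first step is a pathwise comparison, which settles the claim directly. For every realization of the marked PPP $\tilde\Pcal$,
\[
\max_i U(\Theta_i,R_i,K_i)\ \ge\ U(\Theta_{i^S},R_{i^S},K_{i^S}),
\]
where $i^S$ is the Shannon-selected index. Taking probabilities of $\{\cdot\ge\tau\}$ then gives $P_{DS}\ge P_S$ immediately.

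We will also quantify the gap analytically, following the Jensen route announced in the text. This requires a closed form for $P_S$, which we obtain by conditioning on $\Theta_{i^S}=\theta$. Write $h(\theta)$ for the Doppler factor, i.e.\ $\min\{T_c(\theta),\zeta\}$ or $(1-\varepsilon/T_c(\theta))^+$. We need the distribution of the maximal Shannon term, which is again a max-shot-noise void probability:
\[
\P\!\left(\max_i B\log_2(1+K_iR_i^{-\sigma})\ge s\right)=1-\exp\!\left(-\lambda\pi\,\mathbb{E}[K^{2/\sigma}]\,(2^{s/B}-1)^{-2/\sigma}\right).
\]
Evaluating at $s=\tau/h(\theta)$ gives
\[
P_S=\mathbb{E}_\Theta\!\left[1-\exp\!\left(-c\,g(\Theta,\tau)^2\right)\right],\qquad c=\lambda\pi\,\mathbb{E}[K^{2/\sigma}].
\]
The function $x\mapsto 1-e^{-cx}$ is concave, so Jensen's inequality yields
\[
P_S=\mathbb{E}_\Theta\!\left[1-e^{-c\,g(\Theta,\tau)^2}\right]\le 1-e^{-c\,\mathbb{E}_\Theta[g(\Theta,\tau)^2]}=P_{DS}.
\]
This recovers the pathwise inequality analytically, and the Jensen gap measures the gain. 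For $U_2$ the positive part only sets $g=0$ on some angles, which does not affect concavity.

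The main obstacle is justifying that the Doppler factor of the Shannon-selected BS has the law of $\Theta$ and is independent of its Shannon value. This follows from the i.i.d.\ angle marks of the PPP: the argmax over $(R,K)$ is measurable with respect to the $(R,K)$ marks alone, so the angle mark of the selected point is still uniform. The one case to handle is when the classical policy is taken to be pure nearest-BS association without fading. There we apply the same conditioning with $\P(R_1\le r)=1-e^{-\lambda\pi r^2}$, and $K$ is absorbed in the expectation before applying Jensen.
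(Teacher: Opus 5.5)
Your proof is correct, but your main argument differs from the paper's. The paper takes the classical policy to be nearest-BS association. It computes that policy's outage in closed form as $\mathbb{E}_{\Theta_o,K_o}[e^{-\lambda\pi K_o^{2/\sigma}g(\Theta_o,\tau)^2}]$ from the Rayleigh law of $R_o$. It then applies Jensen's inequality to $x\mapsto e^{-x}$, jointly over $(K,\Theta)$, against the void-probability formula of Theorem~\ref{theorem-6}; this is exactly the nearest-BS case in your last sentence.

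Your pathwise bound $\max_i U_i\ge U_{i^S}$, which the paper mentions only as intuition before the theorem, proves the statement by itself. It holds realization by realization, for any baseline selection rule, any fading law and indeed any point process, and it needs no independence or closed-form computation. What it does not give is the size of the gain. The Jensen computations supply explicit formulas for both sides, so the gap can be analysed. For instance, it is of order $\lambda^2$ while both coverages are of order $\lambda$, which is consistent with the two curves converging at low density.

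Your max-SNR variant is a genuine refinement of the paper's. Because that baseline already exploits fading, Jensen is needed only over $\Theta$, so your gap isolates the purely Doppler-driven gain; the paper's gap also contains the fading-selection gain. Comparing the two closed forms (Jensen in $K$ alone, conditionally on $\Theta$) gives the chain $P_S^{\mathrm{nn}}\le P_S^{\mathrm{SNR}}\le P_{DS}$.

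One detail to tighten in the max-SNR computation: in the paper $\Theta_i$ is the polar angle of $X_i$, not an a priori mark. You therefore need to justify why the selected BS's angle is independent of $\max_i B\log_2(1+K_iR_i^{-\sigma})$. Either the product form $\lambda r\,\mathrm{d}r\,\mathrm{d}\theta$ of the intensity in polar coordinates does this, or the i.i.d.\ uniform DoM marks $\alpha_i$, which make $\Gamma_i$ an independent uniform mark.
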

\begin{proof}
In the classical Shannon based association rule the distribution of the distance, say $R_o$, to the closest BS is provided by a Rayleigh density
\[
f_{R_o}(r):= 2\lambda\pi r e^{-\lambda\pi r^2}, 
\]
for $r\geq 0$, which can be determined using void probability of PPP $\Pcal$, as derived in \cite{andrews2016primer}.
Then, given the relative angle of motion, say $\Theta_o$ and the fading, say $K_o$ for the nearest station, the coverage probability is 
\begin{align}
\lefteqn{\mathbb{P}[U(\Theta_o, R_o, K_o) > \tau | \Theta_o, K_o]}\nn\\
&=  \int_{0}^{\infty} 2\lambda  \pi r e^{-\lambda \pi r^2} \one{\left\{ r \leq K_o^\frac{1}{\sigma} g(\Theta_o, \tau)\right\}}  \dd r\nn \\
& = 2\lambda  \pi \int_{0}^{K_o^\frac{1}{\sigma}g(\Theta_o,\tau)} e^{-\la \pi r^2} r \ \dd r.
\label{eq:shannon1}
\end{align}
By the change of variable  $\lambda \pi r^2 = u$,  the last term in (\ref{eq:shannon1}) is
\begin{align}
\int_{0}^{\lambda \pi K_o^\frac{2}{\sigma}g(\Theta_o,\tau)^2} e^{-u} \dd u = 1 - e^{-\lambda \pi K_o^\frac{2}{\sigma} g(\Theta_o, \tau)^2}.\label{eq:s2}
\end{align}
Hence, from (\ref{eq:shannon1}) and (\ref{eq:s2}) the outage probability is
\begin{align}
%
\mathbb{P}[U(\Theta_o, R_o, K_o) {\leq} \tau] &{=} \E_{\Theta_o, K_o}\!\!\left[\mathbb{P}(U(\Theta_o, R_o, K_o) {\leq} \tau | \Theta_o, K_o)\right]\nn\\
&{=}\mathbb{E}_{\Theta_o, K_o} \left[e^{-\lambda \pi K_o^\frac{2}{\sigma}g(\Theta_o, \tau)^2} \right] \tageq \label{2d-shannon-analytic}.
\end{align}
Previously, for our Doppler-Shannon based association rule, the outage probability from \eqref{2d-DS-stochastic-geo-result} is
\begin{equation}
\mathbb{P}\left(\max_i U(\Theta_i, R_i, K_i) {<} \tau \right) 
{=} e^{- \lambda \pi  \mathbb{E}_{\Theta, K} \left[K^\frac{2}{\sigma}g(\Theta, \tau)^2 \right]}.
\label{eq:s3} 
\end{equation}
By Jensen's inequality, we have from (\ref{eq:s3}) and \eqref{2d-shannon-analytic} that
\begin{align}
     e^{- \lambda \pi  \mathbb{E}_{\Theta, K} \left[ K^\frac{2}{\sigma} g(\Theta, \tau)^2 \right]} &\leq \mathbb{E}_{\Theta, K}\left[e^{-\lambda \pi K^\frac{2}{\sigma} g(\Theta, \tau)^2}\right].\nn
\end{align}
In other words, comparing the coverage probabilities we have
\begin{align}
\hspace{-0.07in}\mathbb{P}\left(\max_i U(\Theta_i, R_i, K_i)\! >\!  \tau \right) &\geq \mathbb{P}\left(U(\Theta_o, K_o, R_o) \!>\! \tau\right).\nn
\end{align}
Hence, we have proved analytically that the coverage probability of the proposed Doppler-Shannon association framework will always beat the coverage probability in the classical Shannon based association framework. This demonstrates that the Shannon association is not optimal in vehicular networks, and shows the need for a Doppler aware association strategy.   
\end{proof}
%

%

%

\section{Doppler-Shannon based Performance Evaluation in 3D}
\label{secV}



In this section, we extend the analysis of the Doppler-Shannon based UE association framework, to a 3D PPP setting. As in the 2D case, in the 3D setting also, we derive the outage probabilities for all the proposed utility functions using stochastic geometry tools.

\subsection{Doppler-Shannon association in 3D}
%

Before, outlining the performance analysis, we first prove the rotational invariance of the Doppler shift in a 3D PPP setting. 

\begin{lemma}\label{RI3D}
In the 3D PPP setting, the Doppler angle $\Gamma$ is rotationally invariant. 
\end{lemma}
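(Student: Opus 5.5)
The plan is to follow the 2D argument of Lemma~\ref{RI2D} and make it explicit in spherical coordinates. By \eqref{3d-gamma}, $\cos\Gamma = \langle \underbar{$r$}, \underbar{$v$}\rangle/(rv)$, where $\underbar{$r$}$ is the LoS vector with unit direction $\hat{r}=[\cos \Phi \cos\Theta\; \cos \Phi \sin \Theta\; \sin \Phi]^T$ and $\underbar{$v$}$ is the DoM vector with unit direction $\hat{v}=[\cos \alpha \cos \beta\; \cos \alpha \sin \beta\; \sin \alpha]^T$. Take any rotation $Q\in SO(3)$, so $Q^TQ=I$ and $\det Q=1$, applied to the coordinate axes. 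Both vectors transform as $\hat r\mapsto Q\hat r$ and $\hat v\mapsto Q\hat v$. Their inner product is therefore unchanged:
\[
\langle Q\hat r, Q\hat v\rangle = \hat r^T Q^T Q \hat v = \langle \hat r,\hat v\rangle,
\]
and norms are preserved as well. Hence $\cos\Gamma$, and so $\Gamma\in[0,\pi]$, is the same before and after the rotation, for every realization. This is the deterministic part of the lemma.

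To match the 2D treatment, where invariance is also read distributionally (Corollary~\ref{cor:cor1}), I will next check that the statistical description is preserved. This has two parts.

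\textbf{Step 1 (directions stay uniform).} The angle densities of Lemma~\ref{lemma-1}, together with uniform azimuths, are exactly the uniform (surface) measure on the unit sphere $S^2$. The surface measure is $\cos\phi\,\dd\phi\,\dd\theta$, and the normalized version $\frac{1}{4\pi}\cos\phi\,\dd\phi\,\dd\theta$ is the product of $\frac12\cos\phi$ and $\frac{1}{2\pi}$. This measure is invariant under $SO(3)$. So the rotated directions $Q\hat r$ and $Q\hat v$ are again independent and uniform, and their rotated elevation/azimuth angles have the same joint law as before.

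\textbf{Step 2 (the point process is unchanged).} The homogeneous PPP $\Pcal$ on $\R^3$ is rotation invariant, since Lebesgue measure is. Its i.i.d.\ DoM marks stay i.i.d.\ uniform by Step 1. Consequently the rotated marked process has the same law, and the law of $\Gamma$ is unchanged; Theorem~\ref{theorem-3}, giving $\cos\Gamma\sim\mathrm{U}[-1,1]$, continues to hold in any rotated frame.

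I will also note the 3D analogue of Corollary~\ref{cor:cor1}. Choose $Q$ so that $Q\hat v=e_3$. Then $\cos\Gamma$ equals the $z$-component of $Q\hat r$, which is $\sin\Phi'$ for a uniform direction $Q\hat r$. By Archimedes' theorem, $\sin\Phi'$ is uniform on $[-1,1]$. This shows the DoM has no effect on the distribution of $\cos\Gamma$.

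The main obstacle is Step 1: one must justify that the product measure of Lemma~\ref{lemma-1} and the uniform azimuths equals the rotation-invariant surface measure on $S^2$. I would verify this through the surface element $\cos\phi\,\dd\phi\,\dd\theta$, which is the Jacobian of the spherical parametrization, rather than by tracking how the individual elevation and azimuth angles transform under $Q$, which is messy.
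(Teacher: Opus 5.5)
Your proof is correct, and its core is the paper's. The paper's proof also rests on the fact that rotations preserve inner products. It then illustrates this with the explicit rotation $R_y(\alpha)R_z(-\beta)$, which sends the DoM to the $x$-axis, and checks by multiplication that the first coordinate of the rotated LoS reproduces \eqref{3d-gamma}. Your identity $\hat r^TQ^TQ\hat v=\hat r^T\hat v$ does this for every rotation at once.

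You go beyond the paper in the distributional part. The paper only asserts at the end that rotation ``does not affect the distribution,'' while your Steps 1--2 actually justify it. The obstacle you flag in Step 1 is just the identity $\frac{1}{4\pi}\cos\phi\, d\phi\, d\theta$ for the normalized surface measure, so it is routine.

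More substantially, your aligned-frame step ($Q\hat v=e_3$ plus Archimedes) uses the same alignment idea as the paper but extracts the law of $\cos\Gamma$ from it in two lines. It recovers Theorem~\ref{theorem-3}, Theorem~\ref{theorem-5} and Corollary~\ref{cor:cor2} without the explicit integrals of Appendices D and H.

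State one point carefully there: $Q$ depends on $\hat v$, so Step 1 (proved for a fixed $Q$) must be applied conditionally on $\hat v$. That is where the independence of LoS and DoM enters. Phrased that way, your argument needs only $\hat r$ uniform and independent of $\hat v$, with an arbitrary DoM law; this is the 3D counterpart of the general $f_\alpha$ in Theorem~\ref{theorem1}. It also makes clear why uniformity breaks down for the LoS-dependent motion of Appendix J.
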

\begin{proof}
    See Appendix 
    G.
\end{proof}

Below, we show that the Doppler statistics are identical to Theorem \ref{theorem-3}  in an isotropic setting. We consider an isotropic setting, wherein the elevation and azimuth of the DoM are known to the UE, hence, randomness is only due to the LoS angles. (The DoM angles may be assumed zero, i.e., $\alpha {=} \beta {=} 0$).

With this consideration, we denote the Doppler angle with respect to the LoS  as $\Gamma_1\equiv \Gamma_1(\Theta,\Phi)$. Then %
\begin{equation}\label{3d-gamma-case2}
    \cos \Gamma_1 =  \cos \Phi \cos \Theta. 
\end{equation} 
\begin{theorem}\label{theorem-5}
In an isotropic setting, the distribution of the cosine of the Doppler angle ($\Gamma_1$) and the corresponding Doppler shift are identical to those in Theorem \ref{theorem-3} and are given as
\begin{equation}
\begin{aligned}
\cos\Gamma_1\stackrel{d}{\sim}\textnormal{U}[-1, 1] \mbox{ and }  D\stackrel{d}{\sim} \textnormal{U}\left[-v/{\rho},v/{\rho} \right].\nn
\end{aligned}
\end{equation}
\end{theorem}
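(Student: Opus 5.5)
We need to show that $\cos\Phi\cos\Theta$ is uniform on $[-1,1]$ when $\Phi$ has density $\frac12\cos\phi$ on $[-\pi/2,\pi/2]$ (Lemma~\ref{lemma-1}) and $\Theta\sim\mathrm{U}[0,2\pi)$, with the two independent.

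The cleanest route is geometric rather than computational. By Lemma~\ref{lemma-1}, the pair $(\Phi,\Theta)$ with these laws is exactly the spherical-coordinate description of a point uniformly distributed on the unit sphere $\mathbb{S}^2$. Concretely, $\underbar{$r$}/r=(\cos\Phi\cos\Theta,\cos\Phi\sin\Theta,\sin\Phi)$ is uniform on $\mathbb{S}^2$, because the surface element is $\cos\phi\,\dd\phi\,\dd\theta$ and normalizes to $\frac{1}{4\pi}\cos\phi\,\dd\phi\,\dd\theta=f_\Phi(\phi)f_\Theta(\theta)\,\dd\phi\,\dd\theta$. The quantity $\cos\Gamma_1=\cos\Phi\cos\Theta$ is then just the $x$-coordinate of this uniform point.

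The key step is to show that the $x$-coordinate of a uniform point on $\mathbb{S}^2$ is uniform on $[-1,1]$.

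\begin{itemize}
\item First use rotation invariance of the uniform measure on $\mathbb{S}^2$ (equivalently Lemma~\ref{RI3D}). A rotation mapping the $x$-axis to the $z$-axis shows that the $x$-coordinate has the same law as the $z$-coordinate, namely $\sin\Phi$.
\item Next compute the law of $\sin\Phi$. For $t\in[-1,1]$,
\begin{equation*}
\mathbb{P}(\sin\Phi\le t)=\int_{-\pi/2}^{\arcsin t}\tfrac12\cos\phi\,\dd\phi=\tfrac{1+t}{2}.
\end{equation*}
So $\sin\Phi\sim\mathrm{U}[-1,1]$, which is Archimedes' hat-box theorem.
\end{itemize}

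Hence $\cos\Gamma_1\sim\mathrm{U}[-1,1]$. Since $D=\frac{v}{\rho}\cos\Gamma_1$ is a linear rescaling, $D\sim\mathrm{U}[-v/\rho,v/\rho]$, which matches Theorem~\ref{theorem-3}.

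The main obstacle is making the rotation step rigorous in terms of the paper's angular parametrization. If that feels too soft, a direct computation works as a fallback. Condition on $\Phi$: given $\Phi=\phi$, the variable $\cos\phi\cos\Theta$ has the arcsine density $\frac{1}{\pi}(\cos^2\phi-x^2)^{-1/2}$ on $|x|<\cos\phi$, by Theorem~\ref{theorem1} with $v/\rho$ replaced by $\cos\phi$. Integrating against $\frac12\cos\phi$ gives
\begin{equation*}
f(x)=\frac{1}{\pi}\int_{0}^{\arccos|x|}\frac{\cos\phi}{\sqrt{\cos^2\phi-x^2}}\,\dd\phi .
\end{equation*}
The factor $\frac12$ in the density cancels against the two symmetric halves $\phi\in[-\arccos|x|,0]$ and $\phi\in[0,\arccos|x|]$. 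Substituting $s=\sin\phi$ turns this into
\begin{equation*}
f(x)=\frac{1}{\pi}\int_0^{\sqrt{1-x^2}}\frac{\dd s}{\sqrt{(1-x^2)-s^2}}=\frac{1}{\pi}\cdot\frac{\pi}{2}=\frac12,
\end{equation*}
which is the constant uniform density on $[-1,1]$.
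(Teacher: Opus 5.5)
Your proof is correct, and your primary argument takes a genuinely different route from the paper's. The paper (Appendix H) is purely computational. It sets $X=\cos\Phi$ and $Y=\cos\Theta$, derives $f_X(x)=x/\sqrt{1-x^2}$ on $[0,1]$ and the arcsine density of $Y$, writes the product density $f_Z(z)=\frac{1}{\pi}\int_{|z|}^{1}\frac{x\,\mathrm{d}x}{\sqrt{x^2-z^2}\sqrt{1-x^2}}$, and evaluates it with $u=x^2-z^2$. Your fallback is this same computation in different coordinates: you condition on $\Phi$ rather than on $\cos\Phi$, with $s=\sin\phi$ in place of $u$, and your arithmetic there checks out.

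Your main route instead identifies $(\cos\Phi\cos\Theta,\cos\Phi\sin\Theta,\sin\Phi)$ as a uniform point on $\mathbb{S}^2$. You then swap the $x$-coordinate for the $z$-coordinate by rotation invariance and read off $\sin\Phi\sim\mathrm{U}[-1,1]$, which is the Archimedes/Pitman construction of Appendix C run backwards. This buys a lot. It explains \emph{why} the answer is uniform, and it gives Theorem~\ref{theorem-3} and Corollary~\ref{cor:cor2} at once. For any DoM direction $\hat v$ independent of the LoS direction $\hat r$, take a rotation $R$ with $R\hat v=e_z$. Then $\langle\hat r,\hat v\rangle=\langle R\hat r,e_z\rangle\stackrel{d}{=}\sin\Phi$ conditionally on $\hat v$. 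So the long calculation of Appendix D is unnecessary, and even the specific DoM elevation density $\frac12\cos a$ is not needed; this is the 3D analogue of the arbitrary $f_\alpha$ allowed in Theorem~\ref{theorem1}. The paper's route, in exchange, is fully elementary and exhibits the conditional arcsine structure shown in Fig.~\ref{fig:3}(b).

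One correction to your citation: the ingredient you need is rotation invariance of the \emph{law} of $\hat r$, i.e., that normalized surface measure on $\mathbb{S}^2$ is invariant under orthogonal maps. That is not Lemma~\ref{RI3D}, which only says the inner product is unchanged when both vectors are rotated together. Stated correctly, the rotation step is already rigorous. You never need to track how $(\Phi,\Theta)$ transform under the rotation, so the ``obstacle'' you flag does not really exist.
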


\begin{proof}
    See Appendix 
    H.
\end{proof}
The following corollary justifies that, in a 3D setting, the statistics of the Doppler angle and subsequently the Doppler shift and coherence time  (Theorem~\ref{theorem-5}), are identical  to the statistics obtained in  the general 3D case (Theorem~\ref{theorem-3}).  
\begin{corollary}
The Doppler angle $\Gamma$ is independent of the DoM angles $\alpha,\beta$, i.e.,
\begin{equation}\label{eq:corollary2}
\Gamma (\Theta, \Phi, \alpha, \beta) \stackrel{d}{\sim} \Gamma_1 (\Theta, \Phi).\nn
\end{equation}
\label{cor:cor2}
\end{corollary}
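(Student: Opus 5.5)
The plan is to deduce Corollary~\ref{cor:cor2} directly from Theorem~\ref{theorem-3} and Theorem~\ref{theorem-5}, with Lemma~\ref{RI3D} supplying the intuition and a conditioning argument making it rigorous.

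\textbf{Step 1: reduce to the cosine.} Since $\Gamma\in[0,\pi]$ and $\cos$ is a bijection from $[0,\pi]$ onto $[-1,1]$, it suffices to prove $\cos\Gamma \stackrel{d}{\sim} \cos\Gamma_1$. Theorem~\ref{theorem-3} gives $\cos\Gamma\sim\textnormal{U}[-1,1]$ for the general configuration with independent $(\Theta,\Phi,\alpha,\beta)$. Theorem~\ref{theorem-5} gives $\cos\Gamma_1=\cos\Phi\cos\Theta\sim\textnormal{U}[-1,1]$. The two laws coincide, so $\Gamma$ and $\Gamma_1$ are equal in distribution.

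\textbf{Step 2: genuine independence from $(\alpha,\beta)$.} Equality of marginals does not by itself show that $\Gamma$ is independent of the DoM angles. To get this, condition on $(\alpha,\beta)=(a,b)$ and let $\mathcal{R}_{a,b}$ be the rotation of $\R^3$ mapping the unit DoM vector $\underbar{$v$}/v$ to $e_1=[1\;0\;0]^T$.
\begin{itemize}
\item The LoS direction $\underbar{$r$}/r$ is uniform on the unit sphere $S^2$. This follows from Lemma~\ref{lemma-1}: $\Phi$ has density $\tfrac12\cos\phi$, and $\Theta$ is uniform and independent of $\Phi$, which together give the uniform surface measure on $S^2$.
\item The LoS direction is independent of $(\alpha,\beta)$.
\item The uniform law on $S^2$ is rotation invariant, so $\mathcal{R}_{a,b}\underbar{$r$}/r$ is again uniform on $S^2$.
\end{itemize}
Inner products are preserved under rotation, which is the content of Lemma~\ref{RI3D}. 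Hence, conditionally on $(a,b)$,
\[
\cos\Gamma=\langle \mathcal{R}_{a,b}\underbar{$r$}/r, e_1\rangle \stackrel{d}{\sim} \cos\Phi\cos\Theta=\cos\Gamma_1 .
\]
This conditional law does not depend on $(a,b)$. It follows that $\Gamma$ is independent of $(\alpha,\beta)$ and has the law of $\Gamma_1$, which is the claim.

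\textbf{Main obstacle.} The only delicate point is in Step 2: checking that the parameterisation with density $\tfrac12\cos\phi$ really yields the uniform (rotation-invariant) measure on $S^2$. The Jacobian of the spherical parameterisation is $\cos\phi\,\dd\phi\,\dd\theta$, and normalising it gives exactly the product density $\tfrac12\cos\phi\cdot\tfrac{1}{2\pi}$, so this check is routine.
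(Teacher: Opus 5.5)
Your proposal is correct. Your Step~1 is exactly how the paper gets Corollary~\ref{cor:cor2}. The paper gives no separate proof: it reads the corollary off by comparing Theorem~\ref{theorem-3} (general angles, Appendix D) with Theorem~\ref{theorem-5} ($\alpha=\beta=0$, Appendix H). Both give $\cos\Gamma\sim\textnormal{U}[-1,1]$, with Lemma~\ref{RI3D} cited as justification and a simulation check in Fig.~\ref{fig:3}.

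Your Step~2 is a genuinely different and stronger argument. Lemma~\ref{RI3D} only notes that the inner product is unchanged when the frame is rotated to put \underbar{$v$} on the $x$-axis. It never uses that the rotated LoS direction keeps its uniform law on $S^2$ because it is independent of the DoM. That is precisely the ingredient that turns rotation invariance into a distributional statement.

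Conditioning on $(\alpha,\beta)$ buys you three things:
\begin{itemize}
\item It gives actual stochastic independence of $\Gamma$ from the DoM angles. The paper's comparison of two theorems only shows equality of marginal laws, despite the word ``independent''.
\item It holds for an arbitrary law of $(\alpha,\beta)$, not just the density $\tfrac12\cos a$ that Appendix D needs for its integration over $t=\sin a$. This is the 3D analogue of the general $f_\alpha$ allowed in Theorem~\ref{theorem1}.
\item It makes Step~1 and the explicit computation of Appendix D redundant: once Step~2 holds, Theorem~\ref{theorem-3} follows from Theorem~\ref{theorem-5}.
\end{itemize}
In return, the paper's computational route produces explicit intermediate objects, such as the conditional arcsine density of $\cos\Gamma$ given $(\alpha,\Phi)$, which the symmetry argument does not.
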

A confirmation of the analytical proof is performed through numerical simulation and is shown in Fig.~\ref{fig:3} (in Section~\ref{secVI}).

\begin{remark}
Note that in the 3D setting, the Doppler angle is a function of four independent angles, i.e., $\Gamma(\Theta, \Phi, \alpha, \beta)$. Accordingly, the utility functions, $U\equiv U(\Gamma,R,K)$ are also  functions of these four angles in addition to the Euclidean distance $R$ and the fading gain $K$. From Corollary~\ref{cor:cor2}, the utility function can be equivalently expressed statistically as, $U(\Gamma,R,K) \equiv U(\Gamma_1, R, K) \equiv U(\Theta, \Phi, R, K)$. Equivalently, $T_c(\Gamma) \equiv T_c (\Theta, \Phi)$.
%
\end{remark}

\begin{theorem}\label{theorem-8}
For mobile nodes modeled spatially as a 3D Poisson point process, the outage probability at a typical UE to meet the user QoS $\tau$ is given by
\begin{equation}\label{3d-outage} \mathbb{P}\left(\max_i U(\Theta_i, \Phi_i, R_i, K_i) {<} \tau \right) {=} e^{- \frac{4\lambda  \pi}{3} \mathbb{E}_{\Phi, \Theta, K}\left[K^{\frac{3}{\sigma}}g(\Theta, \Phi, \tau)^3\right]},\nn
\end{equation}
with $g(\Theta, \Phi, \tau)$ representing the cut-off distance function. 
\end{theorem}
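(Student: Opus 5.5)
The plan is to repeat the argument of Theorem~\ref{theorem-6}, now in three dimensions, with spherical coordinates in place of polar ones.

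By Corollary~\ref{cor:cor2} and the remark after it, the utility of the BS at $X_i$ depends on the DoM only through $\Gamma_1(\Theta_i,\Phi_i)$. We therefore view the network as an independently marked PPP $\tilde\Pcal=\sum_i \delta_{(R_i,\Theta_i,\Phi_i),K_i}$ on $\R^3$. Its intensity measure in spherical coordinates is $\lambda r^2\cos\phi\,\dd r\,\dd\theta\,\dd\phi$, with $\theta\in[0,2\pi)$ and $\phi\in[-\pi/2,\pi/2]$. This is consistent with Lemma~\ref{lemma-1}: conditionally on $r$, $\Theta$ is uniform and $\Phi$ has density $\tfrac12\cos\phi$.

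The outage event is written as
\[
\mathbb{E}\Big[\prod_i \mathbbm{1}_{\{U(\Theta_i,\Phi_i,R_i,K_i)<\tau\}}\Big].
\]
Conditioning on $\Pcal$ and using independence of the marks, the PGFl of the PPP gives
\[
\exp\Big(-\lambda\int_{-\pi/2}^{\pi/2}\int_0^{2\pi}\int_0^\infty \E_K\big[\mathbb{P}(U(\theta,\phi,r,K)\ge\tau)\big]\,r^2\cos\phi\,\dd r\,\dd\theta\,\dd\phi\Big).
\]

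The key step is the analogue of (\ref{eq:g-eqv}): the event equivalence
\[
\{U(\theta,\phi,r,K)\ge\tau\}\equiv\{r\le K^{1/\sigma}g(\theta,\phi,\tau)\}.
\]
For $U_1$ one takes
\[
g(\theta,\phi,\tau)=\big(2^{\tau/\min\{T_c(\theta,\phi),\zeta\}B}-1\big)^{-1/\sigma},
\]
and for $U_2$ the factor $\min\{T_c,\zeta\}$ is replaced by $(1-\varepsilon/T_c(\theta,\phi))^+$, exactly as in \eqref{eq:g-cutoff-U1}--\eqref{eq:g-cutoff-U3}. Here $T_c(\theta,\phi)=\mu\rho/(v|\cos\phi\cos\theta|)$. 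The equivalence holds because, for fixed angles and $K$, the utility is a nonnegative multiple of $\log_2(1+Kr^{-\sigma})$, which is strictly decreasing in $r$.

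There is a degenerate case to handle, and I expect it to be the main obstacle. When the prefactor vanishes (for $U_2$ with $T_c\le\varepsilon$), the event is empty and we set $g=0$, using the convention $(2^{\infty}-1)^{-1/\sigma}=0$. In the same spirit, the bound $\zeta$ keeps $g$ finite for $U_1$, so $\mathbb{E}[g^3]<\infty$ and the exponent is well defined.

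With the equivalence in hand, the radial integral is
\[
\int_0^{K^{1/\sigma}g}r^2\,\dd r=\tfrac13 K^{3/\sigma}g^3.
\]
The angular integral is then rewritten using the total angular mass $\int\!\!\int\cos\phi\,\dd\theta\,\dd\phi=4\pi$, so that $\cos\phi\,\dd\theta\,\dd\phi/(4\pi)$ is exactly the joint law of $(\Theta,\Phi)$ from Lemma~\ref{lemma-1}. This yields the exponent
\[
\lambda\cdot\tfrac{4\pi}{3}\,\mathbb{E}_{\Phi,\Theta,K}\big[K^{3/\sigma}g(\Theta,\Phi,\tau)^3\big].
\]
Since $K$ is independent of the angles, this expectation factorizes as $\E[K^{3/\sigma}]\,\E[g^3]$. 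That completes the plan, and the coverage probability follows by complementation.
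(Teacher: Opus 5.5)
Your proposal is correct and follows the paper's proof essentially step for step: the PGFL of the marked PPP written in spherical coordinates with Jacobian $r^2\cos\phi$, the cut-off equivalence $\{U\ge\tau\}\equiv\{r\le K^{1/\sigma}g(\theta,\phi,\tau)\}$, the radial integral $\tfrac13K^{3/\sigma}g^3$, and the normalization of the angular mass $4\pi$ into the law of $(\Theta,\Phi)$ from Lemma~\ref{lemma-1}. Your explicit handling of the degenerate case $g=0$ for $U_2$ and of the boundedness of $g$ is a small addition the paper leaves implicit; note only that, as in the paper, replacing $\Gamma_i$ by $\Gamma_1(\Theta_i,\Phi_i)$ is a distributional identity rather than a pointwise one, which suffices because only the law of $\cos\Gamma$ survives the angular integration.
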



\begin{proof}
As for the 2D setting, we use the PGFL of the max shot-noise in a 3D Poisson setting and obtain
\begin{equation*}
    \resizebox{\hsize}{!}{$\begin{aligned}
    &\mathbb{P}\left(\max_i U_i(\Theta_i, \Phi_i, R_i, K_i) < \tau \right) \\
    &{=} \exp \left( - \lambda \int_{ \mathbb{R}^{3}} \E_K\left[\mathbb{P}(U(\theta, \phi, r, K) {\geq} \tau) \right] \dd \underbar{x} \right) 
    \\
     &{=} \exp \left( - \lambda \int_{-\pi/2}^{\pi/2}  \int_0^{2\pi} \int_0^\infty \E_{K}\left[\mathbb{P}(U(\theta, \phi, r, K) {\geq} \tau)\right] r^2 \dd r \dd \theta \cos \phi \dd \phi \right).    
    \end{aligned}$}  
\end{equation*}
Using an equivalence similar to \eqref{eq:g-eqv}, there exists a function $g(\theta, \phi,\tau)$ such that
\[
\left\{U(\theta, \phi, r, K) {\geq} \tau\right\} \equiv \left\{r \leq K^{1/\sigma} g(\theta, \phi,\tau)\right\}.
\]
Solving the integral, we get the outage probability 
\begin{align}
\lefteqn{\mathbb{P}\left(\max_i U_i(\Theta_i, \Phi_i, R_i, K_i) < \tau \right)}\nn\\
&{=} \exp \left( - \lambda \int_{-\frac{\pi}{2}}^{\frac{\pi}{2}} \int_{0}^{2 \pi} \E_{K}\left[\int_{r=0}^{K^{1/\sigma}g(\theta, \phi, \tau)} r^2 \dd r \right]  \dd \theta \cos \phi \dd \phi\right)\nn\\
&{=} \exp \left( - \frac{\lambda}{3} \int_{-\frac{\pi}{2}}^{\frac{\pi}{2}} \int_{0}^{2 \pi} \E_{K}\left[K^{\frac{3}{\sigma}} g(\theta, \phi,\tau)^3\right]   \dd \theta \cos \phi \dd \phi  \right)\nn\\
&{=} \exp \left( - \frac{4\lambda \pi}{3} \E_{K}\left[K^{\frac{3}{\sigma}}\right] \mathbb{E}_{\Phi, \Theta}\left[g(\Theta, \Phi, \tau)^3\right]  \right). \tageq \label{3d-DS-stochastic-result}
\end{align}
This proves the result. 
Note that the corresponding success probability  is
\begin{align}
\resizebox{\hsize}{!}{$\mathbb{P}\left(\max_i U_i(\Theta_i, \Phi_i, R_i, K_i) {\geq} \tau \right){=} 1 {-} e^{- \frac{4\lambda  \pi}{3} \mathbb{E}_{K}\left[K^{\frac{3}{\sigma}}\right]\mathbb{E}_{\Phi, \Theta}\left[g(\Theta, \Phi, \tau)^3\right]}.$}\nn
\end{align}
\end{proof}

\begin{remark}\label{cutoff-funcs-3d}
With the coherence time simplified as, $T_c(\t, \phi) = \mu \slash |D(\theta, \phi)|$, the cut-off  distance functions $g(\theta, \phi,\tau)$ for the proposed utilities are determined as follows. For the first  utility $U_1$, the cut-off function is 
\begin{align}
g(\theta, \phi, \tau) &{=} \left(2^{\tau/\min\{T_{c}(\theta, \phi), \zeta \} B} - 1\right)^{-\frac{1}{\sigma}}.\nn
\end{align}
For the second utility $U_2$, the cutoff function is
\begin{align}
g(\theta,\phi,\tau) &= \left(2^{\tau/\left( 1 - \frac{\varepsilon}{T_{c}(\theta, \phi)} \right)^{+} B} - 1\right)^{-\frac{1}{\sigma}}.\nn
\end{align}
%
\end{remark}

\begin{remark}
In general, the expectation of the cut-off distance function is given as
\begin{equation*}
    \mathbb{E}_{\Phi, \Theta} \left[g(\Phi, \Theta, \tau)^3\right] {=} \int_{-\frac{\pi}{2}}^{\frac{\pi}{2}} \frac{1}{2} \cos \phi \int_{0}^{2\pi}  \frac{1}{2 \pi} g(\phi, \theta, \tau)^3   \dd \theta \dd \phi.
\end{equation*}
We substitute the cut-off distances provided in Remark \ref{cutoff-funcs-3d}, corresponding to the respective utility.

For a general Nakagami fading environment with shape parameter $m$ and  scale parameter $\Omega$, the expectation of the fading term $\mathbb{E}_K\left[K^{\frac{3}{\sigma}}\right]$ is
\begin{equation}
\mathbb{E}_K\left[K^{\frac{3}{\sigma}}\right] = \frac{\Omega^{\frac{3}{\sigma}}}{\Gamma(m)} \Gamma\left( m {+} \frac{3}{\sigma} \right).\nn
\end{equation}
For a Rayleigh fading environment with $\Omega=1$ and $m=1$, the expression simplifies to $\mathbb{E}_K[K^{\frac{3}{\sigma}}] = \Gamma\left( 1 + \frac{3}{\sigma} \right).$
The proof of these closed form expressions are shown in Appendix  
F.
\end{remark}

A generalized algorithm summarizing the proposed 2D and 3D Doppler-Shannon framework is provided in Algorithm \ref{Algo1}.

\begin{algorithm}[t]
\SetAlgoLined 
\footnotesize
\KwResult{Coverage probability $P_c(\tau):=\mathbb{P}(\max_i U_i \geq \tau)$}
\textbf{Input}: $\lambda, v, \rho, B, \mu, \zeta, m, \Omega, \sigma, \tau$ 
\\
\textbf{Initialization}: 
For a given $\lambda$, model BS locations as a PPP: $\mathcal{P} = \sum_i \delta_{X_i}$ \\
Location of BSs: $X_i \in \mathbb{R}^d, \ d =2, 3$  \\
For 2D and 3D: i.i.d $\Theta_i\stackrel{d}{\sim} \mbox{U}[0, 2\pi)$ \\
For 3D: $\Phi\stackrel{d}{\sim} \frac{1}{2}\cos \phi \in [-\frac{\pi}{2}, \frac{\pi}{2}]$ \\
\textbf{Sampling the angles}: Generate $N$ i.~i.~d. random variables, having density $\mbox{U}[0,1], \ \underbar{$x$} = (x_0,x_1,\cdots,x_{N-1})= \mbox{rand}(N,1)$ \\
$x_j\stackrel{d}{\sim} \mbox{U}[0,1],\;\; \forall j \in \{0, 1, \ldots N-1\}$ \\
Samples of $\Phi$: $\phi_j = \sin^{-1}{(2x_j - 1)}$ (inverse sampling) \\
Samples of $\Theta$:  $\theta_j = 2\pi \  \mbox{rand}(N,1)$ \\
\textbf{Methodology:} \\
$|D_i| = \frac{v}{\rho} |\cos\Phi_i \cos \Theta_i| $ \\
Compute $T_{c_i} = \frac{\mu}{|D_i|}$ \\
Compute $R_i = ||X_i||$ 
\\
Compute the utility of choice, $U_i = U_{1,i}, U_{2,i}$
\\
In 2D: $U_i = U(\Theta_i, R_i, K_i)$, 
In 3D: $U_i = U(\Theta_i, \Phi_i, R_i, K_i)$ 
\\
For $U_{1,i}:$ $g(\theta, \tau) := \left(2^{\tau/\min\{T_{c_i}, \zeta \} B} - 1\right)^{-\frac{1}{\sigma}}$ \\
For $U_{2,i}:$ $g(\theta, \tau)  := \left(2^{\tau/\left( 1 - \frac{\varepsilon}{T_{c_i}} \right)^{+} B} - 1\right)^{-\frac{1}{\sigma}}$ 
\\
For 2D: $\mathbb{E}_K\left[K^{\frac{2}{\sigma}}\right] = \frac{\Omega^{\frac{2}{\sigma}}}{\Gamma(m)} \Gamma\left( m {+} \frac{2}{\sigma} \right)$ \\
For 3D: $\mathbb{E}_K\left[K^{\frac{3}{\sigma}}\right] = \frac{\Omega^{\frac{3}{\sigma}}}{\Gamma(m)} \Gamma\left( m {+} \frac{3}{\sigma} \right)$ \\
\textbf{Compute the outage probability:} \\
For 2D: $P_{out}(\tau) = \exp\left(- \lambda  \pi   \mathbb{E}_{K} \left[K^\frac{2}{\sigma}\right]\mathbb{E}_{\Theta} \left[g(\Theta, \tau)^2\right]\right)$ \\
For 3D: $P_{out}(\tau) = \exp\left(- \frac{4\lambda  \pi}{3} \mathbb{E}_{K}\left[K^{\frac{3}{\sigma}}\right] \mathbb{E}_{\Phi, \Theta}\left[g(\Theta, \Phi, \tau)^3\right]\right)$
\\
\textbf{Coverage probability:} $P_c(\tau) = 1  - P_{out}(\tau)$
\caption{Coverage computation methodology}\label{Algo1}
\end{algorithm}

\subsection{Classical Shannon association in 3D}
Let $R_o$ be the distance to the nearest BS. Then $R_o$ follows a  density, for $r\geq 0$, given by (See Appendix 
I 
for the proof)
\begin{equation}
f_{R_o}(r) {=}  4 \pi \lambda r^2 e^{- \frac{4 \pi \lambda}{3}r^3}.\nn
\end{equation}
%
%

\begin{theorem}\label{theorem-9}
In a 3D PPP setting, the coverage probability in the proposed Doppler-Shannon association policy is always greater than or equal to the classical Shannon based association for any QoS $\tau$ unit, i.e.,  
    \begin{align} 
    \mathbb{P}\left(\max_i U(\Theta_i, \Phi_i, R_i, K_i) {\geq} \tau \right)  &{\geq} \mathbb{P}(U(\Theta_o, \Phi_o, R_o, K_o) {>} \tau).\nn
\end{align}
\end{theorem}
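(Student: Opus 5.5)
The plan is to mirror the proof of Theorem~\ref{theorem:CovP-S-DS} in three dimensions. The ingredients are the Doppler-Shannon outage formula of Theorem~\ref{theorem-8}, an explicit conditional computation for the nearest-BS policy using the density $f_{R_o}(r)=4\pi\lambda r^2 e^{-\frac{4\pi\lambda}{3}r^3}$, and Jensen's inequality applied to the convex function $x\mapsto e^{-x}$.

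\textbf{Nearest-BS outage.} Condition on the angles $\Theta_o,\Phi_o$ and the fading $K_o$ of the nearest BS. These marks are independent of $R_o$, with the same laws as the generic marks: $\Theta_o\sim\mathrm{U}[0,2\pi)$, $\Phi_o$ with density $\frac12\cos\phi$ by Lemma~\ref{lemma-1}, and $K_o$ distributed as $K$. The cut-off equivalence used in the proof of Theorem~\ref{theorem-8}, $\{U(\theta,\phi,r,K)\ge\tau\}\equiv\{r\le K^{1/\sigma}g(\theta,\phi,\tau)\}$, holds because every utility is a nonincreasing function of $r$ times a factor depending only on the angles. Using it, I write
\begin{equation*}
\mathbb{P}\big(U(\Theta_o,\Phi_o,R_o,K_o)\ge\tau \,\big|\, \Theta_o,\Phi_o,K_o\big)=\int_0^{K_o^{1/\sigma}g(\Theta_o,\Phi_o,\tau)} 4\pi\lambda r^2 e^{-\frac{4\pi\lambda}{3}r^3}\,\dd r .
\end{equation*}
The substitution $u=\frac{4\pi\lambda}{3}r^3$ turns this into $1-\exp\!\big(-\frac{4\pi\lambda}{3}K_o^{3/\sigma}g(\Theta_o,\Phi_o,\tau)^3\big)$. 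Averaging over the marks then gives the nearest-BS outage
\begin{equation*}
\mathbb{E}_{\Theta,\Phi,K}\Big[e^{-\frac{4\pi\lambda}{3}K^{3/\sigma}g(\Theta,\Phi,\tau)^3}\Big].
\end{equation*}

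\textbf{Comparison.} Theorem~\ref{theorem-8} gives the Doppler-Shannon outage as $\exp\!\big(-\frac{4\pi\lambda}{3}\mathbb{E}_{\Theta,\Phi,K}[K^{3/\sigma}g^3]\big)$. Since $x\mapsto e^{-x}$ is convex, Jensen's inequality gives
\begin{equation*}
\exp\!\big(-\mathbb{E}[X]\big)\le \mathbb{E}\big[e^{-X}\big], \qquad X=\tfrac{4\pi\lambda}{3}K^{3/\sigma}g(\Theta,\Phi,\tau)^3\ge 0 .
\end{equation*}
So the Doppler-Shannon outage is at most the nearest-BS outage. Taking complements yields the claimed inequality between coverage probabilities.

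\textbf{Expected obstacle.} The step needing care is whether the marks $(\Theta_o,\Phi_o,K_o)$ of the nearest point really have the generic mark distribution and are independent of $R_o$. For an independently marked PPP this follows from the marking theorem, with the angles being just the direction of $X_o$, which is uniform on the sphere. I would also address two edge cases. First, $\mathbb{E}[X]$ may be infinite, for instance when $g=\infty$ on a positive-measure set because $T_c\to\infty$ without the limiting factor $\zeta$; both outages are then $0$ and the inequality is trivial. Second, the statement mixes the events $\ge\tau$ and $>\tau$; the distinction only matters on a $\tau$-null set, or can be absorbed by the nonstrict cut-off equivalence. As a sanity check independent of the formulas, $\max_i U_i\ge U_o$ pathwise, which already gives the inequality directly.
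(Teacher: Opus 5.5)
Your proposal is correct and follows essentially the same route as the paper: condition on the nearest BS's angles and fading, integrate $f_{R_o}$ up to the cut-off $K_o^{1/\sigma}g(\Theta_o,\Phi_o,\tau)$, and compare the resulting nearest-BS outage with the outage of Theorem~\ref{theorem-8} via Jensen's inequality for $x\mapsto e^{-x}$. One minor slip in your edge-case remark: when $\mathbb{E}[X]=\infty$ only the Doppler-Shannon outage vanishes, while the nearest-BS outage $\mathbb{E}[e^{-X}]$ need not be $0$; the inequality is nevertheless still trivial in that case.
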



\begin{proof}
Given the 
azimuth and elevation angles $(\Theta_o, \Phi_o)$ and the fading factor $K_o$, the coverage probability from the nearest BS, at a distance say $R_o$, is  
\begin{align}
\lefteqn{\mathbb{P}\left(U(\Theta_o, \Phi_o, K_o, R_o) > \tau | \Theta_o, \Phi_o, K_o\right)} \nn \\
& =  \int_{0}^{\infty} 4 \pi \lambda r^2 e^{- \frac{4 \pi \lambda}{3} r^3} \one{\left\{r \leq K_o^\frac{1}{\sigma}g(\theta_o, \phi_o,\tau)\right\}}  \dd r\nn\\
&=  \int_{0}^{K_o^{1/\sigma}g(\theta_o, \phi_o,\tau)}  4 \pi \lambda r^2 e^{- \frac{4 \pi \lambda}{3}r^3}\dd r. \label{eq:s1-3d}
\end{align}
Substituting $\frac{4 \pi \lambda}{3} r^3 = u$, the integral is modified as 
\begin{align}
\int_{0}^{\frac{4 \pi\lambda}{3}  K_o^{\frac{3}{\sigma}}g(\theta_o, \phi_o, \tau)^3} \hspace{-0.3in}e^{-u} \dd u = 1 - e^{- \frac{4 \pi\lambda}{3} K_o^{\frac{3}{\sigma}}g(\theta_o,\phi_o,\tau)^3}. 
\label{eq:s2-3d}
\end{align}
Therefore from (\ref{eq:s1-3d}) and (\ref{eq:s2-3d}) the outage probability is
\begin{equation}
\mathbb{P}\left(U(\Theta_o, \Phi_o, K_o, r_o) {\leq} \tau\right) {=}  \mathbb{E}_{\Theta_o, \Phi_o, K_o} \!\!\left[e^{- \frac{4 \pi\lambda}{3}  K_o^{\frac{3}{\sigma}}g(\Theta_o, \Phi_o,\tau)^3} \right]. \tageq 
\label{3d-shannon-outage}
\end{equation}
Using Jensen's inequality we have in general that
\begin{align}
e^{- \frac{4\lambda \pi}{3} \mathbb{E}_{\Phi, \Theta, K}  \left[K^{\frac{3}{\sigma}}g(\Theta,\Phi,\tau)^3\right]} &\leq  \mathbb{E}_{\Theta, \Phi, K} \left[e^{- \frac{4 \pi\lambda}{3} K^{\frac{3}{\sigma}} g(\Theta, \Phi, \tau)^3} \right],\label{eq:jensen-3d}
\end{align}
Comparing the outage probabilities in the proposed Doppler-Shannon framework and in the classical Shannon framework as derived in~\eqref{3d-DS-stochastic-result} and  \eqref{3d-shannon-outage}, respectively, and using (\ref{eq:jensen-3d}) we get
\begin{align} \mathbb{P}\left(\max_i U_i(\Theta_i, \Phi_i, R_i, K_i) {<} \tau \right)  &\leq \mathbb{P}(U(\Theta_o, \Phi_o, K_o, r_o) {\leq} \tau).\nn
\end{align}
\end{proof}

\begin{figure}[t]
    \centering
    \subfloat[][]{
    \includegraphics[width=0.30\textwidth]{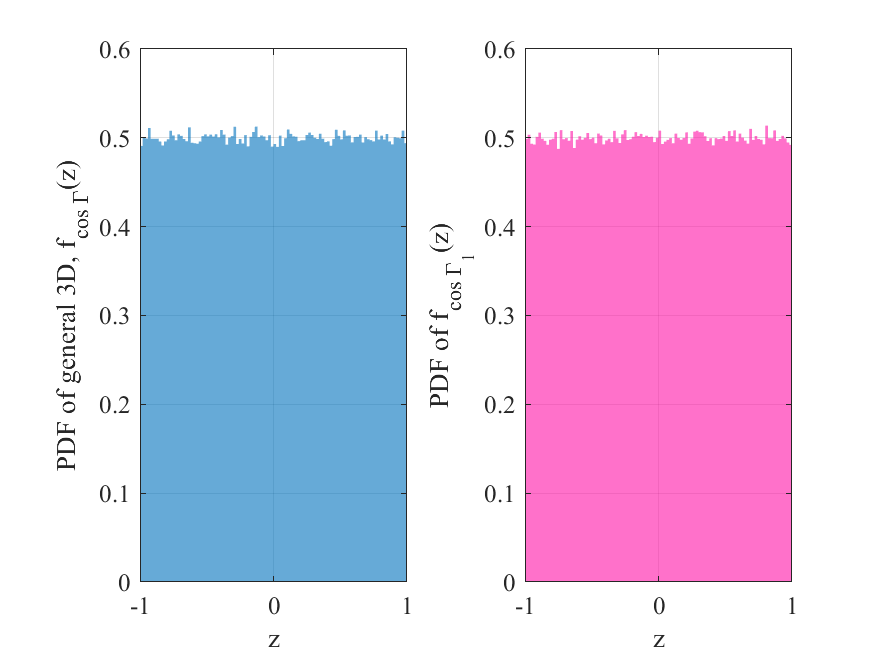}
    }
    \\
    \subfloat[][]{
\includegraphics[width=0.30\textwidth]{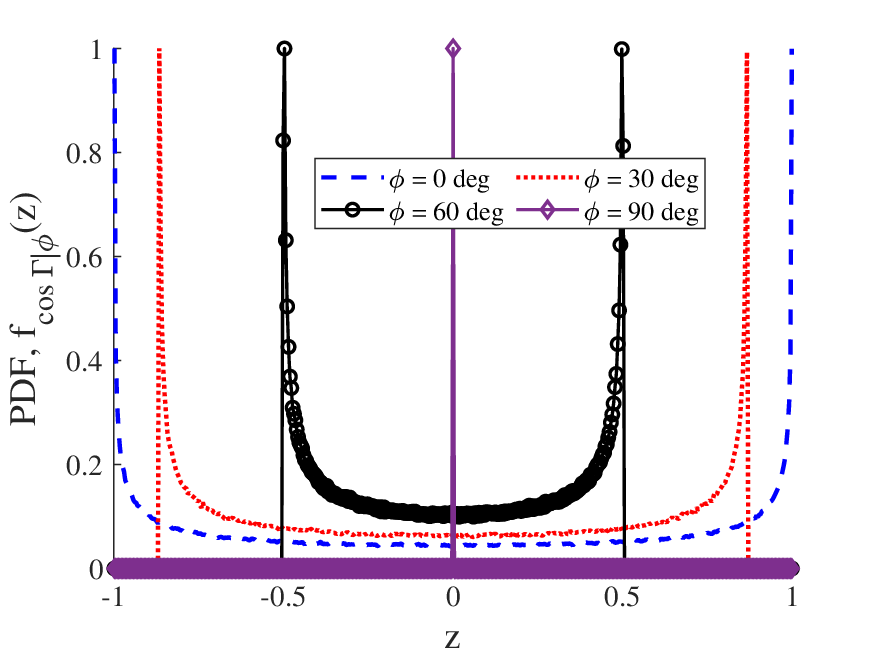}}
    \caption{
    (a) Simulated PDF of $\cos\Gamma$ in generalized 3D case,   \eqref{3d-gamma}, and its comparison with  \eqref{3d-gamma-case2}, (b) Simulation of the conditional PDF of $f_{\cos \Gamma_1|\Phi=\phi}(z)$ in  \eqref{3d-gamma-case2}.} 
    \label{fig:3}
\end{figure}




\section{Results and Discussion}\label{secVI}
    

%

\begin{figure*}[t]
    \centering
    \subfloat[][]{
    \includegraphics[width=0.38\textwidth]{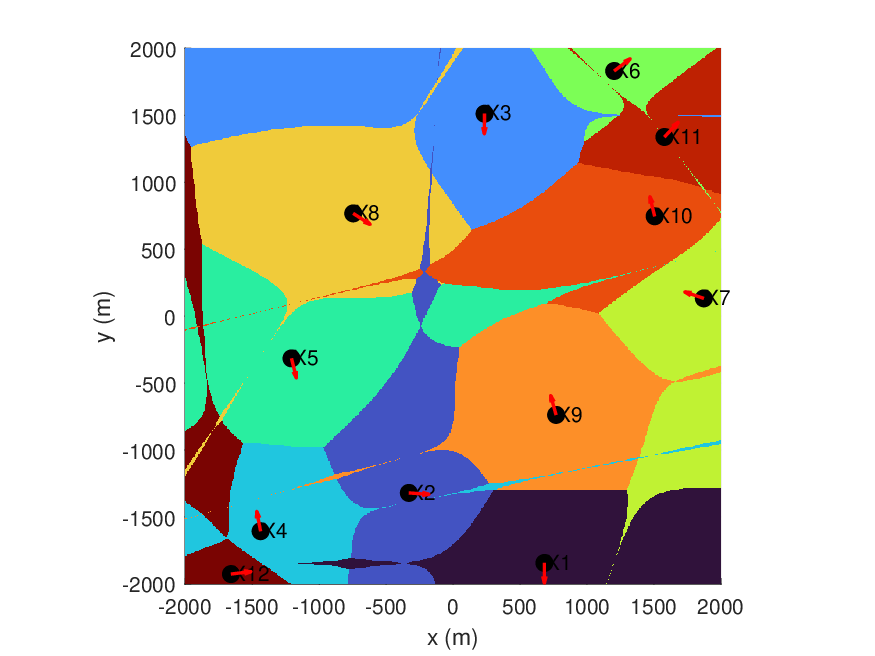}}
    \subfloat[][]{
    \includegraphics[width=0.38\textwidth]{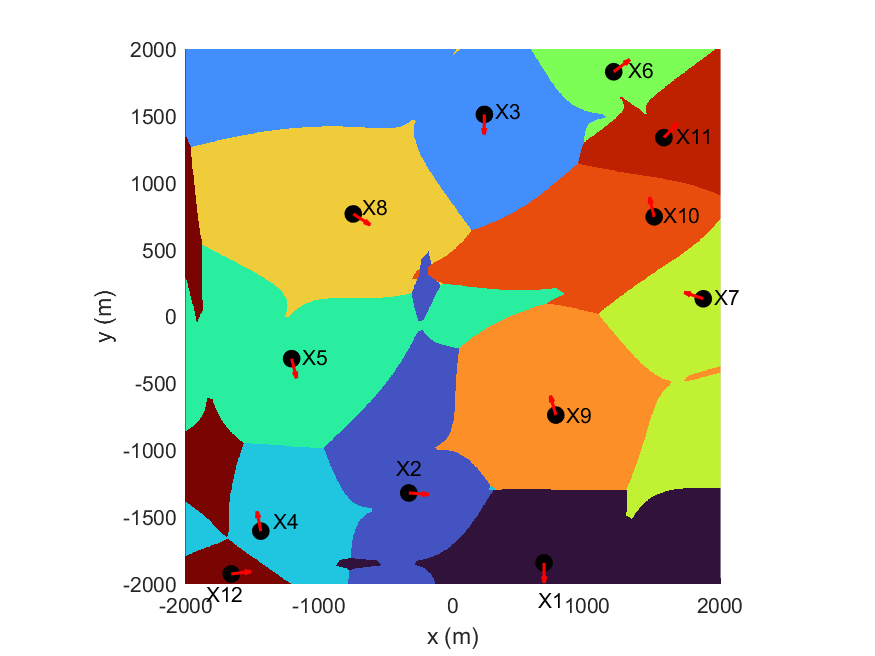}}
    \caption{2D tessellations obtained through  Utility \ref{utility1-proposition} with PLE, $\sigma = 4$ (a) in the absence of limiting factor $\zeta$. Note that the elongated ``rays'' are perpendicular to the DoM of the BS, (b) with limiting factor, $\zeta = 5\times 10^{-3}$. Note that as the limiting factor decreases, the tessellations start becoming more convex eventually looking exactly like the classical Shannon Voronoi.}
    \label{fig:4}
\end{figure*}

\begin{figure*}[t]
    \centering
    \subfloat[][]{
    \includegraphics[width=0.33\textwidth]{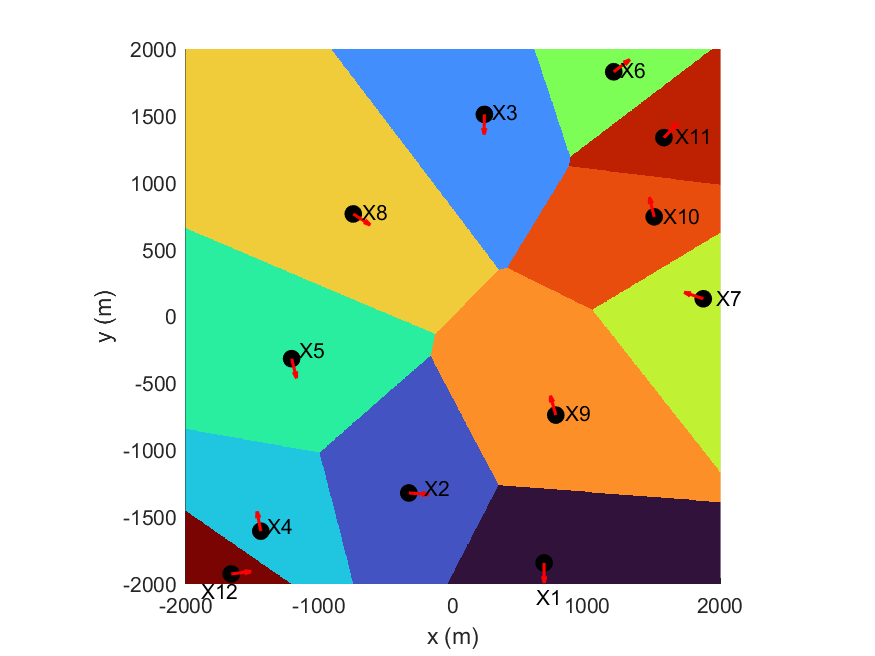}}
    \subfloat[][]{
    \includegraphics[width=0.33\textwidth]{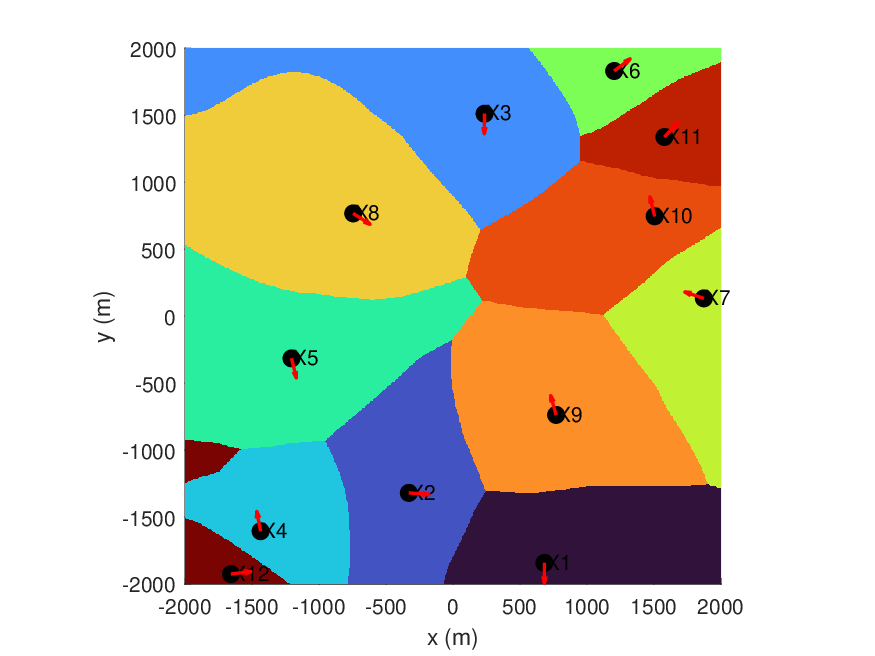}}
    \subfloat[][]{
    \includegraphics[width=0.33\textwidth]{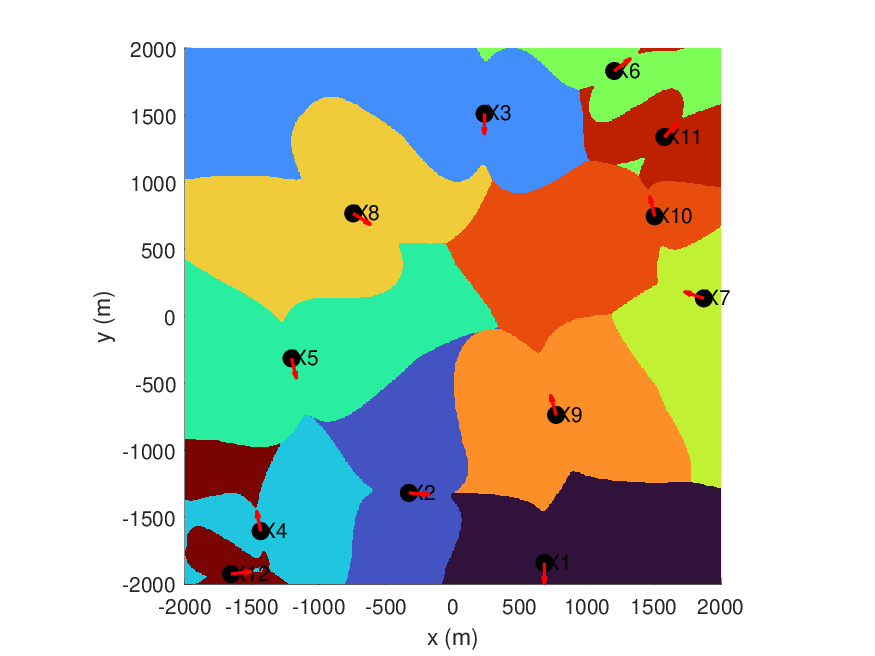}}
    
    \caption{(a) classical Shannon Voronoi tessellations observed through Utility~\ref{utility3-proposition} at $\varepsilon = 10^{-6}$, (b) non-convexity of tessellations obtained through Utility \ref{utility3-proposition}, $\varepsilon=5\times 10^{-5}$, (c) highly non-convex tessellations obtained through Utility~\ref{utility3-proposition} at $\varepsilon = 10^{-4}$ (very inefficient channel estimation interval). 
    }
    \label{fig:5}
\end{figure*}

\begin{figure}[t]
    \centering
    \subfloat[]{
    \includegraphics[width=0.24\textwidth]{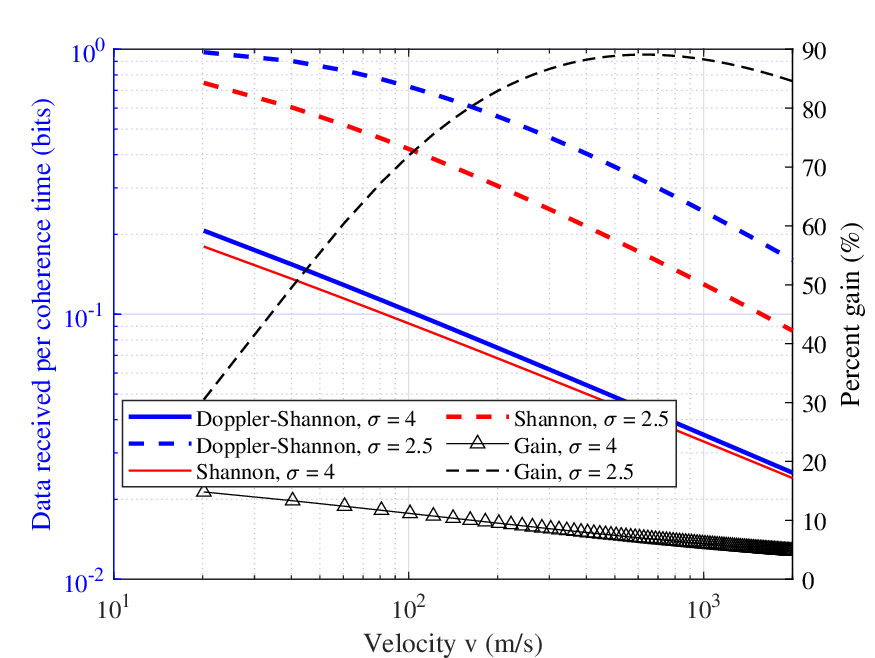}}
    \subfloat[]{
    \includegraphics[width=0.24\textwidth]{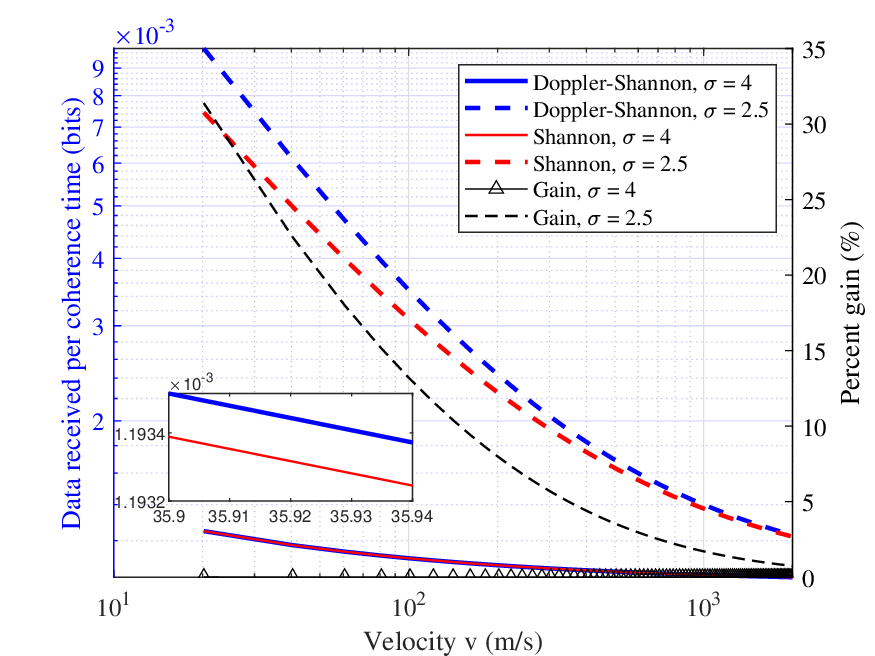}}
    \caption{Performance evaluation of Policy \ref{utility1-proposition} against velocity in a 2D PPP setting with PLE $\sigma = \{2.5, 4\}$ at 
    (a) $\lambda = 10^{-3}$ and (b) $\lambda = 10^{-6} $.
    }
    \label{fig:6}
\end{figure}

\begin{figure*}[t]
    \centering
    \subfloat[]{
    \includegraphics[width=0.25\textwidth]{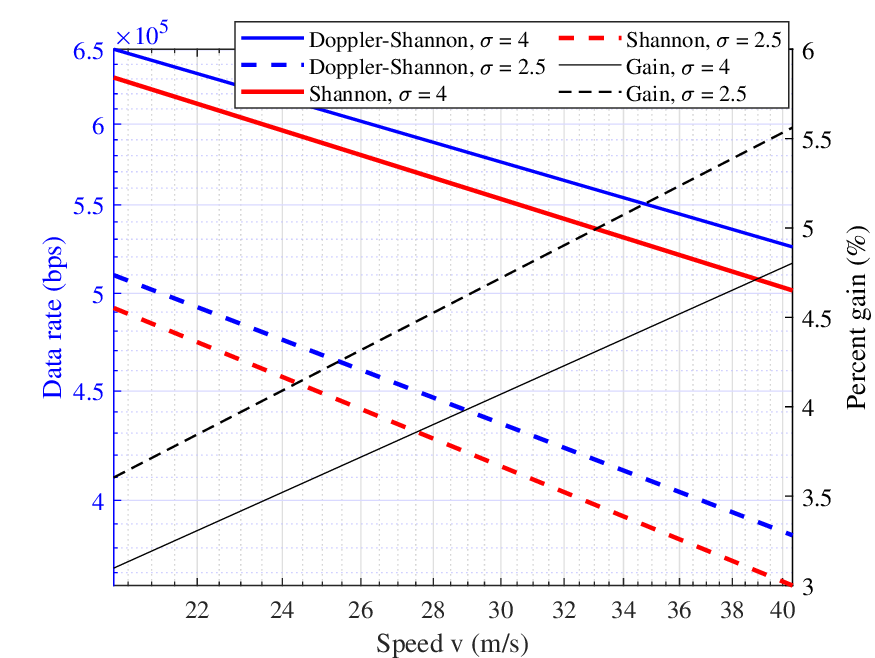}}
    \subfloat[]{
    \includegraphics[width=0.25\textwidth]{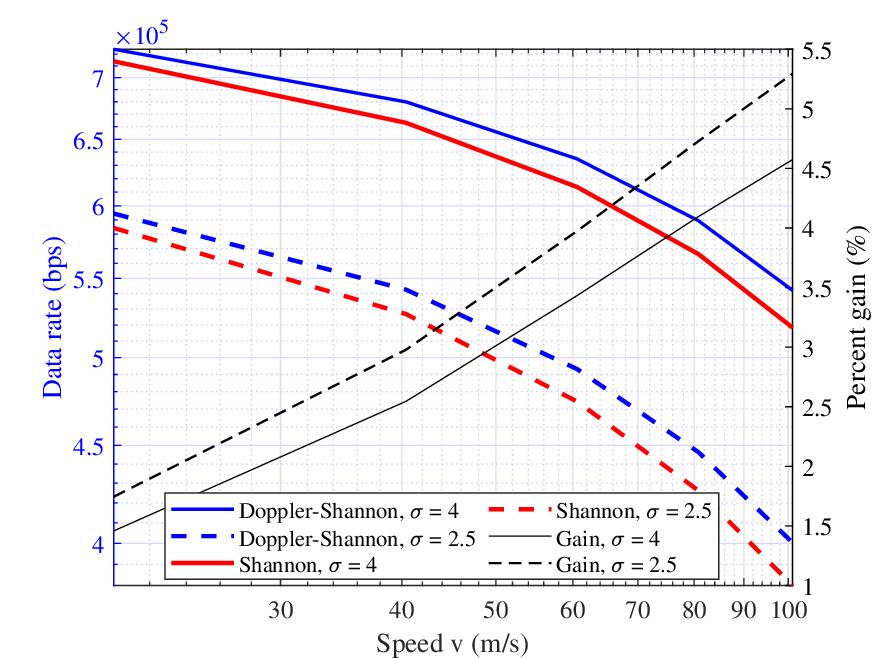}}
    \subfloat[]{
    \includegraphics[width=0.25\textwidth]{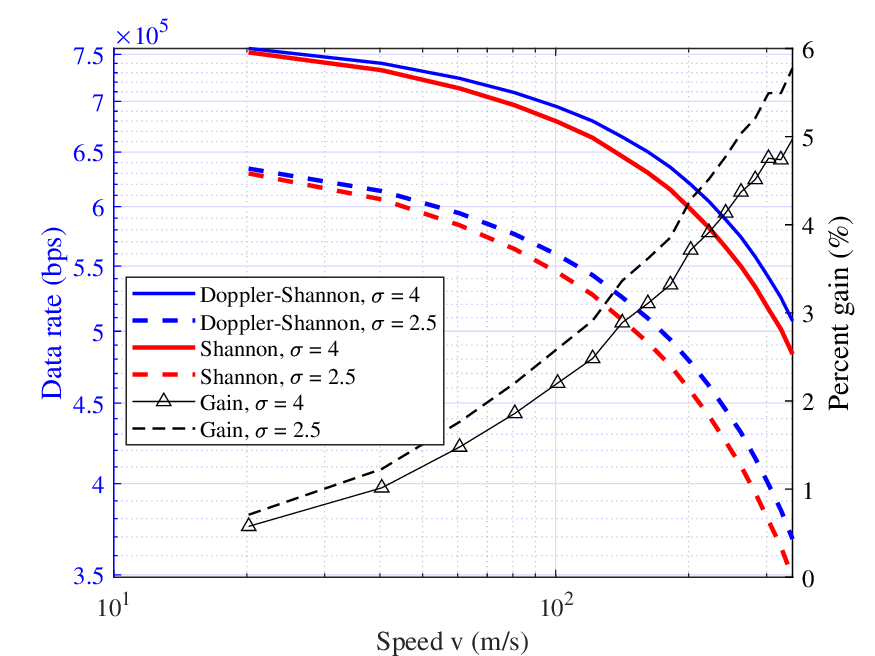}}
    \subfloat[]{
    \includegraphics[width=0.25\textwidth]{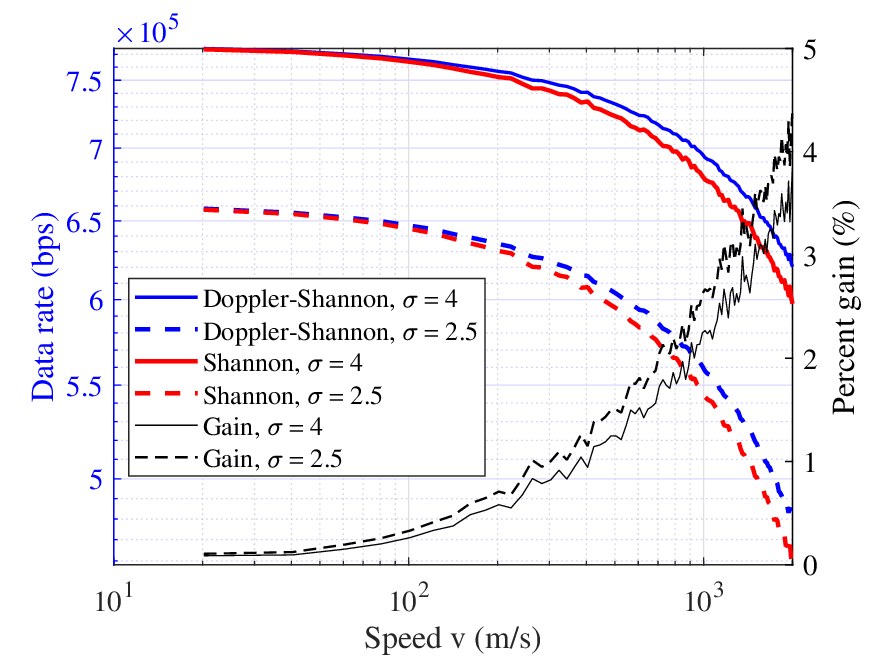}}
    \caption{Performance evaluation of Utility \ref{utility3-proposition} against velocity under 2D PPP setting with PLE $\sigma = \{2.5, 4\}$ at (a) $\varepsilon = 8\times 10 ^{-4}, \lambda = 1$, (b) $\varepsilon = 3 \times 10^{-4}, \lambda = 1$, (c) $\varepsilon = 10^{-4}, \lambda = 1$, (d) $\varepsilon = 10^{-5}, \lambda  = 1$ .}
    \label{fig:8}
\end{figure*}

%
In this section, we present the key findings described analytically previously through simulations. The simulations have been performed in MATLAB-R2024b. The parameter values considered are: $m = 1$, $\Omega = 1$,  $\sigma = 4$, $f_c = 3.5 \mbox{GHz}$, $B = 240 \mbox{KHz}$, $\mu = 0.423$, \ \text{and} \ $c = 3 \times 10^8$. The norm of velocity and the density of the BSs are running variables, and are indicated in the corresponding plots.  

\subsection{Validation of the Doppler distribution results}

From Fig. \ref{fig:3}(a), we validate the analytical proof of Corollary \ref{cor:cor2}. 
It can be clearly observed from Fig. \ref{fig:3}(a) that, the density of the cosine of the Doppler angle  is uniformly distributed in $[-1, 1]$, as claimed in Theorem \ref{theorem-5}. From Fig. \ref{fig:3}(a), we validate the result through \eqref{eq:corollary2}, that in a 3D setting,  when the UE lies on the plane of the BS as in \eqref{3d-gamma-case2}, then the Doppler shift  is statistically equal to the general Doppler shift in \eqref{3d-gamma}.

In Fig. \ref{fig:3}(b), we illustrate the relation between the densities of the Doppler shift in the 3D and 2D settings.  When we condition the cosine component of the Doppler angle, $\cos \Gamma_1$, in \eqref{3d-gamma-case2} with respect to the elevation angle $\phi$, it is observed in Fig. \ref{fig:3}(b) that the resulting densities are arc-sines in nature (as in the 2D scenario, Theorem \ref{theorem1}). Furthermore, for a given elevation angle, it is  inferred that, as the elevation angle $\phi$ increases, the Doppler spread (i.e., the width of the Doppler shift density) decreases. At an elevation of $\phi = 90\deg$, the density becomes an impluse centered around zero. 
\textit{This result also shows the changing nature of the  Doppler spectrum with respect to elevation angle, and thereby the random coherence time intervals (in the time domain) as a UE switches from a mobile node to another. }



\subsection{2D network performance}\label{results-2DT}
In Figs. \ref{fig:4} and \ref{fig:5}, we illustrate the 2D coverage tessellations observed through the novel  Doppler-Shannon association policies proposed in Section \ref{secV}. Note that for better visualization of the tessellation boundaries, we consider unit fading coefficient for all the BSs. In an ideal scenario, the fading will result in blurring the tessellation boundaries.

We start by Fig.~\ref{fig:4}~(a) which depicts the non-convex coverage tessellations obtained for the mobile BSs through Policy \ref{utility1-proposition} in the absence of limiting factor. It can be clearly seen that the tessellations get elongated at a perpendicular angle to the DoM. This is because, as the UE LoS to the BS becomes perpendicular to the BS's DoM (i.e., the BS is moving perpendicular to the UE LoS at that instant), the corresponding Doppler shift is the minimal, resulting in maximal coherence time and thereby maximal utility. 

In the presence of a limiting factor $\zeta$ (\ref{fig:4}(b)), the elongated rays emerging from the tessellations  as seen in  \ref{fig:4}(a) get significantly reduced. It has been observed that decreasing $\zeta$ further, leads to the classical Shannon based Voronoi tessellation by minimizing the Doppler effects on the coverage tessellations. 

The coverage tessellations for Utility \ref{utility3-proposition} are shown in Figs. \ref{fig:5}(a)--(c). It is observed that at $\varepsilon = 10^{-6}$ (i.e., very small channel estimation time interval and hence very efficient channel estimation), the effects of the Doppler shift are not visible on the tessellations. In this case, the tessellations resemble the classical Shannon-Voronoi tessellation. As the channel estimation time increases (i.e., the channel estimation efficiency becomes poorer), the effects of Doppler shift are very clearly visible on the tessellation shapes ($\varepsilon = 5\times 10^{-5}$, Fig. \ref{fig:5}(b)). The tessellations are observed to completely loose their convexity at very poor channel estimation efficiency ($\varepsilon = 10^{-4}$, Fig.~\ref{fig:5}~(c)).

\begin{remark}
    In vehicular networks, the Doppler effect  leads to non-convex coverage tessellations. When coupled with channel estimation, very good estimation (lower estimation time) reduces the effect of the Doppler shift on system performance, while Doppler effect is more prominently visible at poor channel estimations. 
\end{remark}

Through Figs. \ref{fig:6} and \ref{fig:8}, we illustrate the network performance obtained in a 2D PPP setting, against the  association policies proposed in Section V. \textit{In general, for all the policies, the performance obtained through the Doppler-Shannon framework is always higher than or equal to the Shannon framework, as proved analytically in Theorem \ref{theorem:CovP-S-DS}.} Further, the network performance is observed to decrease with increasing velocity norm. Furthermore, the density of points decrease, the Doppler-Shannon and the Shannon frameworks are seen to be converging. This behavior is because with decreasing density of the BS points from the UE, the association Policy does not have much of a selection choice and the effect of Doppler diminishes, resulting in convergence of the plots. 

We have evaluate the network performance with two different PLE, $\sigma = 2.5$ and $\sigma = 4$. Note that a lower  pathloss exponent (here $\sigma = 2.5$) indicates fewer clutter/ obstacles between the UE and BS, thereby lower attenuation of the transmit signal when received at the UE.  
For UE associations based on Policy \ref{utility1-proposition} (arising from Utility \ref{utility1-proposition}, Fig. \ref{fig:6}), it is observed that with a PLE $\sigma = 2.5$, the proposed Doppler-Shannon framework obtains maximum gains up to $90\%$ at $\lambda = 10^{-3}$ (Fig. \ref{fig:6}(a)). 
With a higher PLE $(\sigma = 4)$, the gains obtained from the Doppler-Shannon framework over the Shannon framework reduce to around $14\%$. 
At a reduced $\lambda = 10^{-6}$, the gains obtained with both the PLE decrease in general (Fig. \ref{fig:6}(b)), with the Doppler-Shannon framework gaining up to $35\%$ at $\sigma = 2.5$. In general, it is seen that as the BS density decreases, the Doppler-Shannon and Shannon frameworks converge, in accordance with the analytical results proved through  Jensen's inequality in Theorem \ref{theorem:CovP-S-DS}.



\begin{figure}[t]
    \centering
    \subfloat[]{
    \includegraphics[width=0.25\textwidth]{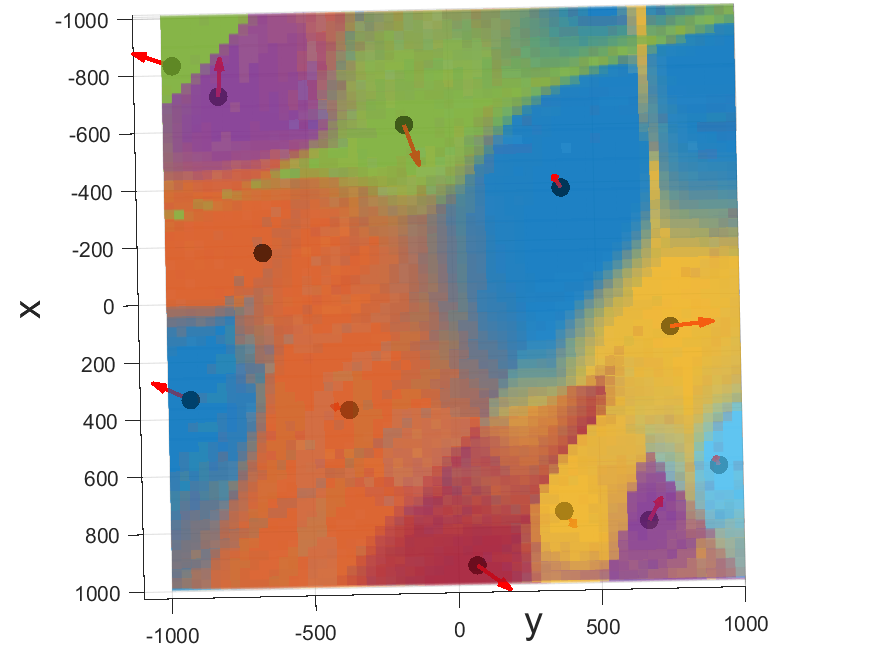}}
    \hspace{-0.35in}
    \subfloat[]{
    \includegraphics[width=0.25\textwidth]{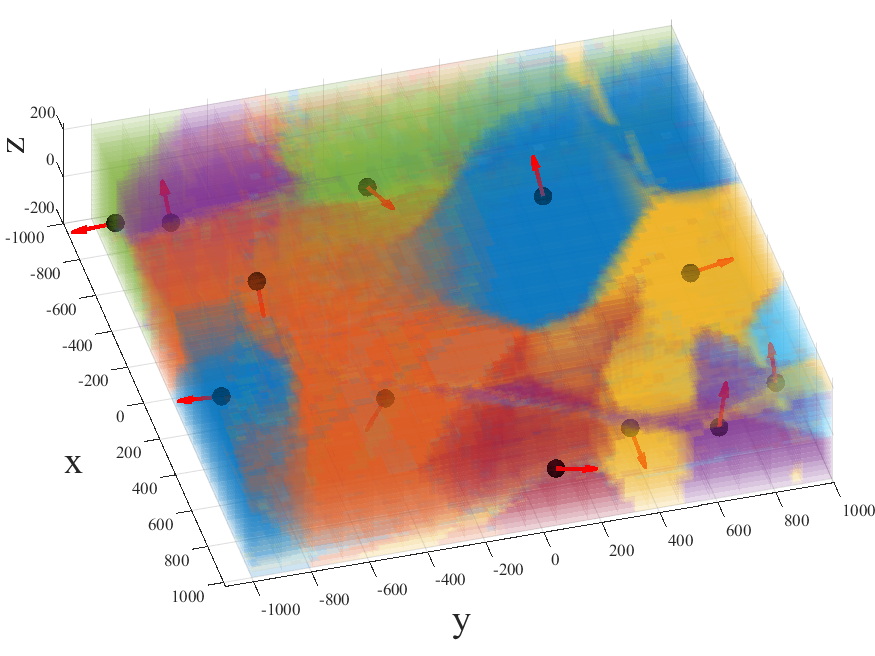}}
    \caption{3D tesselations originating from associations through Policy \ref{utility1-proposition} (a) top view, (b) side view. }
    \label{fig:9}
\end{figure}

\begin{figure}[t]
    \centering
    \subfloat[]{
\includegraphics[width=0.25\textwidth]{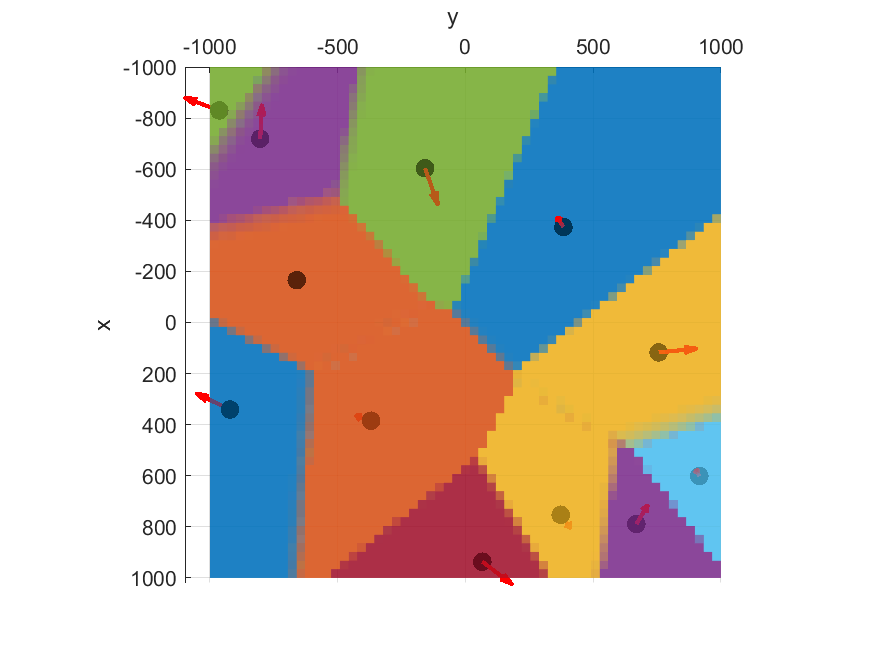}}
    \subfloat[]{    \includegraphics[width=0.25\textwidth]{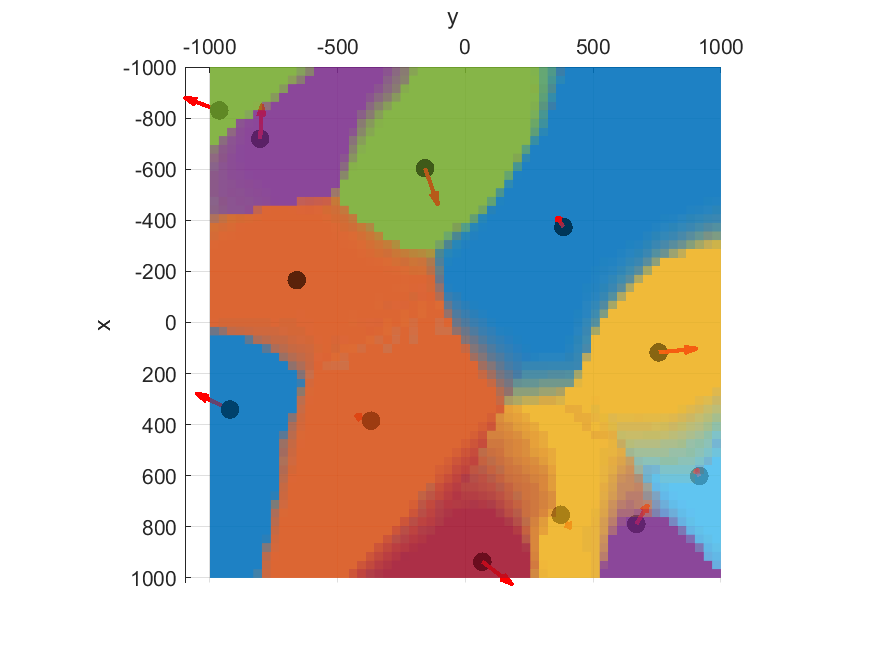}}
    \caption{3D tesselations (top views) originating from associations through Utility \ref{utility3-proposition} (a) very good channel estimation (effect of Doppler not visible), (b) poor channel estimation (effect of Doppler on the tessellations is clearly visible. }
    \label{fig:11}
\end{figure}

\begin{figure}[t]
    \centering
    \subfloat[]{
    \includegraphics[width=0.24\textwidth]{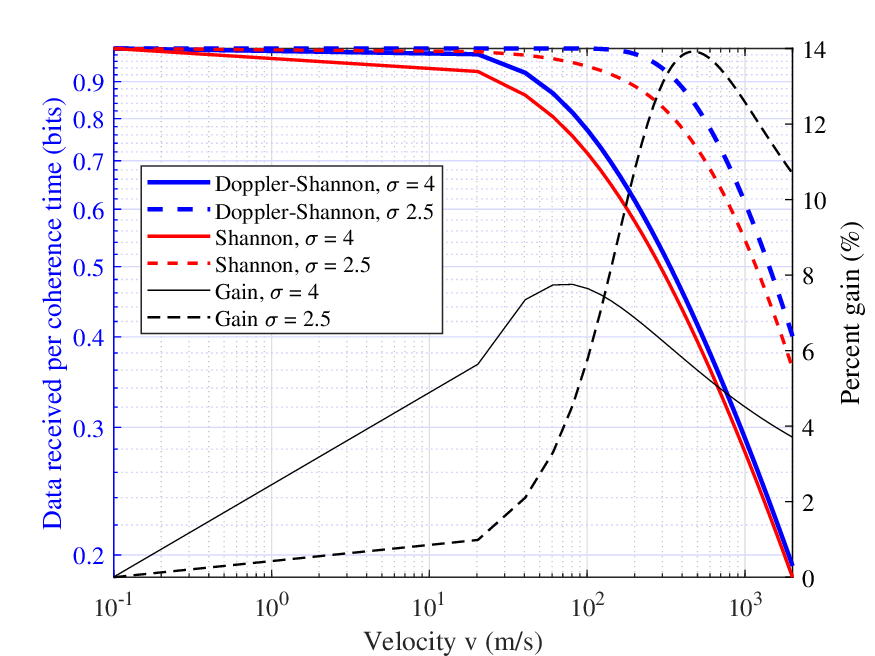}}
    \subfloat[]{
    \includegraphics[width=0.24\textwidth]{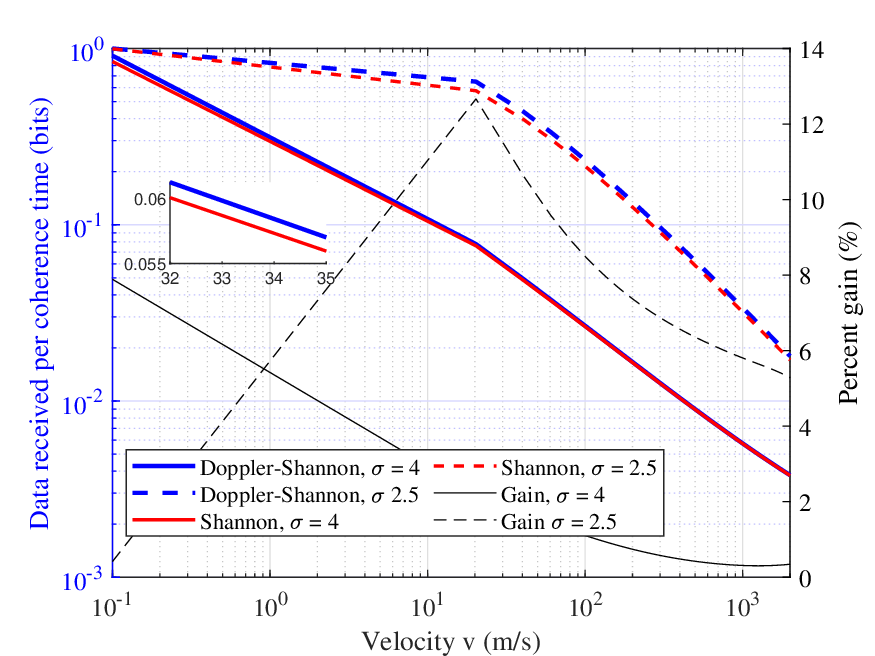}}
    \caption{Performance evaluation of Policy \ref{utility1-proposition} against velocity under 3D PPP setting with PLE $\sigma = \{2.5, 4\}$ at (a) $\lambda = 10^{-2}$ 
    and (b) $\lambda = 10^{-4}$.}
    \label{fig:12}
\end{figure}

\begin{figure*}[t]
    \centering
    \subfloat[]{
\includegraphics[width=0.30\textwidth]{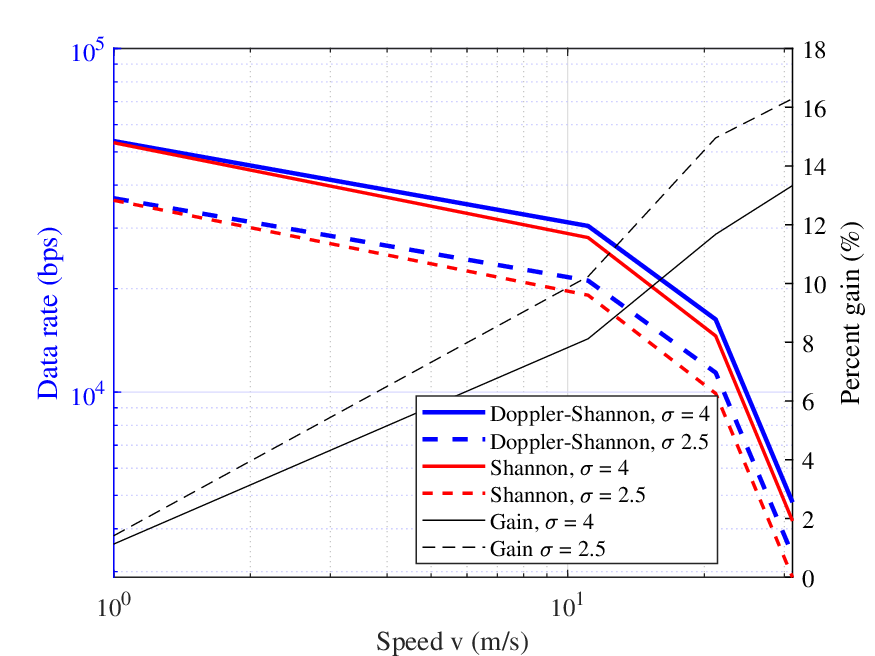}}
    \subfloat[]{
\includegraphics[width=0.30\textwidth]{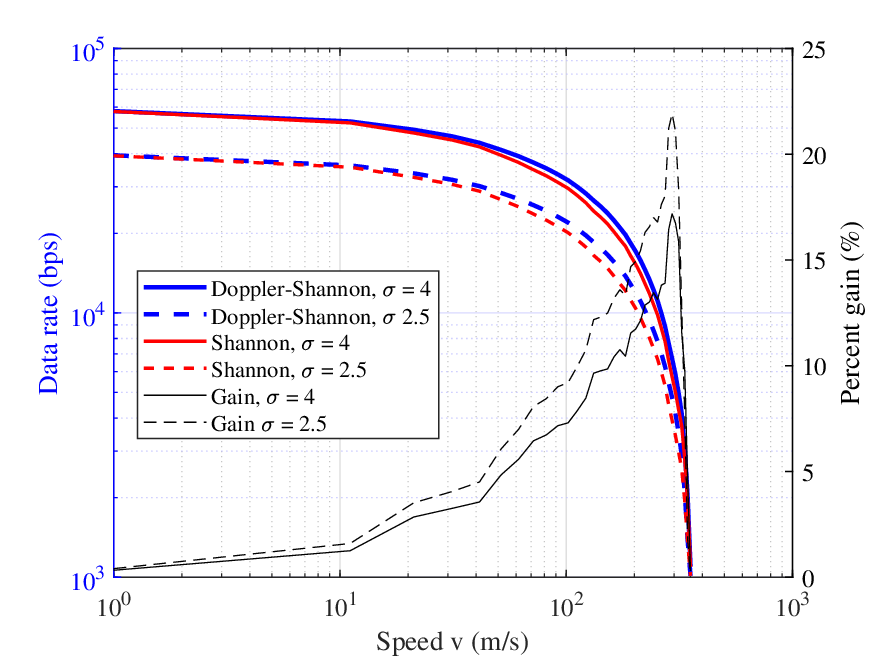}}
    \subfloat[]{
\includegraphics[width=0.30\textwidth]{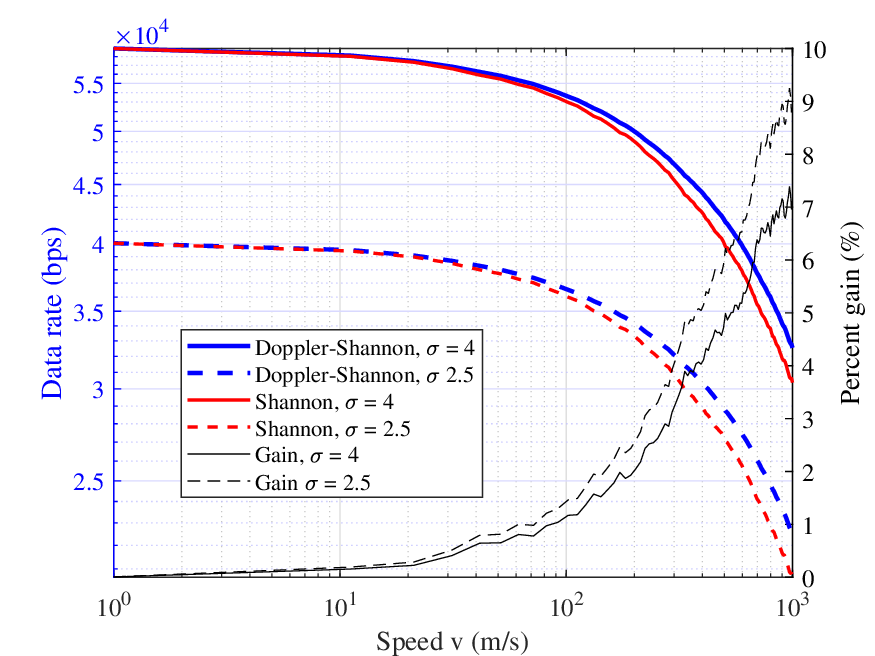}}
    \caption{Performance evaluation of Utility \ref{utility3-proposition} against velocity under 3D PPP setting with PLE $\sigma = \{2.5, 4\}$ at (a) $\varepsilon = 10^{-3}, \lambda = 1 $, (b) $\varepsilon= 10^{-4}, \lambda = 1$, (c) $\varepsilon = 10^{-5}, \lambda = 1$.}
    \label{fig:14}
\end{figure*}

With Utility \ref{utility3-proposition} (Fig. \ref{fig:8}), the network performance is plotted not only w.r.t. the BS densities but also the channel estimation time interval. 
It is observed that, with a PLE of $\sigma = 2.5$, the Doppler-Shannon framework obtains higher gains by around  1\% as compared to that obtained from $\sigma = 4$.  
Overall, the Doppler-Shannon framework is observed to provide rate gains by around $5\%$ over the Shannon framework.

\textit{Interestingly, it is observed that the quality of channel estimation has an effect on the magnitude of velocity that the system can support.} For instance, at very poor channel estimation efficiency (higher channel time) as in Fig. \ref{fig:8}(a), the system is observed to support only up to 40 mps. As the channel estimation quality improves, the system can be observed to support higher mobility speeds (around 2000 mps, i.e., speed of LEO satellite,  Fig. \ref{fig:8}(d)). Further, at very efficient channel estimation, the performance of the Doppler-Shannon and the Shannon frameworks are seen to converge (Fig. \ref{fig:8}(d)). This result also validates the observation previously made on the coverage tessellations in Fig. \ref{fig:5}. 
%
%


\subsection{3D network performance}

Figs. \ref{fig:9} and \ref{fig:11} illustrate the coverage tessellations in 3D.  They can be observed in parallel with the 2D tessellations depicted in Figs. \ref{fig:4}--\ref{fig:5}. For instance, the top view of the 3D tesselation originating from Policy \ref{utility1-proposition} in Fig. \ref{fig:9}(a) is conceptually similar to Fig. \ref{fig:4}(a), wherein the coverage tessellations are seen to elongate infinitely at a direction perpendicular to the direction of motion of the BS.  Fig. \ref{fig:11}(a) illustrates the ``Shannon equivalent'' 3D convex tessellation observed at very good channel estimation quality (as also in Fig. \ref{fig:5}(a)), obtained through Utility \ref{utility3-proposition}.   These tessellations are observed to lose convexity (Fig. \ref{fig:11}(b)) at poorer channel estimation quality, as also inferred through Fig. \ref{fig:5}(b)-(c). 

The network performance evaluation for the 3D setting with the Doppler-Shannon association and the classical Shannon association, against the proposed association policies, is illustrated in Figs. \ref{fig:12}--\ref{fig:14}. The nature of the plots in the 3D PPP setting, 
against the respective association policies, are observed to be similar to their 2D equivalents.

The performance of Policy \ref{utility1-proposition} (Fig. \ref{fig:12}) is observed to be decreasing with BS velocity. The gain is observed to be maximum of around 14\%, with a PLE $\sigma = 2.5$, at a  moderate BS density of $\lambda=10^{-2}$ (Fig. \ref{fig:12}(a)). The corresponding maximum gain, with a PLE $\sigma = 4$, is observed to be around $7\%$. 
In general, the gains obtained through the Doppler-Shannon framework are observed to decrease with decreasing $\lambda$ (Fig. \ref{fig:12}(b)).  The network performance obtained through the Doppler-Shannon and the Shannon policies are observed to converge with decreasing $\lambda$, as proved in Theorem \ref{theorem-9}.


The Doppler-Shannon network performance of Utility \ref{utility3-proposition} (Fig. \ref{fig:14}) in 3D is observed to provide  higher gains (peaking around 22\% in Fig. \ref{fig:14}(b) with PLE $\sigma = 2.5$) as compared to the 2D setting, where the maximum gains obtained were about of 5-6\%. With a higher PLE $\sigma = 4$, the gains reduce to around 16\%. 
In general, the gains in the Doppler-Shannon based network performance is seen to improve with lower PLE. 
The variation of velocity support of the system with channel estimation quality is also clearly inferred from the plots in Fig. \ref{fig:14}. For instance, poor channel estimation quality supports speeds in the order of $10^1$ (Fig. \ref{fig:14} (a)), while as the channel estimation quality improves the system's ability to support high mobility in order of $10^3$ (use case of LEO satellites, Fig. \ref{fig:14}(c)) is clearly noted.  
 In general, the Shannon and Doppler-Shannon frameworks are seen to converge when decreasing BS densities, as observed in the  2D plots as well.



\section{Concluding Remarks and Future Directions}
In this work, we have designed and evaluated the performance of Doppler shift aware vehicular networks wherein the network UEs experience Doppler shift due to high mobility of the BSs. The Doppler shift results in the fading channel exhibiting high temporal variability, the effects on the physical layer network performance of which have been modeled through the coherence time.
We have first presented the statistical characterization of the Doppler shift in the 2D and then in the 3D PPP setting. 
Next, Doppler shift aware utilities have been proposed to model the Shannon rate achievable in time-selective fading links. Stochastic geometry tools have been utilized to study the outage and coverage probability for each of the utility proposed.  It has been shown that in high mobility environments, the optimal BS association is no longer the closest Euclidean BS. Rather, we demonstrate that the effect of Doppler shift results in non-convex coverage tessellations for the BSs. Our simulation results have shown the need for a Doppler correction in vehicular networks, to complement the Shannon performance. The network performance achieved through the proposed Doppler-Shannon based association are proved, analytically as well as in simulated results, to be superior to the classical Shannon framework. 
The work is expected to be a first step aiming to capture the temporal fading channel effects on the network performance. 

We would like to highlight that the current  work does not study the system as a time-space dynamical system. Rather, the analysis is based on the spatial statistics at a snapshot of time. Future research works are expected to centered around the study the time dynamical systems, extending the model to the spherical case, and the study of the signal-to-interference-plus-noise-ratio (SINR) in Doppler scenarios.  
%
Furthermore, as an independent remark, we would like to highlight that when the angles have a joint distribution with a dependent marginal, then the distribution of the Doppler shift in 3D is not uniform as in Theorem \ref{theorem-3}. As a proof of concept, we have provided an example case of LoS dependent motion in Appendix 
J. 
A more detailed performance evaluation of such LoS dependent motion models are left for future study.

\appendix

\subsection{Proof of Theorem \ref{theorem1}}\label{app-0}
\begin{proof}
The density of $\Gamma$ is
\begin{align}
f_{\Gamma}(\gamma){=} \!\!\int_0^{2\pi}\!\!\!\!\!f_\Theta((\gamma{+}a) \bmod2\pi) f_\alpha(a){\rm d}a
{=}\!\!\int_0^{2\pi}\!\!\! \frac{1}{2\pi} f_\alpha(a){\rm d}a {=} \frac{1}{2\pi}.\nn
\end{align}
 We have the Doppler shift 
    $D  {=} \frac{v\cos (\Theta - \alpha)}{\rho} \stackrel{d}{\sim} \frac{v\cos \Theta}{\rho}.$
Here, $\cos \Theta\!\! \stackrel{d}{\sim}\!\! \mbox{U}[-1, 1]$ resulting in $\frac{v\cos \Theta}{\rho} \stackrel{d}{\sim} \mbox{U}\left[-\frac{v}{\rho}, \frac{v}{\rho} \right]$. Furthermore, $f_\Theta(\theta) {=} \frac{1}{2\pi}$. 
Using the transformation of random variables, $\Theta {=} \arccos{\left(\frac{D  \rho}{v}\right)}$, the PDF of $D$ is
\begin{equation}
\begin{aligned}\label{2D-doppler-shift}
    f_{D}(x) &{=} 2 \times f_\Theta \left( \arccos{\left(\frac{x \rho}{v}\right)} \right) \left| \frac{\dd \theta}{\dd x} \right| 
    \\
        &= \frac{1}{\pi}\left(\left(v/\rho\right)^2  - x^2 \right)^{-1/2}, 
    \end{aligned}
    \end{equation}
for all $x {\in} \left( -v/\rho,v/\rho \right)$. The corresponding CDF of the Doppler shift is
    \begin{align}\label{2d-cdf}
    F_D(x) &{=} \int_{-v/ \rho}^{x} \frac{1}{\pi} \left(\left(v/\rho\right)^2  {-} x^2\right)^{-1/2} \dd x, 
    \\
    &{=} \frac{1}{\pi} \left( \sin^{-1} \left( \frac{x}{v/ \rho} \right) \right) \Bigg|_{x = -\frac{v}{\rho}}^{x}\!\!\!\! \nn
    {=} \frac{1}{\pi}\sin^{-1}\frac{x \rho}{v} {+} \frac{1}{2}.
    \end{align}
\end{proof}


\subsection{Proof of Lemma~\ref{lemma:tc-pdf}}~\label{Tc-pdf}
\begin{proof}
The PDF and CDF of the Doppler in 2D setting has been derived through \eqref{2D-doppler-shift} and \eqref{2d-cdf}, respectively. We derive the distribution of the magnitude of the Doppler shift below as
%
\begin{align}
    F_{|D|}(w) &= F_D(w) - F_D(-w) = \frac{2}{\pi}\sin^{-1}\frac{w \rho}{v},\nn
\end{align}
for $w \in [0, v/{\rho})$. Hence, the density of $|D|$ is
%
$f_{|D|}(w) = \frac{2}{\pi} \left(\left(v/{\rho}\right)^2 - w^2\right)^{-\half},$ 
%
for $w \in [0, v/{\rho})$. 
Below, the density of the length of the coherence time interval $T_c = \frac{\mu}{|D|}$ is derived using the transformation of random variables $w=\mu/y$, as  
\begin{align}\label{coherence-pdf}
    f_{T_c}(y) &{=} f_{|D|}\left(\frac{\mu}{y}\right) \left| \frac{\dd w}{dy} \right| {=} \frac{2\mu}{\pi y^2} \!\!\left(\left(\frac{v}{\rho}\right)^2 \!\!{-} \!\left( \frac{\mu}{y} \right)^2\right)^{-\half}\!\!\!\!\!,\nn\\
    &=  \frac{2}{\pi y} \!\!\left(\left(v y/\rho \mu \right)^2 \!\!{-} 1\right)^{-\half}, \forall \ y \in \left[\mu \rho/v, \infty \right).
\end{align}
for all $y \in \left[\mu \rho/v, \infty \right)$.
\end{proof}


\subsection{Proof of Lemma \ref{lemma-1}}\label{app-lemma1a}
To have the BS locations uniformly distributed on the 3D space,  given the LoS distance $r$, the locations must be uniformly distributed on the sphere of radius $r$. We use the construction of~\cite{pitman2012archimedes} that, given LoS distance $r$, the location of the BS is
\begin{align}
    \underbar{$r$} &=r\left[\sqrt{1-V^2}\cos\Theta, \sqrt{1-V^2}\sin\Theta, V\right]^T,\label{eq:pitman}
\end{align}
where $V\stackrel{d}{\sim}\mbox{U}[-1,1]$ and $\Theta\stackrel{d}{\sim} \mbox{U}[0,2\pi)$. The coordinate representation in (\ref{eq:pitman}) is equivalent to our convention ($\underbar{$r$} {=} r[\cos \Phi  \cos\Theta\; \cos \Phi  \sin \Theta\; \sin \Phi]^T$) under the change of variable $V=\sin \Phi$, where $\Phi$ denotes the elevation angle of the LoS. Using $V\stackrel{d}{\sim}\mbox{U}[-1,1]$, we have
\begin{align}
\P(\Phi\leq \phi)&= \P(V\leq \sin\phi)= \half \left(\sin\phi+1\right),
\end{align}
for $\phi\in [-\pi/2,\pi/2]$. This gives the density of $\Phi$ to be $f_{\Phi}(\phi)=\half \cos \phi$.

\subsection{Proof of Theorem \ref{theorem-3}}\label{app-1}

\begin{proof}
Let us recall that, from~\eqref{3d-gamma}, the radial component of the Doppler angle is given by   
\begin{equation}\label{3d-gamma3}
\begin{aligned}
    \cos \Gamma &= \cos \alpha \cos \Phi \cos( \Theta -\beta) + \sin \alpha \sin \Phi
    \\
   & \stackrel{d}{\sim} \cos \alpha \cos \Phi \cos \Theta + \sin \alpha \sin \Phi.
\end{aligned}
\end{equation}
The simplification follows from the fact that $\Theta - \beta \stackrel{d}{\sim} \Theta$, similarly to Theorem~\ref{theorem1}, given the azimuth $\beta$. 
%
%
Let, $\cos \Gamma {=} Z$ and $ \cos \Theta {=} Y$. Then by re-arranging the terms in \eqref{3d-gamma3}, we get
\begin{equation}
Y{=} \left(Z{-}\sin \a \sin \Phi\right)\slash \left(\cos \a \cos \Phi\right).
\label{eq:scaledZ}
\end{equation}
The distribution of $Y {=} \cos\Theta$, for $\Theta \stackrel{d}{\sim} \mbox{U}[0, 2 \pi)$ is computed using the change of variables $y=\cos\theta$ as 
\begin{equation}\label{pdf-y-cos-thetha}
    \begin{aligned}
        f_Y(y) {=} 2f_{\Theta}(\cos^{-1}{y})\left\vert\frac{\dd \theta}{\dd y}\right\vert {=} 2 {\times} \frac{1}{2 \pi} \left| \frac{-1}{\sqrt{1 - y^2}} \right| 
        {=} \frac{1}{\pi \sqrt{1 {-} y^2}},
    \end{aligned}
\end{equation}
%
%
for all $y \in [-1, 1]$. 
Using the relation (\ref{eq:scaledZ}) and the density of $Y$, we can write the conditional probability density of $Z$ given $\alpha = a, \Phi =  \phi$, as
\begin{align}
f_{Z\vert \alpha, \Phi}(z\vert a, \phi) &{=} \frac{1}{\cos a \cos \phi}\frac{1}{\pi} \left(1{-} \left(\frac{z{-}\sin a \sin \phi}{\cos a \cos \phi}\right)^2\right)^{-\half}\nn\\
&{=}\frac{1}{\pi} \left(\cos^2 a \cos^2 \phi{-}\left(z{-}\sin a \sin \phi\right)^2\right)^{-\half}\!\!\!\!\!\!\!,
\label{eq:GC0}
\end{align}
where both $\cos a, \cos \phi$ are positive on $[-\frac{\pi}{2}, \frac{\pi}{2}]$. The conditional probability density is well defined if $\vert z-\sin a \sin \phi\vert \leq \cos a \cos \phi$. On the other hand, after some rearrangements $f_{Z\vert \alpha, \Phi}(z\vert a, \phi)$ in (\ref{eq:GC0}) can be written as
\begin{align}
f_{Z\vert \alpha, \phi}(z\vert a, \phi)
%
%
%
%
%
%
&{=}\frac{1}{\pi}\!\! \left((1{-}z^2)\cos^2 \phi{-}(\sin a {-} z\sin \phi)^2\right)^{-\half}\nn\\
&= \frac{(1-z^2)^{-\half}}{\pi\cos \phi}  \left(1-\frac{(\sin a - z \sin \phi)^2}{(1-z^2)\cos^2\phi }\right)^{-\half}.
\label{eq:GC1}
\end{align}
The conditional probability density (\ref{eq:GC1}) is well defined if $\vert \sin a - z\sin \phi\vert \leq (1-z^2)^\half \cos \phi$. The probability density of $Z$ turns out to be the following, after averaging the conditional probability density in (\ref{eq:GC1}) with respect to $\alpha \ \text{and} \ \Phi$, using their densities $f_{\Phi}(\phi)=\half \cos \phi$ and $f_{\a}(a)=\half \cos a$
\begin{align}
&f_Z(z)=\int_{-\pi/2}^{\pi/2} \int_{-\pi/2}^{\pi/2}  f_{Z\vert \alpha, \phi}(z\vert a, \phi)\,  f_{\Phi}(\phi)\, f_{\a}(a)\,  \dd  a\, \dd \phi\nn\\
&=\int_{-\pi/2}^{\pi/2} \int_{-\pi/2}^{\pi/2}  f_{Z\vert \alpha, \phi}(z\vert a, \phi)\,  \frac{1}{4}\cos \phi\, \cos a\,  \dd a\, \dd \phi\nn\\
%
%
%
&\stackrel{(\ref{eq:GC1})}{=}\frac{1}{4\pi}(1-z^2)^{-\half}\int_{-\pi/2}^{\pi/2} \int_{-\pi/2}^{\pi/2} \left(1{-}\frac{(\sin a - z \sin \phi)^2}{(1-z^2)\cos^2\phi }\right)^{-\half}\nn\\ 
&\hspace{1.8in} \times  \cos a \, \dd a\, \dd \phi.
\label{eq:GC2}
\end{align}
Using the change of variable $\sin a=t$, we get
\begin{equation}
\resizebox{\hsize}{!}{$f_Z(z){=} \frac{(1{-}z^2)^{-\half}}{4\pi} \int_{-\pi/2}^{\pi/2} \int_{-1}^{1}  \left(1{-}\frac{(t{-}z \sin \phi)^2}{(1{-}z^2)\cos^2\phi }\right)^{-\half} \dd t\, \dd \phi.$}
\label{eq:GC3}
\end{equation}
The integrand in the last integral is well defined only on the region 
\begin{align}
R_{z,\phi}&:=\left\{t\in \R:\vert t-z \sin \phi\vert\leq (1-z^2)^\half\cos\phi \right\},
\end{align}
since $(1{-}z^2)^\half\cos\phi{\geq} 0$, for any $\phi{\in} [-\frac{\pi}{2},\frac{\pi}{2}]$ and $z{\in} [-1,1]$. It can be shown that $R_{z,\phi}{\subset} [-1,1]$, since $z\sin\phi{+}(1{-}z^2)^\half\cos\phi {\leq} 1$ and $z\sin\phi{-}(1{-}z^2)^\half\cos\phi {\geq} -1$. For any $z\in [-1,1]$, take $z{=}\sin\psi$, for some $\psi\in [-\pi/2,\pi/2]$. Then we have that
\begin{align}
z\sin\phi{+}(1{-}z^2)^\half\cos\phi &{=} \sin\psi\sin\phi{+}\cos\psi\cos\phi\nn\\
&{=} \cos(\psi{-}\phi)\leq 1, \nn
\end{align}
\begin{align}
z\sin\phi{-}(1{-}z^2)^\half\cos\phi&{=} \cos(\psi{+}\phi)\geq -1, \notag
\\
\hspace{-1in} \text{and,} \  z\sin\phi{+}(1{-}z^2)^\half\cos\phi&{-}(z\sin\phi{-}(1{-}z^2)^\half\cos\phi) \notag\\
&\hspace{-0.5in} {=}2(1{-}z^2)^\half\cos\phi\geq 0,
\end{align}
for all $\phi \in[-\frac{\pi}{2}, \frac{\pi}{2}]$. Hence the region $R_{z,\phi}$ is an interval on $[-1,1]$ and
\[
R_{z,\phi}=\left[z\sin\phi{+}(1{-}z^2)^\half\cos\phi, z\sin\phi{-}(1{-}z^2)^\half\cos\phi\right].
\]
Then simplifying the inner integral in (\ref{eq:GC3}) while performing the integration with respect to $t$ in a smaller region $R_{z,\phi}$, we obtain 
\begin{align}
\lefteqn{\int_{R_{z,\phi}}  \left(1-\frac{(t-z \sin \phi)^2}{(1-z^2)\cos^2\phi }\right)^{-\half} \, \dd t}\nn\\
&= \arcsin\left(\frac{t-z \sin \phi}{(1-z^2)^\half\cos\phi}\right) \Bigg\vert_{z\sin\phi-(1-z^2)^\half\cos\phi }^{z\sin\phi+(1-z^2)^\half\cos\phi} \hspace{-0.4in}\times (1-z^2)^\half\cos \phi\nn\\
%
%
&=\pi (1-z^2)^\half\cos \phi.
\label{pi-phi1}
\end{align}
To compute the density of $Z$, we substitute from~\eqref{pi-phi1} to (\ref{eq:GC3}) to get
\begin{align}
f_Z(z) &=\frac{1}{4\pi}\frac{1}{(1-z^2)^\half} \int_{-\pi/2}^{\pi/2} \pi (1-z^2)^\half\cos \phi\, \dd \phi\nn\\
%
%
%
&= \frac{1}{4} \int_{-\pi/2}^{\pi/2} \cos \phi \, \dd \phi= \half.
\label{eq:GC4}
\end{align}
This  shows that the random variable $Z=\cos\Gamma$, involved in the Doppler shift, to be distributed as $\mbox{U}[-1,1]$. 
Further, the Doppler shift $D = \frac{v}{\rho} \cos \Gamma$ is distributed as $\mbox{U}[-\frac{v}{\rho}, \frac{v}{\rho}]$ with PDF $f_D(x) = \frac{\rho}{2v}$, for $x\in [-\frac{v}{\rho}, \frac{v}{\rho}]$.
\end{proof}


\subsection{Proof of Lemma \ref{coh-time-3d}}\label{proof-coh-time-3d}
From Theorem \ref{theorem-3} we have, $D \stackrel{d}{\sim}\mbox{U}[\frac{v}{\rho}, \frac{v}{\rho}], i.e, f_D(x) = \frac{\rho}{2v}$. Hence, $f_{|D|}(w) = \frac{\rho}{v}.$ Further, the coherence time is defined as, $T_c = \frac{\mu}{|D|}$. Since, $|D| \in [0, \frac{v}{\rho}]$, hence $T_c \in [\frac{\mu v}{\rho}, \infty).$ Applying the change of variables $D=\mu/T_c$ 
\begin{equation}
    f_{T_c}(y) = f_{|D|}\left(\frac{\mu}{y}\right) \left\lvert -\frac{\mu}{y^2} \right\rvert = \frac{\mu \rho}{v y^2}, \;\;y \in \left[\frac{\mu \rho}{v}, \infty \right).
\end{equation}


\subsection{Proof of Expectation of the fading term}\label{proof-expectation-fading}
\begin{proof}
    The density of a general Nakagami fading environment is defined, with shape parameter $m$,  scale parameter $\Omega$, is defined as  
    $f_K(k) = \frac{1}{\Gamma(m) \Omega^m} k^{m-1} e^{-k/ \Omega}, k > 0$. 

For the performance evaluation in the 2D setting:  
\begin{align}
\mathbb{E}\left[K^{2/ \sigma} \right] &=  \frac{1}{\Gamma(m) \Omega^m} \int_{0}^{\infty} k^{2/ \sigma} k^{m-1} e^{-k/ \Omega} \dd k \notag
    \\
    &= \frac{\Omega^{2/ \sigma}}{\Gamma(m)} \Gamma\left( m + \frac{2}{\sigma} \right).
\end{align}
%
For the Rayleigh fading scenario: $K \stackrel{d}{\sim} \mbox{Exp}(1), f_K(k) {=} e^{-k}, $ for $ k {\geq} 0 $, we have  
    $\mathbb{E}\left[ K^{\frac{2}{\sigma}} \right] {=} \int_{0}^{\infty} k^{\frac{2}{\sigma}} e^{-k} \dd k 
    = \Gamma\left( 1 + \frac{2}{\sigma} \right)$.

For the performance evaluation in the 3D setting, with a general Nakagami distribution on the fading, similar to the 2D setting, we have
\begin{align}
\mathbb{E}\left[ K^{\frac{3}{\sigma}} \right] &=  \frac{1}{\Gamma(m) \Omega^m} \int_{0}^{\infty} k^{\frac{3}{\sigma}} k^{m-1} e^{-k/ \Omega} \dd k \notag\\
&= \frac{\Omega^{\frac{3}{\sigma}}}{\Gamma(m)} \Gamma\left( m + \frac{3}{\sigma} \right) .
\end{align}
For Rayleigh fading: 
$K \stackrel{d}{\sim} \mbox{Exp}(1), f_K(k) {=} e^{-k}$, for $k {\geq} 0,$ \text{we get,} 
    $\mathbb{E}\left[ K^{\frac{3}{\sigma}} \right] {=} \int_{0}^{\infty} k^{\frac{3}{\sigma}} e^{-k} \dd k 
    {=} \Gamma\left( 1 + \frac{3}{\sigma} \right)$.

\end{proof}

\subsection{Proof of Lemma~\ref{RI3D}}\label{proof-prop-2}

\begin{proof} 

Analogous to the 2D setting, 
the Doppler angle is rotational invariant, since the inner product of two vectors preserved by any rotation.
%
%
To elaborate more, we rotate the $x{-}y$ plane such that \underbar{$v$} lies on the $x$-axis. To achieve this, We first rotate the vectors along the $z$-axis by the angle $\beta$, i.e., $\underbar{$R$}_1 {=} R_z(-\beta)$. Then rotate the $x{-}y$ plane along $y$-axis by an angle $\alpha$, i.e., $\underbar{$R$}_2 {=} R_y(\alpha)$. These rotation operations in 3D ensure that  \underbar{$v$} lies on the $x{-}y$ plane. 
The 3D rotation matrices $\underbar{$R$}_1 \ \text{and} \ \underbar{$R$}_2$ are:
\begin{equation} \underbar{$R$}_1 {=} 
    \begin{bmatrix}
        \cos \beta\; & \sin \beta\; & 0 \\
        -\sin \beta\; & \cos \beta\; & 0 \\
        0\; & 0\; & 1
    \end{bmatrix},
     \underbar{$R$}_2 {=} 
    \begin{bmatrix}
        \cos \alpha\; & 0\; & \sin \alpha \\
        0\; & 1\; & 0 \\
        -\sin \alpha\; & 0\; & \cos \alpha 
    \end{bmatrix}.
\end{equation}
%
%
The 3D LoS and DoM vectors post-rotation are given by
\begin{equation} 
\underbar{$r$}^\prime {=} 
%
\underbar{$R$}_2\; \underbar{$R$}_1\;  \underbar{$r$} {=} r 
    \begin{bmatrix}
        \cos \alpha \cos \phi \cos (\theta - \beta) + \sin \alpha \sin \phi \\
        \cos \phi \sin (\theta - \beta) \\
        -\sin \alpha \cos \phi \cos (\theta - \beta) + \cos \alpha \sin \phi
    \end{bmatrix},
\end{equation} 
\text{and}
\begin{equation}\label{post-rotation-v}  
 \ \underbar{$v$}^\prime =\underbar{$R$}_2 \cdot \underbar{$R$}_1 \cdot \underbar{$v$} = v
    \begin{bmatrix}
       1 &
        0 &
        0
    \end{bmatrix}^T.
\end{equation}
The cosine of the Doppler angle $\Gamma$ between the new LoS $\underbar{$r$}^\prime$ and DoM vectors $\underbar{$v$}^\prime$  is 
\begin{equation}\label{3d-gamma-re}
\begin{aligned}
    \cos \Gamma  {=}  \frac{\langle \underbar{$r$}^\prime, \underbar{$v$}^\prime\rangle}{rv} {=} \cos \alpha \cos \Phi \cos (\Theta {-} \beta) {+} \sin \alpha \sin \Phi.
\end{aligned}
\end{equation}
The post-rotation Doppler angle in \eqref{3d-gamma-re}  remains the same as the one pre-rotation, derived in \eqref{3d-gamma}. Hence, the 3D rotation of the plane or axis does not affect the distribution of the Doppler angle and thereby the Doppler shift. This proves the isotropic behavior of the 3D statistical analysis. 
\end{proof}


\subsection{Proof of Theorem \ref{theorem-5}}\label{app-3}

\begin{proof}
We have
$\cos \Gamma {=} \cos \Phi \cos \Theta$,
with $\Phi \stackrel{d}{\sim} \frac{1}{2} \cos \phi$, $\Theta \stackrel{d}{\sim} \mbox{U}[0, 2\pi)$. 
Let us assume the change of variable 
    $Z  {=} \cos \Gamma, 
    X {=} \cos \Phi, 
    Y {=} \cos \Theta$,
which implies $Z {=} XY$. Below, we derive the distribution of $Z = \cos \Gamma$. The PDF of $X = \cos \Phi, \Phi \stackrel{d}{\sim} \frac{1}{2}\cos \phi$, for $\phi\in [-\frac{\pi}{2}, \frac{\pi}{2}]$, is derived through transformation of random variables as 
    \begin{equation}
    \begin{aligned}
        f_X(x) {=} 2 f_{\Phi}(\cos^{-1}x) \left| \frac{\dd \phi}{\dd x} \right| {=}  \frac{x}{\sqrt{1 {-} x^2}},
    \end{aligned}
\end{equation}
%
for all $x \in [0,1]$. Next, the PDF of $Y = \cos \Theta$,  where $\Theta \stackrel{d}{\sim} \mbox{U}[0, 2\pi)$, as derived in \eqref{pdf-y-cos-thetha} is
%
    $f_Y(y) = \frac{1}{\pi \sqrt{1 - y^2}},$ 
%
for $y\in [-1, 1]$.
We derive the PDF of $Z {=} \cos \Gamma {=} X Y$ as follows.  
Based on law of total probability, 
\begin{align}
f_Z(z) &= \int f_{Z|X}(z|x)f_X(x) \dd x = \int f_Y(z/x) \left| \frac{\dd y}{\dd z} \right| f_X(x) \dd x\nn\\
&= \int f_Y(z/x) \frac{1}{x}  f_X(x) \dd x.
\end{align}
To determine the limits of integration, we know that $x {\in} [0,1], y {\in} [-1,1]$. Given $y {=} \frac{z}{x}$ require $\frac{z}{x} \in [-1,1] \ \text{or} \ |z| \leq x$ for $f_Y(z/x)$ to be well defined. 
%
Using the domain of the integral, we have
\begin{equation}\label{integral-simulation-result} \begin{aligned}
    f_Z(z) &= \int_{|z|}^{1} f_Y\left( \frac{z}{x} \right) \frac{1}{x} f_X(x) \dd x \\ 
    &= \frac{1}{\pi} \int_{|z|}^{1}  \frac{x}{ \sqrt{x^2 - z^2}} \frac{1}{\sqrt{1 - x^2}} \dd x.
    \end{aligned}
\end{equation}
To solve the integral, for $z \geq 0$, we make the change of variable $x^2 {-} z^2 {=} u$ in~\eqref{integral-simulation-result} to obtain  
\begin{equation}
    f_Z(z) = \frac{1}{\pi} \int_{0}^{1-z^2} \frac{\dd u }{2 \sqrt{u}\sqrt{1-u - z^2}}.
    \label{eq:f1}
\end{equation}
Further substituting, $1-z^2 = b$, the integral modifies to
\begin{align}
f_Z(z) &{=} \frac{1}{2\pi} \int_{0}^{b} \frac{\dd u}{\sqrt{u}\sqrt{b-u}} {=} \frac{1}{\pi} \arcsin(\sqrt{u/b}) \Bigg\vert_{0}^{b} =\frac{1}{2}.
    \label{eq:f2}
\end{align}
%
%
Hence we have $\cos \Gamma\stackrel{d}{\sim} \mbox{U}[-1, 1]$.  Finally, we derive the PDF of the Doppler shift in this case. The Doppler shift is given as 
    $D = \frac{v}{\rho} \cos \Gamma = \frac{v}{\rho} \underbrace{\cos \Phi \cos \Theta}_{Z}$.
%
The PDF of the Doppler shift, $D$, is computed as
\begin{equation}
\begin{aligned}
    f_{D}(x) &{=}  f_Z\left(\frac{x \rho}{v}\right) \left| \frac{\dd z}{\dd x} \right| {=} \frac{\rho }{2 v}, 
\end{aligned}
\end{equation}
for all $x \in \left[- \frac{v}{\rho}, \frac{v}{\rho} \right]$. This completes the proof. 
%
%

%
%

\end{proof}






\subsection{Proof of closest point distribution in 3D}\label{closest-point-3d}

\begin{proof}
The volume element in a 3D setting is, $dV {=} r^2 \dd r \cos \phi  \dd \phi \dd \theta$. The expected number of points within radius $R$ is  calculated as
\begin{align}
    \Lambda(B_R(o)) {=} \lambda \int_{r=0}^{R} \!\!\! r^2 \dd r \int_{\phi = -\pi/2}^{\pi/2} \!\!\! \!\!\! \!\!\!   \cos \phi  \dd \phi \int_{\theta = 0}^{2 \pi} \!\! \dd \theta
    {=} \frac{4 \pi \lambda R^3}{3}.
\end{align}
%
Let $R_o$ be the random variable of the distance to the nearest BS from the UE. The probability that no point lies within distance $r$ from the UE is computed using the void probability of the underlying PPP $\Pcal$ of BSs on the set $B_r(o)$, i.e., 
\[
\mathbb{P} (R_o {>} r) = \P(\Pcal(B_r(o))=0) = e^{ - 4 \pi \lambda R^3/{3}}.
\]
    %
Hence the distribution of the distance of the closest point in a 3D setting is the corresponding CDF given by, $\mathbb{P}(R_o {\leq} r) = 1 {-} e^{ - 4 \pi \lambda r^3/{3}}$ and the probability density is $f_{R_o}(r) {=} \frac{\dd}{\dd r} \mathbb{P}(R_o {\leq} r)   = 4 \pi \lambda r^2 e^{-4 \pi \lambda r^3/ 3}$.
\end{proof}




\subsection{Doppler shift distribution in LoS  dependent motion}\label{dependent-motion-3d} 

It is notable that the result in Theorem \ref{theorem-3} assumes that all the four angles involved as independent. As a proof of concept, we showcase an example of  LoS dependent motion. 
    
Let us consider a motion model where the DoM depends LoS, for e.g., the azimuths of LoS and DoM are parallel or equal, i.e., $\Theta {=} \beta$. This is a special case, where the BS motion is constrained along the plane formed by the LoS and $z$-axis. Denoting the  Doppler angle as  $\Gamma_2$, we have
\begin{equation}\label{3d-gamma-case1}
\begin{aligned}
    \cos \Gamma_2 &{=} \cos \alpha \cos \Phi {+} \sin \alpha \sin \Phi 
    {=} \cos(\Phi {-} \alpha),
    \end{aligned}
\end{equation}
where the angles $\Phi$ and $\alpha$ are non-uniform as in (\ref{phi-alpga}).

\begin{theorem}\label{theorem-4}
The PDF of the Doppler angle $\Gamma_2$ and the corresponding Doppler shift $D$ are given as 
\begin{align}
        f_{\Gamma_2}(\gamma) &=\frac{1}{8} \left[(\pi-|\gamma|)\cos\gamma + \sin|\gamma|\right],
\end{align}
\begin{align}
      f_D(x) {=} \frac{1}{4} \left[\left(\pi{-}\cos^{-1}\left(\frac{x \rho}{v} \right)\right) \!x \!\left(\!1{-}\!\left( \frac{x \rho}{v} \right)^2\right)^{-\half}\!\!\!\!\!{+} \frac{v}{\rho}\right], 
      \label{Dep-DS-pdf}
\end{align}
for $\gamma\in [-\pi,\pi]$ and $x \in \left[-v/{\rho},v/{\rho} \right]$.
\end{theorem}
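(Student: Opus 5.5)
The plan is to get the density of $\Gamma_2$ by convolution and then push it through the cosine. By \eqref{3d-gamma-case1}, $\cos\Gamma_2 = \cos(\Phi-\alpha)$. So I would take $\Gamma_2 := \Phi-\alpha \in[-\pi,\pi]$, the difference of two independent random variables, each with the density $\frac12\cos(\cdot)$ on $[-\pi/2,\pi/2]$ given by Lemma~\ref{lemma-1}.

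\textbf{Step 1: density of $\Gamma_2$.} I would compute the convolution
\[
f_{\Gamma_2}(\gamma)=\int \tfrac14\cos\phi\,\cos(\phi-\gamma)\,\one\{\phi\in[-\tfrac{\pi}{2},\tfrac{\pi}{2}],\ \phi-\gamma\in[-\tfrac{\pi}{2},\tfrac{\pi}{2}]\}\,\dd\phi .
\]
The law of $\Phi-\alpha$ is symmetric, so it suffices to treat $\gamma\ge 0$. There the admissible range is $\phi\in[\gamma-\pi/2,\pi/2]$, an interval of length $\pi-\gamma$. Using the product-to-sum identity $\cos\phi\cos(\phi-\gamma)=\frac12[\cos\gamma+\cos(2\phi-\gamma)]$, the constant term contributes $\frac12(\pi-\gamma)\cos\gamma$. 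The oscillating term contributes $\frac14[\sin(\pi-\gamma)-\sin(\gamma-\pi)]=\frac12\sin\gamma$. Multiplying by $\frac14$ gives $\frac18[(\pi-\gamma)\cos\gamma+\sin\gamma]$, and symmetry replaces $\gamma$ by $|\gamma|$. As a sanity check I would verify that the result integrates to $1$ over $[-\pi,\pi]$.

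\textbf{Step 2: density of $Z=\cos\Gamma_2$ and of $D$.} On $[-\pi,\pi]$ the cosine is two-to-one, with preimages $\pm\arccos z$. Since $f_{\Gamma_2}$ is even, the change of variables gives
\[
f_Z(z)=\frac{2f_{\Gamma_2}(\arccos z)}{\sqrt{1-z^2}} .
\]
Substituting $\cos(\arccos z)=z$ and $\sin(\arccos z)=\sqrt{1-z^2}$ yields
\[
f_Z(z)=\tfrac14\big[(\pi-\arccos z)\,z\,(1-z^2)^{-1/2}+1\big].
\]
Finally $D=\frac{v}{\rho}Z$, so $f_D(x)=\frac{\rho}{v}f_Z(x\rho/v)$, which gives the form \eqref{Dep-DS-pdf}.

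\textbf{Expected obstacle.} The one step needing care is bookkeeping rather than analysis: the integration limits in the convolution, the factor $2$ from the two preimages of the cosine, and the Jacobian $\rho/v$. The $(1-z^2)^{-1/2}$ singularity is integrable, so it causes no real difficulty. On the Jacobian, my computation gives $\frac{\rho}{v}f_Z(x\rho/v)$, that is, a prefactor $\frac{\rho}{4v}$ in front of $(\pi-\arccos(x\rho/v))\frac{x\rho}{v}(1-(x\rho/v)^2)^{-1/2}+1$. I would check this normalization against the stated expression, which appears to carry its scale factors differently: it has $x$ and $v/\rho$ where the computation gives $x\rho/v$ and $1$. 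I would confirm the correct version by checking that $\int f_D=1$.
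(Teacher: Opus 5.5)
Your proposal follows the paper's proof step for step: the same convolution over $[\gamma-\pi/2,\pi/2]$ with the product-to-sum identity, then the two-preimage change of variables giving $f_Z(z)=\tfrac14\bigl[(\pi-\arccos z)\,z\,(1-z^2)^{-1/2}+1\bigr]$. The paper computes the $\gamma<0$ case explicitly where you use symmetry, and it writes the convolution for $\Phi+\alpha$, which has the same law as $\Phi-\alpha$.

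Your suspicion about the last step is justified. The correct Jacobian is $\rho/v$, but the paper's final step multiplies $f_Z(x\rho/v)$ by $v/\rho$. As a result, the displayed $f_D$ is exactly $(v/\rho)^2$ times the true density and integrates to $(v/\rho)^2$ rather than $1$. Your expression $\frac{\rho}{4v}\bigl[(\pi-\arccos(x\rho/v))\frac{x\rho}{v}(1-(x\rho/v)^2)^{-1/2}+1\bigr]$ is the correct one, and it is the statement that needs amending, not your computation.
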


\begin{proof}
The  cosine of the Doppler angle in this case comes out to be  $\cos \Gamma {=} \cos(\Phi {-} \alpha)$ (from \eqref{3d-gamma-case1}).
%
%
Let, $\Gamma {=} \vert\Phi {-} \alpha\vert, \Gamma {\in} [-\pi, \pi]$. Then the PDF of $\Gamma$ is
\begin{equation}
\begin{aligned}
   f_\Gamma(\gamma) &=  \int_{L(\gamma)} f_{\Phi} (\phi) f_{\alpha}(\gamma -\phi) \dd \phi.
\end{aligned}
\label{eq:Lgamma}
\end{equation}
The PDF is determined by integrating over a suitable range of values of $\phi$ shown below, 
\[
L(\gamma):=\left[-\frac{\pi}{2}\vee \left(\gamma-\frac{\pi}{2}\right), \frac{\pi}{2}\wedge \left(\gamma+\frac{\pi}{2}\right)\right] \equiv [\ell_\gamma, r_\gamma],
\] 
for given $\gamma$. For $\gamma<0$, we have $[\ell_\gamma, r_\gamma]= \left[-\frac{\pi}{2}, \gamma+\frac{\pi}{2}\right]$ and for $\gamma\geq 0$, we have $[\ell_\gamma, r_\gamma]= \left[\gamma-\frac{\pi}{2}, \frac{\pi}{2}\right]$. In the first case, with $\gamma<0$, the integral in (\ref{eq:Lgamma}) is equal to 
\begin{align}
\frac{1}{4} \int_{-\pi/2}^{\gamma+\pi/2}\!\!\!\!\!\!\! \cos \phi \cos (\gamma {-} \phi)  \,\dd \phi 
&{=} \frac{1}{8} \int_{-\pi/2}^{\gamma+\pi/2} \!\!\!\!\! [\cos\gamma {+} \cos(2\phi {-} \gamma)]\, \dd \phi \nn \\
%
&{=} \frac{1}{8} \left[(\gamma{+}\pi)\cos\gamma {-} \sin\gamma\right].
\end{align}
In the other case, with $\gamma\geq 0$, the integral in (\ref{eq:Lgamma}) can similarly be computed as
\begin{align}
\frac{1}{4} \int_{\gamma-\pi/2}^{\pi/2}\hspace{-0.1in} \cos \phi \cos (\gamma {-} \phi)  \,\dd \phi
&= \frac{1}{8} \left[(\pi{-}\gamma)\cos\gamma + \sin\gamma\right].
\end{align}
Thus the density of $\Gamma$ turns out to be
\begin{align}
 f_\Gamma(\gamma) &= \frac{1}{8} \left[(\pi-|\gamma|)\cos\gamma + \sin|\gamma|\right], 
\end{align}
for $\gamma\in [-\pi,\pi]$.
The probability density of $\cos\Gamma:=Z$ is computed using the substitution $\gamma=\arccos z$ and $\sin|\gamma|= \sqrt{1-z^2}$ to obtain
\begin{align}
f_{Z}(z)&= 2\times \frac{1}{8} \left[(\pi-|\arccos z|) z+ \sqrt{1-z^2}\right] \left\vert\frac{-1}{\sqrt{1-z^2}}\right\vert\nn\\
&= \frac{1}{4} \left[(\pi-\arccos z)\frac{z}{\sqrt{1-z^2}}+1\right],
\end{align}
for $z\in [-1,1]$.
Then, the probability density of the Doppler shift $D   = \frac{v}{\rho}\cos\Gamma$ is $f_{D}(x)$ is  
\begin{equation}\label{3d-case-1}
\begin{aligned}
    \frac{v}{4\rho} \left[\left(\pi{-}\arccos\left(\frac{x \rho}{v} \right)\right) \frac{x \rho}{v} \left(1{-}\left( \frac{x \rho}{v} \right)^2\right)^{-\half}{+}1\right],
    \end{aligned}
\end{equation}
for $x \in \left[ -\frac{v}{\rho}, \frac{v}{\rho} \right]$. The expression in \eqref{3d-case-1} shows the density of the Doppler shift as a function of non-uniform angles.
\end{proof}


\bibliographystyle{IEEEtran}
\bibliography{IEEEabrv,ref}

\end{document}